\newif\ifshortVersion
\newcommand{\appendixRef}[1]{%
  \hspace{1sp}%
  \ifshortVersion%
    \cite[#1]{full-version}%
  \else%
    {#1}%
  \fi%
}
\setlist[itemize]{leftmargin=1.25em}
\setlist[enumerate]{leftmargin=1.75em}
\theoremstyle{plain}
\newtheorem{result}{Result}
\newtheorem*{theorem*}{Theorem}
\newtheorem*{corollary*}{Corollary}
\newtheorem*{lemma*}{Lemma}
\newtheorem*{proposition*}{Proposition}
\newtheorem*{conjecture*}{Conjecture}
\newtheorem*{result*}{Result}
\theoremstyle{definition}
\renewcommand*\env@matrix[1][*\c@MaxMatrixCols c]{%
    \hskip -\arraycolsep
    \let\@ifnextchar\new@ifnextchar
    \array{#1}}
\DeclarePairedDelimiter\abs{\lvert}{\rvert}
\DeclarePairedDelimiter\len{\lvert}{\rvert}
\DeclarePairedDelimiter\norm{\lVert}{\rVert}
\let\oldabs\abs
\def\abs{\@ifstar{\oldabs}{\oldabs*}}
\let\oldlen\len
\def\len{\@ifstar{\oldlen}{\oldlen*}}
\let\oldnorm\norm
\def\norm{\@ifstar{\oldnorm}{\oldnorm*}}
\newcommand{\cond}[0]{\ensuremath{\,\middle|\,}}
\DeclareMathOperator*{\argmax}{arg\,max}
\DeclareMathOperator*{\argmin}{arg\,min}
\newcommand{\eqA}[0]{\ensuremath{\overset{\text{{(a)}}}{=}}}
\newcommand{\leqA}[0]{\ensuremath{\overset{\text{{(a)}}}{\leq}}}
\newcommand{\leqB}[0]{\ensuremath{\overset{\text{{(b)}}}{\leq}}}
\newcommand{\leqC}[0]{\ensuremath{\overset{\text{{(c)}}}{\leq}}}
\newcommand{\Ione}{{\mathds{1}}}
\DeclareRobustCommand{\svdots}{%
    \vcenter{%
        \offinterlineskip%
        \hbox{.}%
        \vskip0.25\normalbaselineskip%
        \hbox{.}%
        \vskip0.25\normalbaselineskip%
        \hbox{.}%
    }%
}
\newcommand*\circledScriptBlackSlim[1]{\tikz[baseline=(char.base)]{
                \node[shape=circle,draw=none,fill=black,text=white,inner sep=1pt] (char) {\scriptsize #1};}}
\newcommand{\cf}[0]{cf.\xspace}
\newcommand{\ie}[0]{\emph{i.e.}\xspace}
\newcommand{\eg}[0]{\emph{e.g.}\xspace}
\definecolor{jnSUCardinalRed}{HTML}{8c1515}
\definecolor{jnSUCardinalRedLight}{HTML}{B83A4B}
\definecolor{jnSUCardinalRedDark}{HTML}{820000}
\definecolor{jnSUWhite}{HTML}{ffffff}
\definecolor{jnSUCoolGrey}{HTML}{53565A}
\definecolor{jnSUBlack}{HTML}{2e2d29}
\definecolor{jnSUBlack100}{HTML}{2e2d29}
\definecolor{jnSUBlack90}{HTML}{43423E}
\definecolor{jnSUBlack80}{HTML}{585754}
\definecolor{jnSUBlack70}{HTML}{6D6C69}
\definecolor{jnSUBlack60}{HTML}{767674}
\definecolor{jnSUBlack50}{HTML}{979694}
\definecolor{jnSUBlack40}{HTML}{ABABA9}
\definecolor{jnSUBlack30}{HTML}{C0C0BF}
\definecolor{jnSUBlack20}{HTML}{D5D5D4}
\definecolor{jnSUBlack10}{HTML}{EAEAEA}
\definecolor{jnSUPaloAlto}{HTML}{175E54}
\definecolor{jnSUPaloAltoLight}{HTML}{2D716F}
\definecolor{jnSUPaloAltoDark}{HTML}{014240}
\definecolor{jnSUPaloVerde}{HTML}{279989}
\definecolor{jnSUPaloVerdeLight}{HTML}{59B3A9}
\definecolor{jnSUPaloVerdeDark}{HTML}{017E7C}
\definecolor{jnSUOlive}{HTML}{8F993E}
\definecolor{jnSUOliveLight}{HTML}{A6B168}
\definecolor{jnSUOliveDark}{HTML}{7A863B}
\definecolor{jnSUBay}{HTML}{6FA287}
\definecolor{jnSUBayLight}{HTML}{8AB8A7}
\definecolor{jnSUBayDark}{HTML}{417865}
\definecolor{jnSUSky}{HTML}{4298B5}
\definecolor{jnSUSkyLight}{HTML}{67AFD2}
\definecolor{jnSUSkyDark}{HTML}{016895}
\definecolor{jnSULagunita}{HTML}{007C92}
\definecolor{jnSULagunitaLight}{HTML}{009AB4}
\definecolor{jnSULagunitaDark}{HTML}{006B81}
\definecolor{jnSUPoppy}{HTML}{E98300}
\definecolor{jnSUPoppyLight}{HTML}{F9A44A}
\definecolor{jnSUPoppyDark}{HTML}{D1660F}
\definecolor{jnSUSpirited}{HTML}{E04F39}
\definecolor{jnSUSpiritedLight}{HTML}{F4795B}
\definecolor{jnSUSpiritedDark}{HTML}{C74632}
\definecolor{jnSUIlluminating}{HTML}{FEDD5C}
\definecolor{jnSUIlluminatingLight}{HTML}{FFE781}
\definecolor{jnSUIlluminatingDark}{HTML}{FEC51D}
\definecolor{jnSUPlum}{HTML}{620059}
\definecolor{jnSUPlumLight}{HTML}{734675}
\definecolor{jnSUPlumDark}{HTML}{350D36}
\definecolor{jnSUBrick}{HTML}{651C32}
\definecolor{jnSUBrickLight}{HTML}{7F2D48}
\definecolor{jnSUBrickDark}{HTML}{42081B}
\definecolor{jnSUArchway}{HTML}{5D4B3C}
\definecolor{jnSUArchwayLight}{HTML}{766253}
\definecolor{jnSUArchwayDark}{HTML}{2F2424}
\definecolor{jnSUStone}{HTML}{7F7776}
\definecolor{jnSUStoneLight}{HTML}{D4D1D1}
\definecolor{jnSUStoneDark}{HTML}{544948}
\definecolor{jnSUFog}{HTML}{DAD7CB}
\definecolor{jnSUFogLight}{HTML}{F4F4F4}
\definecolor{jnSUFogDark}{HTML}{B6B1A9}
\definecolor{jnSUDigitalRed}{HTML}{B1040E}
\definecolor{jnSUDigitalRedLight}{HTML}{E50808}
\definecolor{jnSUDigitalRedDark}{HTML}{820000}
\definecolor{jnSUDigitalBlue}{HTML}{006CB8}
\definecolor{jnSUDigitalBlueLight}{HTML}{6FC3FF}
\definecolor{jnSUDigitalBlueDark}{HTML}{00548f}
\definecolor{jnSUDigitalGreen}{HTML}{008566}
\definecolor{jnSUDigitalGreenLight}{HTML}{1AECBA}
\definecolor{jnSUDigitalGreenDark}{HTML}{006F54}
\definecolor{myParula01Blue}{RGB}{0,114,189}
\definecolor{myParula02Orange}{RGB}{217,83,25}
\definecolor{myParula03Yellow}{RGB}{237,177,32}
\definecolor{myParula04Purple}{RGB}{126,47,142}
\definecolor{myParula05Green}{RGB}{119,172,48}
\definecolor{myParula06LightBlue}{RGB}{77,190,238}
\definecolor{myParula07Red}{RGB}{162,20,47}
\pgfplotsset{compat=1.17}
\pgfplotsset{
    discard if not/.style 2 args={
            x filter/.code={
                    \edef\tempa{\thisrow{#1}}
                    \edef\tempb{#2}
                    \ifx\tempa\tempb
                    \else
                        
                    \fi
                }
        },
}
\tikzset{myparula11/.style={color=myParula01Blue,solid,mark=+,mark options={solid}}}
\tikzset{myparula12/.style={color=myParula01Blue,densely dashed,mark=x,mark options={solid}}}
\tikzset{myparula13/.style={color=myParula01Blue,densely dotted,mark=o,mark options={solid}}}
\tikzset{myparula14/.style={color=myParula01Blue,dashdotted,mark=triangle,mark options={solid}}}
\tikzset{myparula15/.style={color=myParula01Blue,dashdotdotted,mark=square,mark options={solid}}}
\tikzset{myparula21/.style={color=myParula02Orange,solid,mark=+,mark options={solid}}}
\tikzset{myparula22/.style={color=myParula02Orange,densely dashed,mark=x,mark options={solid}}}
\tikzset{myparula23/.style={color=myParula02Orange,densely dotted,mark=o,mark options={solid}}}
\tikzset{myparula24/.style={color=myParula02Orange,dashdotted,mark=triangle,mark options={solid}}}
\tikzset{myparula25/.style={color=myParula02Orange,dashdotdotted,mark=square,mark options={solid}}}
\tikzset{myparula31/.style={color=myParula03Yellow,solid,mark=+,mark options={solid}}}
\tikzset{myparula32/.style={color=myParula03Yellow,densely dashed,mark=x,mark options={solid}}}
\tikzset{myparula33/.style={color=myParula03Yellow,densely dotted,mark=o,mark options={solid}}}
\tikzset{myparula34/.style={color=myParula03Yellow,dashdotted,mark=triangle,mark options={solid}}}
\tikzset{myparula35/.style={color=myParula03Yellow,dashdotdotted,mark=square,mark options={solid}}}
\tikzset{myparula41/.style={color=myParula04Purple,solid,mark=+,mark options={solid}}}
\tikzset{myparula42/.style={color=myParula04Purple,densely dashed,mark=x,mark options={solid}}}
\tikzset{myparula43/.style={color=myParula04Purple,densely dotted,mark=o,mark options={solid}}}
\tikzset{myparula44/.style={color=myParula04Purple,dashdotted,mark=triangle,mark options={solid}}}
\tikzset{myparula45/.style={color=myParula04Purple,dashdotdotted,mark=square,mark options={solid}}}
\tikzset{myparula51/.style={color=myParula05Green,solid,mark=+,mark options={solid}}}
\tikzset{myparula52/.style={color=myParula05Green,densely dashed,mark=x,mark options={solid}}}
\tikzset{myparula53/.style={color=myParula05Green,densely dotted,mark=o,mark options={solid}}}
\tikzset{myparula54/.style={color=myParula05Green,dashdotted,mark=triangle,mark options={solid}}}
\tikzset{myparula55/.style={color=myParula05Green,dashdotdotted,mark=square,mark options={solid}}}
\tikzset{myparula61/.style={color=myParula06LightBlue,solid,mark=+,mark options={solid}}}
\tikzset{myparula62/.style={color=myParula06LightBlue,densely dashed,mark=x,mark options={solid}}}
\tikzset{myparula63/.style={color=myParula06LightBlue,densely dotted,mark=o,mark options={solid}}}
\tikzset{myparula64/.style={color=myParula06LightBlue,dashdotted,mark=triangle,mark options={solid}}}
\tikzset{myparula65/.style={color=myParula06LightBlue,dashdotdotted,mark=square,mark options={solid}}}
\tikzset{myparula71/.style={color=myParula07Red,solid,mark=+,mark options={solid}}}
\tikzset{myparula72/.style={color=myParula07Red,densely dashed,mark=x,mark options={solid}}}
\tikzset{myparula73/.style={color=myParula07Red,densely dotted,mark=o,mark options={solid}}}
\tikzset{myparula74/.style={color=myParula07Red,dashdotted,mark=triangle,mark options={solid}}}
\tikzset{myparula75/.style={color=myParula07Red,dashdotdotted,mark=square,mark options={solid}}}
\pgfplotsset{
    mysimpleplot/.style = {
            every axis plot/.prefix style={thick},
            width=1.05\linewidth,
            height=0.75\linewidth,
            title style={font=\scriptsize,align=center},
            legend cell align=left,
            legend style={font=\scriptsize},
            legend columns=3,
            legend style={
                    at={(0.5,1)},
                    yshift=0.3em,
                    anchor=south,
                    draw=none,
                    /tikz/every even column/.append style={
                            column sep=0.3em
                        },
                    cells={
                            align=left
                        }
                },
            grid=both,
            minor tick num=4,
            major grid style={solid,very thin,draw=gray!50},
            minor grid style={solid,ultra thin,draw=gray!20},
            label style={font=\scriptsize,align=center},
            tick label style={font=\scriptsize},
        },
}
\pgfplotsset{
    mysimpleresilienceplot01/.style = {
            mysimpleplot,
            ylabel={Adversary resilience $\beta$},
            height=0.45\linewidth,
            width=\linewidth,
            ymin=0,ymax=0.5,
            ytick={0,0.1,0.2,0.3,0.4,0.5},
            xmin=1e-5,xmax=1e2,
            grid=major,
        },
}
\tikzset{blockchainold/.style={
            x=1.5cm,
            y=0.6cm,
            node distance=0.5cm,
            block/.style = {
                    minimum width=0.25cm,
                    minimum height=0.25cm,
                    draw,
                    shade,
                    top color=white,
                    bottom color=black!10,
                },
            block-adv1/.style = {
                    block,
                    bottom color=myParula01Blue!50,
                    draw=myParula01Blue!50!black
                },
            block-adv2/.style = {
                    block,
                    bottom color=myParula07Red!50,
                    draw=myParula07Red!50!black,
                },
            block-adv3/.style = {
                    block,
                    bottom color=myParula05Green!50,
                    draw=myParula05Green!50!black,
                },
            block-green/.style = {
                    block,
                    bottom color=myParula05Green!50,
                    draw=myParula05Green!50!black,
                },
            block-red/.style = {
                    block,
                    bottom color=myParula07Red!50,
                    draw=myParula07Red!70!black,
                },
            block-gray/.style = {
                    block,
                    bottom color=black!30,
                },
            block-big/.style = {
                    minimum width=0.7cm,
                    minimum height=0.7cm,
                    draw,
                    shade,
                    top color=white,
                    bottom color=black!10,
                },
            branch/.style = {
                    minimum width=0.1cm,
                    minimum height=0.1cm,
                    draw,
                    circle,
                    inner sep=0,
                    fill=black,
                },
            link/.style = {
                    -latex,
                },
            hiddenlink/.style = {
                    dashed,
                },
            hiddenlink-adv1/.style = {
                    hiddenlink,
                    draw=myParula01Blue!50!black,
                },
            hiddenlink-adv2/.style = {
                    hiddenlink,
                    draw=myParula07Red!50!black,
                },
            hiddenlink-adv3/.style = {
                    hiddenlink,
                    draw=myParula05Green!50!black,
                },
            label/.style = {
                },
            label-adv1/.style = {
                    label,
                    text=myParula01Blue!50!black,
                },
            label-adv2/.style = {
                    label,
                    text=myParula07Red!50!black,
                },
            label-adv3/.style = {
                    label,
                    text=myParula05Green!50!black,
                },
        }
}
\tikzset{blockchain/.style={
            x=0.5cm,
            y=0.55cm,
            node distance=0.5cm,
            block/.style = {
                    minimum width=0.3cm,
                    minimum height=0.3cm,
                    draw,
                    shade,
                    top color=white,
                    bottom color=black!10,
                    inner sep=0,
                },
            block-adv/.style = {
                    block,
                    bottom color=myParula07Red!50,
                    draw=myParula07Red!50!black,
                },
            block-hon/.style = {
                    block,
                    bottom color=myParula05Green!50,
                    draw=myParula05Green!50!black,
                },
            block-blank/.style = {
                    minimum width=0.3cm,
                    minimum height=0.3cm,
                    rounded corners,
                    inner sep=0,
                },
            link/.style = {
                    -latex,
                },
            link-adv/.style = {
                    link,
                },
            link-hon/.style = {
                    link,
                },
        }
}
\algnewcommand{\algorithmicswitch}{\textbf{switch}}
\algnewcommand{\algorithmiccase}{\textbf{case}}
\algnewcommand{\algorithmicon}{\textbf{on}}
\algnewcommand{\algorithmicat}{\textbf{at}}
\algnewcommand{\algorithmicrealfunction}{\textbf{function}}
\algnewcommand{\algorithmicthroughout}{\textbf{do throughout}}
\algrenewcommand{\algorithmicdo}{}
\algrenewcommand{\algorithmicthen}{}
\algnewcommand{\algorithmicgoto}{\textbf{goto}}%
\algnewcommand{\Goto}[1]{\algorithmicgoto~\ref{#1}}%
\algnewcommand{\algorithmicassert}{\textbf{assert}}%
\algnewcommand{\Assert}[1]{\algorithmicassert~{#1}}%
\algnewcommand{\algorithmicbreak}{\textbf{break}}%
\algnewcommand{\Break}[0]{\algorithmicbreak}%
\algnewcommand{\algorithmicwaiton}{\textbf{wait on}}%
\algnewcommand{\WaitOn}[1]{\algorithmicwaiton~{#1}}%
\algnewcommand{\LineComment}[1]{\State \(\triangleright\) \textit{#1}}
\algnewcommand{\InlineRequire}[1]{\textbf{require} {#1}}
\crefname{figure}{Fig.}{Figs.}
\Crefname{figure}{Fig.}{Figs.}
\crefname{table}{Tab.}{Tabs.}
\Crefname{table}{Tab.}{Tabs.}
\crefname{section}{Sec.}{Secs.}
\Crefname{section}{Sec.}{Secs.}
\crefname{appendix}{App.}{Apps.}
\Crefname{appendix}{App.}{Apps.}
\crefname{algorithm}{Alg.}{Algs.}
\Crefname{algorithm}{Alg.}{Algs.}
\crefname{line}{l.}{ll.}
\Crefname{line}{l.}{ll.}
\crefname{proposition}{Prop.}{Props.}
\Crefname{proposition}{Prop.}{Props.}
\crefname{lemma}{Lem.}{Lems.}
\Crefname{lemma}{Lem.}{Lems.}
\crefname{theorem}{Thm.}{Thms.}
\Crefname{theorem}{Thm.}{Thms.}
\crefname{corollary}{Cor.}{Cors.}
\Crefname{corollary}{Cor.}{Cors.}
\crefname{definition}{Def.}{Defs.}
\Crefname{definition}{Def.}{Defs.}
\newcommand{\myalgref}[2]{\cref{#1},~\cref{#2}}
\newcommand{\Prob}[1]{\ensuremath{\operatorname{Pr}\left[#1\right]}}
\newcommand{\Exp}[1]{\ensuremath{\mathbb E\left[#1\right]}}
\newcommand{\poly}{\ensuremath{\operatorname{poly}}}
\newcommand{\negl}{\ensuremath{\operatorname{negl}}}
\newcommand{\intvl}[2]{\ensuremath{\left(#1,#2\right]}}
\newcommand{\intvlset}[0]{\ensuremath{\CI}}
\newcommand{\intvlg}[0]{\ensuremath{\succ}}
\newcommand{\intvll}[0]{\ensuremath{\prec}}
\newcommand{\intvlgeq}[0]{\ensuremath{\succeq}}
\newcommand{\intvlleq}[0]{\ensuremath{\preceq}}
\newcommand{\intvleq}[0]{\ensuremath{\asymp}}
\renewcommand{\Hat}[1]{\ensuremath{H_{#1}}}
\newcommand{\Hin}[2]{\ensuremath{H_{\intvl{#1}{#2}}}}
\newcommand{\Aat}[1]{\ensuremath{A_{#1}}}
\newcommand{\Ain}[2]{\ensuremath{A_{\intvl{#1}{#2}}}}
\newcommand{\Qat}[1]{\ensuremath{Q_{#1}}}
\newcommand{\Qin}[2]{\ensuremath{Q_{\intvl{#1}{#2}}}}
\newcommand{\predGood}[1]{\ensuremath{\mathsf{Good}(#1)}}
\newcommand{\predBad}[1]{\ensuremath{\mathsf{Bad}(#1)}}
\newcommand{\predEmpty}[1]{\ensuremath{\mathsf{Empty}(#1)}}
\newcommand{\sltGood}[0]{Good\xspace}
\newcommand{\sltgood}[0]{good\xspace}
\newcommand{\sltBad}[0]{Bad\xspace}
\newcommand{\sltbad}[0]{bad\xspace}
\newcommand{\sltEmpty}[0]{Empty\xspace}
\newcommand{\sltempty}[0]{empty\xspace}
\newcommand{\Gat}[1]{\ensuremath{G_{#1}}}
\newcommand{\Gin}[2]{\ensuremath{G_{\intvl{#1}{#2}}}}
\newcommand{\Bat}[1]{\ensuremath{\overline{G}_{#1}}}
\newcommand{\Bin}[2]{\ensuremath{\overline{G}_{\intvl{#1}{#2}}}}
\newcommand{\probGood}[0]{\ensuremath{p_{\mathrm{G}}}}
\newcommand{\probEmpty}[0]{\ensuremath{p_{\mathrm{E}}}}
\newcommand{\Dat}[1]{\ensuremath{D_{#1}}}
\newcommand{\Din}[2]{\ensuremath{D_{\intvl{#1}{#2}}}}
\newcommand{\Nat}[1]{\ensuremath{\overline{D}_{#1}}}
\newcommand{\Nin}[2]{\ensuremath{\overline{D}_{\intvl{#1}{#2}}}}
\newcommand{\Xat}[1]{\ensuremath{X_{#1}}}
\newcommand{\Xin}[2]{\ensuremath{X_{\intvl{#1}{#2}}}}
\newcommand{\Yat}[1]{\ensuremath{Y_{#1}}}
\newcommand{\Yin}[2]{\ensuremath{Y_{\intvl{#1}{#2}}}}
\newcommand{\probPP}[0]{\ensuremath{p_{\mathrm{\sltpp}}}}
\newcommand{\predPP}[1]{\ensuremath{\mathsf{PPivot}(#1)}}
\newcommand{\predCP}[1]{\ensuremath{\mathsf{CPivot}(#1)}}
\newcommand{\Pat}[1]{\ensuremath{P_{#1}}}
\newcommand{\Pin}[2]{\ensuremath{P_{\intvl{#1}{#2}}}}
\newcommand{\sltPps}[0]{Ppivots\xspace}
\newcommand{\sltpp}[0]{ppivot\xspace}
\newcommand{\sltpps}[0]{ppivots\xspace}
\newcommand{\sltCps}[0]{Cpivots\xspace}
\newcommand{\sltcp}[0]{cpivot\xspace}
\newcommand{\sltcps}[0]{cpivots\xspace}
\newcommand{\timeslot}[0]{slot\xspace}
\newcommand{\timeslots}[0]{slots\xspace}
\newcommand{\Timeslots}[0]{Slots\xspace}
\newcommand{\iindex}[0]{index\xspace}
\newcommand{\iindices}[0]{indices\xspace}
\newcommand{\Iindices}[0]{Indices\xspace}
\newcommand{\BPO}[0]{BPO\xspace}
\newcommand{\BPOs}[0]{BPOs\xspace}
\newcommand{\slotduration}{\ensuremath{\tau}}
\newcommand{\blkrateslot}{\ensuremath{\rho}}
\newcommand{\blkratetime}{\ensuremath{\lambda}}
\newcommand{\blkratetimeGrowth}{\ensuremath{\blkratetime_{\mathrm{grwth}}}}
\newcommand{\blkratetimeHon}{\ensuremath{\blkratetime_{\mathrm{hon}}}}
\newcommand{\blkratetimeAdv}
{\ensuremath{\blkratetime_{\mathrm{adv}}}}
\newcommand{\blkratetimeSPV}
{\ensuremath{\blkratetime_{\mathrm{spv}}}}
\newcommand{\blkratetimeGrowthSilent}{\ensuremath{\blkratetime_{\mathrm{grwth}}^{\mathrm{privt}}}}
\newcommand{\blkratetimeGrowthTeaser}{\ensuremath{\blkratetime_{\mathrm{grwth}}^{\mathrm{teaser}}}}
\newcommand{\DeltaHeader}{\ensuremath{\Delta_{\mathrm{h}}}}
\newcommand{\goodsep}{\ensuremath{\nu}}
\newcommand{\goodsepbw}{\ensuremath{\widetilde{C}}}
\newcommand{\bwtime}{\ensuremath{C}}
\newcommand{\Tlive}{\ensuremath{T_{\mathrm{live}}}}
\newcommand{\Thorizon}[0]{\ensuremath{T_{\mathrm{hrzn}}}}
\newcommand{\tput}[0]{\ensuremath{\theta}}
\newcommand{\Ttput}[0]{\ensuremath{T_{\mathrm{txlim}}}}
\newcommand{\Ktput}[0]{\ensuremath{K_{\mathrm{txlim}}}}
\newcommand{\TliveReal}{\ensuremath{T_{\mathrm{live}}^{\mathrm{real}}}}
\newcommand{\Chain}{\CC}
\newcommand{\dC}{\mathrm{d}\Chain}
\newcommand{\Tree}{\CT}
\newcommand{\hT}{\mathrm{h}\Tree}
\newcommand{\LOG}[2]{\ensuremath{\mathsf{LOG}_{#1}^{#2}}}
\newcommand{\trunc}[1]{\ensuremath{^{\lceil #1}}}
\newcommand{\Khorizon}[0]{\ensuremath{K_{\mathrm{hrzn}}}}
\newcommand{\Kcp}[0]{\ensuremath{K_{\mathrm{cp}}}}
\newcommand{\eps}[0]{\ensuremath{\varepsilon}}
\newcommand{\epsGood}[0]{\ensuremath{\eps_{\mathrm{G}}}}
\newcommand{\iid}[0]{iid\xspace}
\newcommand{\iidPERIOD}[0]{\iid}
\newcommand{\Event}[0]{\ensuremath{\CE}}
\newcommand{\probGoodFormula}{\ensuremath{(1-\beta)\frac{\blkrateslot e^{-\blkrateslot(\goodsep+1)}}{1-e^{-\blkrateslot}}}}
\newcommand{\probPPFormula}{\ensuremath{(2\probGood - 1)^2 / \probGood}}
\newcommand{\BWEquation}{\ensuremath{(\goodsep+1)\slotduration \geq \DeltaHeader + \goodsepbw/\bwtime}}
\newcommand{\alphaLowerTailX}[0]{\ensuremath{\alpha_2}}
\newcommand{\alphaLowerTailPP}[0]{\ensuremath{\alpha_3}}
\newcommand{\Ind}[1]{\ensuremath{\Ione_{\{#1\}}}}
\newcommand{\dlrulelong}{\ensuremath{\mathcal{D}_{\mathsf{long}}}}
\newcommand{\tx}[0]{\ensuremath{\mathsf{tx}}}
\newcommand{\txs}[0]{\ensuremath{\mathsf{txs}}}
\newcommand{\Env}[0]{\CZ}
\newcommand{\Adv}[0]{\CA}
\newcommand{\teaserattack}[0]{teasing strategy\xspace}
\newcommand{\Teaserattack}[0]{Teasing strategy\xspace}
\newcommand{\TeaserAttack}[0]{Teasing Strategy\xspace}
\newcommand{\PoSteaserattack}[0]{equiv-teasing attack\xspace}
\newcommand{\PoSTeaserattack}[0]{Equiv-teasing attack\xspace}
\newcommand{\PoSTeaserAttack}[0]{Equiv-Teasing Attack\xspace}
\newcommand{\greedyattack}[0]{forking attack\xspace}
\newcommand{\GreedyAttack}[0]{Forking Attack\xspace}
\newcommand{\rulegreedy}[0]{greedy\xspace}
\newcommand{\ruleGreedy}[0]{Greedy\xspace}
\newcommand{\rulelc}[0]{longest-header-chain\xspace}
\newcommand{\ruleLc}[0]{Longest-header-chain\xspace}
\newcommand{\confDepth}{\ensuremath{k_{\mathrm{conf}}}}
\newcommand{\protocol}{\ensuremath{\Pi^{\blkrateslot,\slotduration,\confDepth}}}
\newcommand{\TRUE}[0]{\ensuremath{\mathtt{true}}}
\newcommand{\FALSE}[0]{\ensuremath{\mathtt{false}}}
\newcommand{\BLOCKUNKNOWN}[0]{\mathtt{UNKNOWN}}
\newcommand{\FtreePoW}[0]{\ensuremath{\CF_{\mathrm{hdrtree}}^{\mathrm{PoW},\blkrateslot,\slotduration}}}
\newcommand{\FtreePoS}[0]{\ensuremath{\CF_{\mathrm{hdrtree}}^{\mathrm{PoS},\blkrateslot,\slotduration}}}
\newcommand{\TxsMap}[0]{\ensuremath{\mathrm{blkTxs}}}
\newcommand{\Lottery}[0]{\ensuremath{\mathrm{lottery}}}
\newcommand{\genesisHeaderChain}[0]{\ensuremath{\mathsf{genesis}\Chain}}
\newcommand{\FormalVersion}[1]{Formal version: #1}
\newcommand{\ProtShort}[0]{BlaNC\xspace}
\newcommand{\ProtMid}[0]{Blanking NC\xspace}
\newcommand{\ProtLong}[0]{Blanking Nakamoto Consensus\xspace}
\newcommand{\keqproof}{\ensuremath{k_{\mathrm{epf}}}}
\title[Nakamoto Consensus under Bounded Processing Capacity]{Nakamoto Consensus under Bounded Processing Capacity}
\author{Lucianna Kiffer}
\affiliation{%
    \institution{ETH Z\"urich}%
    \city{Zurich}%
    \country{Switzerland}%
}
\email{lkiffer@ethz.ch}
\author{Joachim Neu}
\affiliation{%
    \institution{Stanford University}%
    \city{Stanford}%
    \state{CA}%
    \country{USA}%
}
\email{jneu@stanford.edu}
\author{Srivatsan Sridhar}
\affiliation{%
    \institution{Stanford University}%
    \city{Stanford}%
    \state{CA}%
    \country{USA}%
}
\email{svatsan@stanford.edu}
\author{Aviv Zohar}
\affiliation{%
    \institution{The Hebrew University}%
    \city{Jerusalem}%
    \country{Israel}%
}
\email{avivz@cs.huji.ac.il}
\author{David Tse}
\affiliation{%
    \institution{Stanford University}%
    \city{Stanford}%
    \state{CA}%
    \country{USA}%
}
\email{dntse@stanford.edu}
\thanks{LK, JN, SS and AZ are listed alphabetically.
\ifshortVersion%
    Full version with appendices: \cite{full-version}%
\fi}
\newcommand{\gitSourceUrl}[0]{\url{https://github.com/avivz/finitebwlc}}
\def\ps@headings{%
\def\@oddhead{\mbox{}\scriptsize\rightmark \hfil \thepage}%
\def\@evenhead{\scriptsize\thepage \hfil \leftmark\mbox{}}}
\crefname{section}{Sec.}{Secs.}
\Crefname{section}{Sec.}{Secs.}
\crefname{appendix}{App.}{Apps.}
\Crefname{appendix}{App.}{Apps.}
\crefname{subsection}{Sec.}{Secs.}
\Crefname{subsection}{Sec.}{Secs.}
\crefname{subappendix}{App.}{Apps.}
\Crefname{subappendix}{App.}{Apps.}
\crefname{subsubsection}{Sec.}{Secs.}
\Crefname{subsubsection}{Sec.}{Secs.}
\crefname{subsubappendix}{App.}{Apps.}
\Crefname{subsubappendix}{App.}{Apps.}
\crefname{subsubsubsection}{Sec.}{Secs.}
\Crefname{subsubsubsection}{Sec.}{Secs.}
\crefname{subsubsubappendix}{App.}{Apps.}
\Crefname{subsubsubappendix}{App.}{Apps.}
\newcommand{\myparagraph}[1]{\smallskip\noindent\textbf{#1.}~~}
\begin{document}
\begin{abstract}
    For Nakamoto's longest-chain consensus protocol, whose proof-of-work (PoW) and proof-of-stake (PoS) variants power major blockchains such as Bitcoin and Cardano, we revisit the classic problem of the security--performance tradeoff:
    Given a network of nodes with finite 
    communication- and computation-resources, against what fraction of adversary power is Nakamoto consensus (NC) secure for a given block production rate?
    State-of-the-art analyses of NC fail to answer this question, because their \emph{bounded-delay} model does not capture the rate limits to nodes' processing of blocks, which cause congestion when blocks are released in quick succession.
    We develop a new analysis technique to prove a refined security--performance tradeoff for PoW NC in a \emph{bounded-capacity} model.
    In this model, we show that, in contrast to the classic bounded-delay model, Nakamoto's private attack is no longer the worst attack, and a new attack we call the \emph{\teaserattack}, that exploits congestion, is strictly worse.
    In PoS, equivocating blocks can exacerbate congestion, making  traditional PoS NC insecure except at very low block production rates.
    To counter such equivocation spamming, we present a variant of PoS NC we call \emph{\ProtMid} (\emph{\ProtShort}), which achieves the \emph{same resilience} as PoW NC.
\end{abstract}

\keywords{Nakamoto; consensus; blockchain; capacity; bandwidth; spamming.}

\maketitle

\section{Introduction}
\label{sec:introduction}

In order to remain secure against adversaries controlling up to $50\%$ of the network,
blockchains that utilize Nakamoto's proof-of-work (PoW) longest-chain consensus protocol~\cite{nakamoto_paper,backbone} 
have been parameterized
to leave a security margin between the 
throughput
under normal operation and each node's capacity limits.
For instance, in expectation, Bitcoin produces only one block of transactions every ten minutes, though it usually only takes a few seconds for a node to download and verify a block's contents~\cite{decker}.
On the other hand, Bitcoin Cash forked off to increase the block size for better throughput, a proposal whose security implications were hotly debated~\cite{btc-blocksize-war}.
The fundamental question that protocol designers face is: \emph{%
    What is the security--performance trade-off between the block production rate
    (relative to the nodes' capacity limit)
    and the fraction of adversary power that the protocol tolerates?}
In this work, we 
show the inadequacy of the bounded-delay model that most previous works utilized to analyze the security of Nakamoto consensus (NC)~\cite{kiayias2017ouroboros,backbone,dem20,sleepy,ren,tight_bitcoin,pss16,kiffer2018better}, and instead provide security analysis in a bounded-capacity model that better captures real-world effects such as congestion due to a backlog of blocks that need to be communicated and validated by nodes.

\begin{figure}[tb]%
    \centering%
    \begin{tikzpicture}[]%
        \scriptsize
        \begin{axis}[
                mysimpleresilienceplot01,
                name=plot1,
                xmode=log,
                xlabel={Block production rate $\lambda$},
                height=0.5\linewidth,
            ]

            \addplot [draw=none,name path=xaxis,domain={1e-8:1e8}] {0};
            \addplot [draw=none,name path=xaxisplus1,domain={1e-8:1e8}] {1};

            \addplot [black,dashed,no marks,name path=resiliencebddelay] table [x=relblockfrequency,y=beta] {figures/fig-comparison-bddelay-bdbandwidth-bddelay.txt};
            \label{leg:comparison-bddelay-bdbandwidth-privateattack}

            \addplot [myparula51,fill opacity=0.5,pattern=crosshatch dots,pattern color=myParula05Green] fill between [of=resiliencebddelay and xaxis];
            \addplot [myparula71,pattern=north east lines,pattern color=myParula07Red] fill between [of=resiliencebddelay and xaxisplus1];

            \addplot [myparula71,thick,no marks,name path=teaserattack] table [x=lbyC,y=beta] {figures/fig-comparison-bddelay-bdbandwidth-exp_teaser-attack.txt};
            \addplot [myparula71,fill opacity=0.3] fill between [of=teaserattack and xaxisplus1];

            \addplot [myparula51,no marks,name path=resiliencebdbandwidth] table [x=lbyC,y=beta] {figures/fig-comparison-bddelay-bdbandwidth-bdbandwidth-newresult.txt};

            \addplot [myparula51,fill opacity=0.45] fill between [of=resiliencebdbandwidth and xaxis];

            \node [align=center] (attack) at (axis cs:1e-1,0.1) {\Teaserattack\\{}(\cref{sec:teaser-attack})};
            \draw [-latex,shorten >=1pt,shorten <=-2pt] (attack) -- (axis cs:1.5,0.2);

            \node [align=center] (security) at (axis cs:5e-3,0.40) {Our security proof\\{}(\cref{thm:safety-and-liveness-pow})};
            \draw [-latex,shorten >=2pt] (security) -- (axis cs:1e-3,0.27);

            \node [align=center] (bddelay) at (axis cs:1e1,0.40) {Private attack\\{}\cite{dem20,tight_bitcoin}};
            \draw [-latex] (bddelay) -- (axis cs:4e0,0.21);

        \end{axis}
    \end{tikzpicture}%
    \vspace{-0.5em}%
    \caption[]{%
        Regions of
        fraction $\beta$ of adversary nodes and
        block production rate $\lambda$
        with
        security proofs
        (\tikz[x=0.75em,y=0.75em]{ \draw [draw=myParula05Green,thick,fill=myParula05Green,fill opacity=0.3] (0,0) rectangle (1,1); })
        and attacks
        (\tikz[x=0.75em,y=0.75em]{ \draw [draw=myParula07Red,thick,fill=myParula07Red,fill opacity=0.3] (0,0) rectangle (1,1); })
        for NC under a fixed processing capacity of $C=1$ block per second.
        Analysis in the bounded-delay model \cite{dem20,tight_bitcoin} (with $\Delta = 1\;\mathrm{s}$) proves that the private attack
        (\ref{leg:comparison-bddelay-bdbandwidth-privateattack})
        succeeds
        (\tikz[x=0.75em,y=0.75em]{ \draw [draw=none,fill opacity=1,pattern=north east lines,pattern color=myParula07Red] (0,0) rectangle (1,1); })
        iff
        $\beta \geq \frac{1-\beta}{1+(1-\beta)\blkratetime}$, and that for all other values of $\beta,\lambda$, no attack succeeds
        (\tikz[x=0.75em,y=0.75em]{ \draw [draw=none,fill opacity=1,pattern=crosshatch dots,pattern color=myParula05Green] (0,0) rectangle (1,1); }).
        Our \teaserattack exploits congested block processing and succeeds at lower adversary $\beta$ than the private attack
        (\tikz[x=0.75em,y=0.75em]{ \draw [draw=myParula07Red,thick,fill=myParula07Red,fill opacity=0.3] (0,0) rectangle (1,1); }, \tikz[x=0.75em,y=0.75em]{ \draw [draw=none,fill opacity=1,pattern=north east lines,pattern color=myParula07Red] (0,0) rectangle (1,1); }).
        Our analysis in a bounded-capacity model 
        yields
        a security region
        (\tikz[x=0.75em,y=0.75em]{ \draw [draw=myParula05Green,thick,fill=myParula05Green,fill opacity=0.3] (0,0) rectangle (1,1); })
        for PoW NC.%
    }%
    \label{fig:comparison-bddelay-bdbandwidth}%
\end{figure}%

In PoW NC,
collectively starting with a well-known ``genesis'' block,
each node continuously works to solve a computational
puzzle to extend the longest chain of blocks 
it sees with a new block containing pending transactions (``\emph{mining} a new block'').
When successful, the node pushes the new block's \emph{header} to the network,
and makes its \emph{content} available for download.
In order to extend a chain, nodes must first \emph{process}, \ie, download and verify, the content of blocks in that chain, to ensure that the content is both \emph{available} and \emph{valid}.
Downloading blocks may take time, especially if blocks are extremely large~\cite{decker}, but in systems that contain smart contracts, 
even smaller blocks may take a while to process---mostly due to the time it takes to execute and validate smart contracts~\cite{demystifying-incentives}.

In PoW, block production occurs at random times, which makes the processing load of the network bursty.
Moreover, the adversary can selectively withold its own mined blocks and release them opportunistically.
Both these factors further stress 
the processing (communication, computation, ...) capacities of nodes.
With limited processing capacity, during times of high load, blocks will be \emph{queued} for processing. 
Since nodes cannot mine new blocks extending chains that they have not yet fully processed, queueing further delays the growth of the honest nodes' chain. As the security of NC is based on the honest chain outgrowing any adversary chain,
the reduced growth of the honest nodes' chain makes it easier for an adversary to attack the system.
To analyze security under such effects,
it is important to consider the \emph{scheduling policy} that nodes use in deciding which blocks to download and verify first, given a set of new block headers.
Since a node extends its longest chain to produce new blocks, an obvious policy is to first process blocks along the longest header chain that the node has seen. Indeed, this policy can be found in the Bitcoin implementation \cite{btcdevp2pnetworkheadersfirst}.

\myparagraph{Limitations of the Bounded-Delay Model}
Previous work has focused on the security analysis of Bitcoin in the synchronous setting: All messages are assumed to arrive after a maximum delay of $\Delta$~\cite{kiayias2017ouroboros,backbone,dem20,sleepy,ren,tight_bitcoin,pss16,kiffer2018better}.
Using this model, 
\cite{dem20,tight_bitcoin} calculate a tight bound on the fraction $\beta$ of adversary nodes, for given block production rate $\blkratetime$ and delay bound $\Delta$, for which the protocol is secure against all attacks.
However, the $\Delta$-delay model assumes that the delay is the same \emph{irrespective of the total processing load}, and specifically, that the 
adversary cannot manipulate the load to its advantage.
Thus, the model fails to capture the security implications of bursty release of blocks by an adversary or due to the stochastic nature of PoW mining even by honest nodes alone.

The bounded-delay analysis \cite{dem20,tight_bitcoin} 
concludes that the
well-known
\emph{private attack}~\cite{nakamoto_paper} (along with delaying every message by $\Delta$) is the worst-case attack strategy since 
its attack threshold matches
the
security threshold,
\ie,
under parameters where the private attack fails, the analysis concludes that all other attacks must fail, too.
If we only consider the private attack and low block production rates, then the bounded-delay analysis, with $\Delta$ taken as the time to process one block, is a good approximation to calculate the fraction of adversary power with which the attack succeeds (see \cref{fig:comparison-bddelay-bdbandwidth}, validated by simulations in \cref{sec:experiments}).
This is because during the private attack, the adversary does not release any blocks (only ``benign'' random congestion), and the effect of bursty honest mining is less significant at low block rates.

However, there are other strategies in which the adversary adds to the processing load to increase queuing delays.
We simulate one such strategy, the \teaserattack (\cref{sec:teaser-attack}), that is stronger than the private attack, \ie, it succeeds in regions of $(\lambda, \beta)$ where the private attack does not (\cref{fig:comparison-bddelay-bdbandwidth}).
In the \teaserattack, the adversary ``teases'' honest nodes to process a longer chain it announces, but makes this effort ``useless'' by not releasing the block contents for the entire chain.
The adversary effectively doubles the processing load and queuing delays, thus slowing the growth of the honest nodes' chain, while the adversary builds a longer chain to break security. This halves the maximum secure block rate $\blkratetime$ for any given $\beta$ (\cref{fig:comparison-bddelay-bdbandwidth}).
While the concrete \emph{quantitative} impact of this attack may be considered modest, it highlights \emph{conceptual} limitations of earlier analyses and emphasizes the need for security analysis under more realistic network models,
especially to rule out that unbeknown to us there could be even more serious queuing-based attacks.

\myparagraph{Security Bounds under Bounded Processing Capacity}
To
re-establish the security of NC in a more realistic model,
we adopt the \emph{bounded-capacity} model from~\cite{bwlimitedposlc}.
Under this model, we consider the scheduling policy as a part of the protocol description as it affects the security of the protocol.
\emph{Henceforth, 
we continue to use the verb
``to process'' to abstract a variety (or combination) of tasks (communication/download, computation/execution/verification, input-output/storage access, ...) that are typically subject to rate constraints in real-world systems, and we refer to the corresponding rate limit abstractly as ``(processing) capacity''.}

\begin{result}
\label{res:result-pow-security}
    Using the bounded-capacity model and a novel analysis technique, we characterize a region of block mining rate $\blkratetime$ and adversary fraction $\beta$ for which we prove that PoW NC, with a wide range of suitable scheduling policies, is secure (\cref{thm:safety-and-liveness-pow}).
    This region is shown in \cref{fig:comparison-bddelay-bdbandwidth}.
    Specifically, this analysis expands the set of adversary strategies
    captured by earlier bounded-delay analyses to include adversary strategies
    that exploit effects from bounded capacity.
\end{result}

In \cref{fig:bitcoin-cardano-resilience-bandwidth}, we plot the adversary resilience versus bandwidth requirement for PoW NC with cautious (\eg Bitcoin) and ambitious (\eg Bitcoin Cash) parameters.
It shows the importance of modeling and studying congestion effects on security,
in particular, for protocols that aim for maximum performance,
and our analysis provides tools to do so.
While our work demonstrates that earlier analyses
have failed to capture some security-critical phenomena,
the quantitative gap between our best-known attack
(\cref{fig:comparison-bddelay-bdbandwidth}~\tikz[x=0.75em,y=0.75em]{ \draw [draw=myParula07Red,thick,fill=myParula07Red,fill opacity=0.3] (0,0) rectangle (1,1); })
and our best-known security analysis
(\cref{fig:comparison-bddelay-bdbandwidth}~\tikz[x=0.75em,y=0.75em]{ \draw [draw=myParula05Green,thick,fill=myParula05Green,fill opacity=0.3] (0,0) rectangle (1,1); })
points to a need for future
work.\footnote{%
We focus on the security--throughput tradeoff
of NC, \ie, for what tuples $(\lambda, \beta)$
does the consensus-failure probability $\varepsilon$
decay exponentially to zero
under \emph{some appropriate}
scaling of the confirmation latency $\Tlive$.
On a finer point, in our 
bounded-capacity analysis,
latency scales 
$\Tlive \sim (\log(1/\varepsilon))^2$ (\cref{thm:safety-and-liveness-pow}),
in contrast to earlier bounded-delay analyses
that required only
$\Tlive \sim \log(1/\varepsilon)$~\cite{tight_bitcoin,linear-cons-pos,ren}.
Exploring the possibility of tighter latency scaling under bounded capacity
requires future work.
}

\begin{figure}[tb]%
    \centering%
    \begin{tikzpicture}[]%
        \scriptsize
        \begin{axis}[
                mysimpleresilienceplot01,
                xmode=log,
                xlabel={Bandwidth requirement $C$ (Mbps)},
                xmin=1e-3,xmax=1e2,
                x dir=reverse,
                legend columns=2,
                grid=both,
                height=0.5\linewidth,
                legend style={
                        xshift=-1.5em,
                    },
            ]

            \addlegendimage{gray,dashed,mark=none};
            \addlegendentry{Bounded-delay security (private attack)};

            \addlegendimage{area legend,myparula21,fill=myParula02Orange,thin,draw=none};
            \addlegendentry{Cautious ($4\;\mathrm{MB}$ blocks)};

            \addlegendimage{gray,mark=none};
            \addlegendentry{\Teaserattack (\cref{sec:teaser-attack})};

            \addlegendimage{area legend,myparula11,fill=myParula01Blue,thin,draw=none};
            \addlegendentry{Ambitious ($32\;\mathrm{MB}$ blocks)};

            \addplot [myparula21,dashed,no marks,mark=none,name path=bandwidth] table [y=beta,x=C(Mbps)] {figures/fig-bitcoin-resilience-delay.txt};
            \label{leg:bitcoin-cardano-resilience-bandwidth-pow-bd};

            \addplot [myparula11,dashed,no marks,mark=none,name path=bandwidth] table [y=beta,x=C(Mbps)] {figures/fig-bitcoin-cash-resilience-delay.txt};
            \label{leg:bitcoin-cash-resilience-bandwidth-pow-bd};

            \addplot [myparula23,draw=none,mark size=1.8pt,draw=none,line width=1,only marks] coordinates { (0.39402665088757344,0.4831) };
            \label{leg:bitcoin-cardano-resilience-bandwidth-pow-bd-mark};

            \addplot [myparula13,draw=none,mark size=1.8pt,draw=none,line width=1,only marks] coordinates { (0.40036293995196154,0.3751) };
            \label{leg:bitcoin-cash-resilience-bandwidth-pow-bd-mark};

            \addplot [myparula21,no marks,mark=none,name path=bandwidth] table [y=beta,x=C(Mbps)] {figures/fig-bitcoin-teasing-attack-bandwidth.txt};
            \label{leg:bitcoin-cardano-resilience-bandwidth-pow-teaser};

            \addplot [myparula11,no marks,mark=none,name path=bandwidth] table [y=beta,x=C(Mbps)] {figures/fig-bitcoin-cash-teasing-attack-bandwidth.txt};
            \label{leg:bitcoin-cash-resilience-bandwidth-pow-teaser};

            \addplot [myparula21,mark=*,mark size=1.5pt,draw=none,line width=0,only marks] coordinates { (0.4011622161735472,0.46749793788576444) };
            \label{leg:bitcoin-cardano-resilience-bandwidth-pow-teaser-mark};

            \addplot [myparula11,mark=*,mark size=1.5pt,draw=none,line width=0,only marks] coordinates { (0.38846308875683366,0.2767956260599464) };
            \label{leg:bitcoin-cash-resilience-bandwidth-pow-teaser-mark};

        \end{axis}
    \end{tikzpicture}%
    \vspace{-0.5em}%
    \caption[]{%
        For cautiously parameterized PoW NC
        (\eg, Bitcoin's
        $\blkratetime = 1/600\;\mathrm{blocks/s}$,
        block size $4\;\mathrm{MB}$,
        recommended min.\ per-node bandwidth 
        $0.4\;\mathrm{Mbps}$~\cite{bitcoin_requirements,DBLP:conf/fc/CromanDEGJKMSSS16}),
        earlier analyses assuming bounded delay
        (\ref{leg:bitcoin-cardano-resilience-bandwidth-pow-bd})
        predicted security
        against any adversary controlling up to 48\% of hash power
        (\ref{leg:bitcoin-cardano-resilience-bandwidth-pow-bd-mark}),
        including Nakamoto's private attack \cite{nakamoto_paper},
        which was 
        concluded
        to be worst-case.
        The \teaserattack still requires $46\%$ adversary
        (\ref{leg:bitcoin-cardano-resilience-bandwidth-pow-teaser}a, \ref{leg:bitcoin-cardano-resilience-bandwidth-pow-teaser-mark}).
        In contrast, 
        PoW NC parameterized ambitiously
        (\eg, Bitcoin Cash's same $\lambda$, but max.\ 
        block size $32\;\mathrm{MB}$, same bandwidth~\cite{bitcoin_cash_requirements})
        withstands only a $37\%$ private attacker
        (\ref{leg:bitcoin-cash-resilience-bandwidth-pow-bd-mark}), while the \teaserattack resilience drops to $27\%$
        (\ref{leg:bitcoin-cash-resilience-bandwidth-pow-teaser-mark}).
    }%
    \label{fig:bitcoin-cardano-resilience-bandwidth}%
\end{figure}%

\myparagraph{Proof-of-Stake (PoS) NC}
Nakamoto consensus has been adapted to proof-of-stake in protocols of the Ouroboros~\cite{kiayias2017ouroboros,david2018ouroboros,badertscher2018ouroboros} and Sleepy Consensus~\cite{sleepy,snowwhite} families.
In PoS NC, the block production lottery is independent of the block's content or parent~\cite{pos_paper}.
This, unlike in PoW NC, allows an adversary to \emph{reuse} a ``winning PoS lottery ticket'' to create infinitely many valid blocks (called \emph{equivocations}).
As observed in~\cite{bwlimitedposlc}, the adversary can 
\emph{spam} nodes with many \emph{equivocating} blocks, aggravating the problem of congestion.
While slashing~\cite{weaksubjectivity,casper,aadilemma,bftforensics} may deter \emph{rational} adversaries to \emph{some} extent,
protocols need to tolerate equivocations to 
handle plausibly irrational \emph{Byzantine} adversaries~\cite{bwlimitedposlc}.

Analytical work \cite{bwlimitedposlc} gives a security proof for PoS NC in the bounded-capacity model.
However,
\cite{bwlimitedposlc} proves security only when
nodes have enough capacity so
that for each block, 
they can process potentially different versions of its $k$ predecessors, where $k$ is the confirmation depth chosen for the chain.
This increases the network load by $k$ times, thus reducing the maximum secure block rate $\blkratetime$ by $k$ times
(\cref{fig:comparison-bddelay-bdbandwidth-pos}).
Decreasing the probability of consensus failure requires increasing $k$, which means that for security with overwhelming probability, the throughput must approach zero.
This is not merely an artifact of the security analysis of \cite{bwlimitedposlc}:
Augmenting our \teaserattack with equivocations demonstrates this behavior (we discuss this in \cref{sec:experiments} and \appendixRef{\cref{sec:pos-teaser-attack}}).
On the other hand, PoW NC does not suffer from such vanishing throughput (\cref{fig:comparison-bddelay-bdbandwidth}).

\begin{figure}[tb]%
    \centering%
    
    \begin{tikzpicture}[]%
        \footnotesize
        \begin{axis}[
                mysimpleresilienceplot01,
                name=plot1,
                xmode=log,
                xlabel={Block production rate $\lambda$},
                ylabel={Adversary res.\ $\beta$},
                legend columns=2,
                xmax=1e-2, xmin=2e-6,
            ]

            \addlegendimage{empty legend}
            \addlegendentry{%
                \tikz[x=2em,y=0.75em]{%
                    \draw [myparula21,thick] (0,0) -- (1,0);
                    \draw [myparula22,thick] (0,-0.5) -- (1,-0.5);
                    \draw [myparula23,thick] (0,-1.0) -- (1,-1.0);
                }%
                \hspace{0.5em}PoS NC security \cite{bwlimitedposlc}%
            };

            \addlegendimage{empty legend}
            \addlegendentry{%
                \tikz[x=2em,y=0.75em,baseline=-0.35em]{%
                    \draw [myParula01Blue,thick] (0,0) -- (1,0);
                }%
                \hspace{0.5em}\ProtShort security (this work)%
            };

            \addplot [draw=none,name path=xaxis,domain={1e-8:1e8}] {0};
            \addplot [draw=none,name path=xaxisplus1,domain={1e-8:1e8}] {1};

            \addplot [myParula01Blue,no marks,name path=resiliencebdbandwidth] table [x=lbyC,y=beta] {figures/fig-comparison-bddelay-bdbandwidth-bdbandwidth-newresult.txt};

            \addplot [myParula01Blue,fill opacity=0.2] fill between [of=resiliencebdbandwidth and xaxis];

            \addplot [myparula21,thick,no marks,name path=resilience10] table [x=lbyC,y=beta] {figures/fig-comparison-bddelay-bdbandwidth-bdbandwidth-oldresult-kappa10.txt};
            \addplot [myparula22,thick,no marks,name path=resilience100] table [x=lbyC,y=beta] {figures/fig-comparison-bddelay-bdbandwidth-bdbandwidth-oldresult-kappa100.txt};
            \addplot [myparula23,thick,no marks,name path=resilience1000] table [x=lbyC,y=beta] {figures/fig-comparison-bddelay-bdbandwidth-bdbandwidth-oldresult-kappa1000.txt};

            \addplot [myparula21,fill opacity=0.2] fill between [of=resilience10 and xaxis];
            \addplot [myparula21,fill opacity=0.2] fill between [of=resilience100 and xaxis];
            \addplot [myparula21,fill opacity=0.2] fill between [of=resilience1000 and xaxis];

        \end{axis}
    \end{tikzpicture}%
    \vspace{-0.5em}%
    \caption[]{%
        The region of 
        fraction $\beta$ of adversary nodes and
        block production rate $\lambda$
        where
        PoS NC is secure according to~\cite{bwlimitedposlc}
        (\tikz[x=0.75em,y=0.75em]{ \draw [draw=none,thick,fill=myParula02Orange,fill opacity=0.3] (0,0) rectangle (1,1); })
        shrinks as
        the NC confirmation depth increases, \ie,
        the desired consensus failure probability decreases
        (in order:
        \tikz[x=2em,y=0.75em,baseline=-0.35em]{%
        \draw [myparula21,thick] (0,0) -- (1,0);
        } 
        to
        \tikz[x=2em,y=0.75em,baseline=-0.35em]{%
        \draw [myparula23,thick] (0,0) -- (1,0);
        }).
        Thus, for the PoS NC protocol of~\cite{bwlimitedposlc},
        security requires vanishing throughput.
        In contrast, our new \ProtShort protocol
        achieves a security region (\tikz[x=0.75em,y=0.75em]{ \draw [draw=myParula01Blue,thick,fill=myParula01Blue,fill opacity=0.3] (0,0) rectangle (1,1); }) 
        that is
        independent of
        the desired consensus failure probability.
        Thus, \ProtShort is secure with non-vanishing constant throughput.
        (For all lines, processing capacity is fixed to $C=1$ block/s.)
    }%
    \label{fig:comparison-bddelay-bdbandwidth-pos}%
\end{figure}%

\begin{result}
    We propose and prove the security of a new PoS protocol we call \emph{\ProtMid} (\emph{\ProtShort}), a variant of PoS NC, that is secure in the same region of block production rate $\blkratetime$ and adversary fraction $\beta$ as PoW NC. Thus, similar to PoW NC, security with overwhelming probability requires increasing the confirmation depth, which affects latency, but not decreasing the block production rate, which affects throughput (see \cref{fig:comparison-bddelay-bdbandwidth-pos}).
\end{result}

On a high level, in \ProtShort,
honest nodes establish consensus on PoS lottery tickets for which
they have seen 
equivocations.
The contents of blocks from those equivocating PoS lottery tickets
are then \emph{blanked}, \ie, all blocks from those tickets are treated as empty blocks.
This absolves honest nodes from processing more than one block
per PoS lottery ticket, restoring the non-equivocation
behavior of PoW from a capacity point of view.
From there,
the security proof closely follows that of PoW NC.\footnote{%
The confirmation latency of \ProtShort under bounded capacity scales quadratically with the security parameter, just like PoW NC's latency.}

Blanking block contents
undermines predictability of transaction validity
(\cf \cref{sec:introduction-methods-throughputloss}).
In particular, it is harder to ensure,
at the time of composing a block, whether
transactions
are able to pay their fees.
Many modern consensus protocols share this problem
(\eg, \cite{spiegelman2022bullshark,danezis2022narwhal,dispersedledger,honeybadger,al2019lazyledger}).
We suggest some solutions in \cref{sec:predictablevalidity}.

\subsection{Related Works}
\label{sec:introduction-relatedworks}

Earlier works have analyzed the security of
PoW~\cite{backbone,nakamoto_paper,dem20,pss16,kiffer2018better,ren,tight_bitcoin} and
PoS~\cite{kiayias2017ouroboros,david2018ouroboros,badertscher2018ouroboros,sleepy,snowwhite,dem20,pos_paper} NC
in the bounded-delay model.
Our analysis builds on tools from several of these works, primarily pivots~\cite{sleepy} (Nakamoto blocks~\cite{dem20}) and convergence opportunities~\cite{pss16,sleepy,kiffer2018better} (or similar~\cite{dem20,ren}).
Markov decision processes
were used~\cite{sompolinsky2016bitcoins, gervais2016security} 
to computationally find optimal attack strategies,
assuming honest nodes do not suffer any delay.

Limitations of the bounded-delay model have been observed in previous work~\cite{prism,near-optimal-thruput,bwlimitedposlc}.
To use the bounded-delay model to set the protocol's block production rate, one needs to find the value of the bound $\Delta$.
This is tricky because unlike the capacity limit, which is a physical limit of the hardware used, delay depends on the network load.
One approach is to set the delay to the time taken to process one block, \ie, $\Delta = 1/\bwtime$.
While this may be reasonable at rates much smaller than the capacity (as processing queues are mostly empty), queuing delay breaks this bound otherwise.
In \cite{near-optimal-thruput}, a queuing model is used to 
calculate a delay bound that holds throughout the execution with overwhelming probability.
However, such a tail bound is too pessimistic because the queuing delays cannot always be large, due to limited block production.
In contrast, our finer-grained analysis captures limited block production.
Another work \cite{longest-chain-random-delay} analyzes security in a random (\iid) delay model.
However,
the network load, hence queuing delay, is not purely a random process, but is controlled by the adversary.
Network experiments~\cite{decker,kiffer2021under,revisiting-asynchronous} help estimate the delay distribution but cannot show us the impact of all possible adversary manipulations.

In analytical work \cite{bwlimitedposlc}, the bounded-capacity model captures adversarial manipulations.
Our paper's bounded-capacity model is that of~\cite{bwlimitedposlc}. Our paper differs from~\cite{bwlimitedposlc} two-fold:
(a) Only PoS is studied in \cite{bwlimitedposlc}. Due to equivocations in PoS, the security bounds of \cite{bwlimitedposlc} are too pessimistic for PoW NC. We develop \emph{new analysis machinery} (\cf~\cref{sec:introduction-methods-analysis}) to prove security \emph{for PoW}. Furthermore, the attack in \cite{bwlimitedposlc} does not apply to PoW, while our teasing strategy does.
(b) PoS NC with the freshest-block policy proposed in \cite{bwlimitedposlc} is secure only when its throughput approaches zero. In contrast, our \emph{new PoS protocol}, \ProtShort, is secure \emph{with non-vanishing throughput}.

Concurrently, \cite{dag-pow-bandwidth} analyzes specific congestion-based attacks on PoW DAG protocols but does not provide a security proof against all attacks.
Propagation delays also exacerbate selfish mining strategies~\cite{zhang-slower-block}, and congestion is another way to increase propagation delays.
However, the goal of this work is to analyze \emph{security} under bounded capacity, and selfish mining does not affect the two security properties of consensus: safety and liveness. It affects incentives and fairness, which are orthogonal.

Capacity limits apply not only to downloads but also to execution of transactions and smart contracts.
For instance,
earlier works \cite{brokenmetre} have shown that
execution times can vary by orders of magnitude
between benign 
and maliciously crafted transactions
(with equal gas consumption).
While download and execution are similar in that the time taken increases with the number of transactions, they are different in some aspects.
Execution is harder to parallelize due to transactions that depend on each other.
Methods to parallelize execution of smart contracts are studied in \cite{adding-concurrency-smart-contracts,speculative-concurrency-eth}.
Additionally, executing transactions can be delayed until after confirmation, such as in~\cite{al2019lazyledger,tuxedo}, but delaying downloads could lead to data availability attacks (\cf~\cref{sec:sapos}).

\subsection{Overview of Key Ideas and Methods}
\label{sec:introduction-methods}

\subsubsection{New Analysis Technique}
\label{sec:introduction-methods-analysis}

\begin{figure}[tb]%
    \centering%
    \begin{tikzpicture}[x=2.55cm]%
        \footnotesize

        \begin{scope}
            \node [anchor=west,align=left,xshift=-0.7cm] at (0,1.5) {\small\textbf{(a)\ \ Sleepy analysis \cite{sleepy,bwlimitedposlc}}:};

            \node [align=center] (probarg) at (0,0) {Probabil.\\argument};
            \node [align=center,black!50] (ppivot) at (1.1,0) {Probabil.\\pivot};
            \node [align=center,black!50] (cpivot) at (1.9,0) {Combinat.\\pivot};
            \node [align=center] (security) at (3,0) {Security};
            \node [align=center,black!50] (model) at (1.5,0.75) {Bounded delay model};

            \draw [double equal sign distance,-Implies,shorten >=5pt] (probarg) -- (ppivot) node [midway,above,align=center] {Exists\\one};
            \draw [double equal sign distance,Implies-Implies,black!30] (ppivot) -- (cpivot);
            \draw [double equal sign distance,-Implies,shorten <=3pt] (cpivot) -- (security) node [midway,above,align=center] {Combinat.\\argument};
            \draw [double equal sign distance,-Implies,black!30] (model) -- (1.5,0.1);

            \draw [densely dotted,thick] (0.83,-0.5) rectangle (2.17,1.1);
            \node at (1.5,1.3) {\emph{Pivot}};

        \end{scope}

        \begin{scope}[yshift=-2.5cm]
            \node [anchor=west,align=left,xshift=-0.7cm] at (0,1.25) {\small\textbf{(b)\ \ Our analysis:}};

            \node [align=center,myParula07Red] (probarg) at (0,0) {Probabil.\\argument};
            \node [align=center,myParula07Red] (ppivot) at (0.9,0) {Probabil.\\pivots\\(\emph{\sltpp})};
            \node [align=center,myParula07Red] (cpivot) at (2.1,0) {Combinat.\\pivot\\(\emph{\sltcp})};
            \node [align=center] (security) at (3,0) {Security};
            \node [align=center,myParula07Red] (model) at (1.5,1) {Bounded-capacity model};

            \draw [decorate,decoration={calligraphic brace,mirror,raise=2.3em},thick] ([xshift=3pt]probarg.west) -- ([xshift=-1pt]ppivot.west) node [pos=0.5,below=2.3em,yshift=-3pt] {\cref{lem:many-pps}};

            \draw [decorate,decoration={calligraphic brace,mirror,raise=2.3em},thick] ([xshift=1pt]ppivot.west) -- ([xshift=-1pt]ppivot.east) node [pos=0.5,below=2.3em,yshift=-3pt] {\cref{def:pp}};

            \draw [decorate,decoration={calligraphic brace,mirror,raise=2.3em},thick] ([xshift=1pt]ppivot.east) -- ([xshift=-1pt]cpivot.west) node [pos=0.5,below=2.3em,yshift=-3pt] {\cref{lem:many-pps-one-cps}};

            \draw [decorate,decoration={calligraphic brace,mirror,raise=2.3em},thick] ([xshift=1pt]cpivot.west) -- ([xshift=-1pt]cpivot.east) node [pos=0.5,below=2.3em,yshift=-3pt] {\cref{def:cp}};

            \draw [decorate,decoration={calligraphic brace,mirror,raise=2.3em},thick] ([xshift=1pt]cpivot.east) -- ([xshift=-1pt]security.west) node [pos=0.5,below=2.3em,yshift=-3pt] {\cref{lem:cps-stabilize}};

            \draw [decorate,decoration={calligraphic brace,mirror,raise=2.3em},thick] ([xshift=1pt]security.west) -- ([xshift=-1pt]security.east) node [pos=0.5,below=2.3em,yshift=-3pt,align=center] {\cref{thm:safety-and-liveness-pow}};

            \draw [double equal sign distance,-Implies,myParula07Red] (probarg) -- (ppivot) node [midway,above,align=center] {Exist\\many};
            \draw [double equal sign distance,-Implies,myParula07Red] (ppivot) -- (cpivot) node [midway,above,align=center] {Many to one};
            \draw [double equal sign distance,-Implies] (cpivot) -- (security) node [midway,above,align=center] {Combinat.\\argument};
            \draw [double equal sign distance,-Implies,myParula07Red] (model) -- (1.5,0.4);

        \end{scope}
    \end{tikzpicture}%
    \vspace{-0.5em}%
    \caption[]{%
        (a)~Sleepy analysis~\cite{sleepy} is based on \emph{pivots}.
        Pivots are special \emph{honest} blocks 
        ($\Rightarrow$~liveness)
        which by a combinatorial argument
        remain in the chain forever 
        ($\Rightarrow$~safety),
        and by a probabilistic argument happen frequently.
        Equivalence of the pivot qualities required for 
        each of both arguments
        follows from 
        bounded delay~\cite[Fact 1]{DBLP:journals/iacr/BentovPS16}.
        The bounded-capacity analysis of \cite{bwlimitedposlc} also follows the same procedure by choosing a large enough delay parameter.
        (b)~We (\textcolor{myParula07Red}{red}) decompose
        pivots' probabilistic vs.\ combinatorial qualities
        into \emph{\sltpps} vs.\ \emph{\sltcps}.
        These are no longer equivalent under bounded capacity,
        but of \emph{many} consecutive \sltpps \emph{one} is a \sltcp (new combinatorial argument),
        and \sltpps are abundant (new probabilistic argument).
        \ifshortVersion%
        All references except for \cref{thm:safety-and-liveness-pow} are to the full version~\cite{full-version}.%
        \fi%
    }%
    \label{fig:analysis-comparison-sleepy}%
\end{figure}%

Our key contribution is a new analysis technique for PoW NC
under bounded capacity.
Traditional NC security analysis (\cref{fig:analysis-comparison-sleepy}(a)) is based on the notion of a \emph{pivot}~\cite{sleepy}.
Pivots are special \emph{honest} blocks 
($\Rightarrow$~liveness)
which by a combinatorial argument
remain in the chain forever 
($\Rightarrow$~safety),
and by a probabilistic argument happen frequently.
Safety and liveness of NC with suitable parameters follow swiftly.

Under bounded delay,
the qualities required for the probabilistic and combinatorial argument, respectively,
are equivalent. 
As a result, it has not been 
widely
noted that these properties are 
not identical.
Under bounded capacity,
these properties are no longer equivalent.
Our \textbf{first} conceptual \textbf{contribution} is to
decompose pivots' probabilistic/combinatorial qualities into \emph{\sltpps} and \emph{\sltcps} (\cref{fig:analysis-comparison-sleepy}(b)).
\sltPps are honest block production events where in every time interval around them there are more
honest than adversary block production opportunities (same as pivots in the bounded-delay analysis).
\sltCps are honest block production events where in every time interval around them there are more \emph{chain growth} events than non-chain-growth events (chain growth occurs
only when an honest block is produced \emph{and soon processed} by
honest nodes).

Some \sltpps no longer turn into \sltcps under bounded capacity, because adversary block
release can delay the processing of honestly produced blocks, and thus some honest block production opportunities might not translate to chain growth.
Previous bounded-capacity analysis~\cite{bwlimitedposlc} side-stepped this difference 
by choosing a specific scheduling policy 
and such a low block production rate 
that every \sltpp becomes a \sltcp.
Instead,
our \textbf{second} technical \textbf{contribution} is a combinatorial argument to show that if there is a sufficiently \emph{high density} of \sltpps over a long time interval, then at least one of these \sltpps is typically a \sltcp.
This relies on the adversary's limited budget of blocks it can spam with, and holds for a wide range of scheduling policies (including longest-header-chain and freshest-block~\cite{bwlimitedposlc}).

The original probabilistic argument of~\cite{sleepy}
guarantees only a fairly \emph{low density} of \sltpps.
Proving a high density is challenging because the occurrence of \sltpps are dependent events, so standard Chernoff-style tail bounds are not enough.
Our \textbf{third} technical \textbf{contribution} is to show, by leveraging the weak dependence of \sltpp occurrences, that long time intervals typically have a \emph{high density} of \sltpps.
This completes the analysis for PoW NC.

\subsubsection{\ProtLong}
\label{sec:introduction-methods-pos}

In \ProtShort, every honest node processes at most one out of several equivocations, and instead considers equivocating blocks to be \emph{blank}.
This makes honest nodes immune to the effects of equivocation spamming.
However, we need to ensure that honest nodes can still switch from one chain to another longer chain, both of which might contain different equivocating blocks.
For this, note that headers of two equivocating blocks from the same 
PoS lottery
can serve as a \emph{succinct equivocation proof} to convince other nodes that an equivocation was committed.
Therefore, in \ProtShort, if an honest node sees an equivocation for a block in its longest chain, it publishes an equivocation proof in the block that it produces, which allows all nodes to consistently treat the equivocating block's content as \emph{blank} without processing it.
A caveat so far is that an adversary could reveal an equivocation late and cause inconsistent ledgers across honest nodes and/or time.
To avoid this, we enforce a deadline for how late an equivocation proof can be included in the chain.
Our security proof shows how to parameterize the 
deadline
and the protocol's confirmation depth such that if any honest node has blanked the content of any equivocating block on its longest chain, then an appropriate equivocation proof is timely included on-chain, and all honest nodes blank the block's content before it reaches the output ledger.

\subsubsection{Ensuring Fees Get Paid despite Lack of Predictable Validity}
\label{sec:introduction-methods-throughputloss}

Blanking of blocks in \ProtShort leads to \emph{lack of predictable transaction validity},
\ie, honest nodes do not know whether transactions they include in their block will be valid, since the content of blocks in the prefix may later be blanked due to an equivocation. 
Many modern consensus protocols in which consensus proceeds without executing transactions~\cite{spiegelman2022bullshark,danezis2022narwhal,dispersedledger,honeybadger,al2019lazyledger} also lack predictable transaction validity.
This risks that the adversary gets to spam the ledger with invalid transactions for free.
In one solution to prevent this, we focus on guaranteeing \textit{transaction fees} are always paid regardless of equivocations, by introducing 
 \emph{gas deposit accounts} that can only be used to pay transaction fees.
Any deposit to such an account takes effect only after the deadline has passed for the inclusion of any equivocation proof that might lead to removal of transactions from the deposit's prefix.
This gives honest block producers a lower bound on the account's balance 
which they can use to reliably determine whether a transaction can pay fees.

\section{The \TeaserAttack}
    \label{sec:experiments}

\begin{figure}[tb]
    \centering
    \begin{tikzpicture}[x=1cm,y=0.75cm]
        \scriptsize
        
        \draw [draw=myParula05Green,ultra thick,line width=5pt] (0,0) -- (-1,-0.5) -- (-1,-1.5);
        \draw [draw=myParula05Green,ultra thick,line width=5pt] (-1,-0.7) -- ++(-0.4,-0.3);
        \draw [draw=myParula05Green,ultra thick,line width=5pt] (-1,-1) -- ++(0.4,-0.3);
        \draw [draw=myParula07Red,ultra thick,line width=5pt] (0,0) -- (+1,-0.5) -- (+1,-1.5);
        \draw [dotted,line width=2pt] (0,0.7) -- (0,0) node [circle,fill=black,draw=none,inner sep=1.5pt] {};

        \draw [-latex] (-1,-1.65) -- ++(0,-0.5) node [midway,left,xshift=-2pt] {$\blkratetimeGrowth$};
        \draw [-latex] (+1,-1.65) -- ++(0,-0.5) node [midway,right,xshift=2pt] {$\blkratetimeAdv$};

        \node [align=right,anchor=east,xshift=-7mm] at (-1,-1) {Honest chain};
        \node [align=left,anchor=west,xshift=2mm] at (1,-1) {Adversary chain};
        \node (start) [align=left,anchor=west,xshift=2mm] at (1,0.2) {Longest chain when\\private attack starts};
        \draw [-latex,shorten >=4pt,bend right=20] (start.west) to ([yshift=2pt] 0,0);

        \node [draw=myParula05Green,thick,fill=white,circle,inner sep=1pt] (tx) at ($(0,0)!0.4!(-1,-0.5)$) {$\tx$};

        \draw [latex-,shorten <=2pt,bend right=30] (tx) to ([xshift=-7mm] -1,-0.2) node [align=right,anchor=east] {Transaction to deconfirm};

    \end{tikzpicture}%
    \vspace{-0.5em}%
    \caption{%
        \textbf{Private attack (recap):}
        Based on the tip of the longest chain
        when the private attack starts,
        the adversary mines
        a \emph{private} \textcolor{myParula07Red}{adversary chain},
        while honest nodes jointly grow a \emph{public} \textcolor{myParula05Green}{honest chain}.
        The adversary's goal is to deconfirm
        a transaction $\tx$ included on the honest chain just below
        where the adversary chain forks off.
        Adversary mining is perfectly coordinated
        so that the adversary chain grows at the
        adversary block production rate $\blkratetimeAdv$.
        Honest nodes suffer from forking due to network delay
        so that the honest chain grows at
        a lower rate $\blkratetimeGrowth < \blkratetimeHon$
        than the total block production rate $\blkratetimeHon$
        of honest nodes.
        The attack succeeds if the adversary chain grows faster
        than the honest chain ($\blkratetimeAdv > \blkratetimeGrowth$) and thus,
        irrespective of the confirmation depth $\confDepth$ chosen for NC,
        the adversary
        chain can eventually displace the honest chain 
        as the longest chain and with that deconfirm $\tx$.
    }
    \label{fig:attack-race-concept}
\end{figure}

We begin by exploring a strategy that the attacker can adopt which forces honest nodes to waste capacity on blocks that do not contribute to chain growth. This strategy demonstrates that the well-studied \emph{private attack} \cite{nakamoto_paper,dem20} is not the worst case behavior of the attacker, and that the previously established security bounds of the bounded-delay model do not hold in the bounded-capacity setting. 
We go on to simulate our \teaserattack and to show how it compares to the private attack (summarized in \cref{fig:attack-race-concept}%
). 
Previous analyses concluded that the private attack (\cref{fig:attack-race-concept}) is worst-case
based on the false assumption that delays, and hence the honest chain growth rate, do not depend on whether the adversary releases blocks and causes congestion. We exploit congestion to develop the \teaserattack.

\import{./figures/}{fig-attack-teaser.tex}

\begin{figure}[tb]%
    \centering%
    \begin{tikzpicture}[]
        \footnotesize
        \begin{axis}[
                mysimpleplot,
                xlabel={Capacity: $\bwtime$ [blocks per second]},
                ylabel={Honest chain growth\\rate ($\blkratetimeGrowth/\blkratetimeHon$)},
                xmin=0, xmax=2,
                ymin=0, ymax=0.7,
                height=0.5\linewidth,
                width=\linewidth,
                yticklabel style={
                        /pgf/number format/fixed,
                        /pgf/number format/precision=2
                },
                scaled y ticks=false,
                legend columns=2,
            ]

            \addplot [myparula11, %
                    only marks, mark size=1.5pt] table [x=bandwidth,y=chain_growth] {figures/fig-experiment-teaser-noattacker-data.txt};
            \addlegendentry{No attack or private attack};
            \label{plt:experiment-teaser-noattacker};

            \addplot [myparula22,thin,solid,mark size=1.5pt] table [x=inverse_delay,y=chain_growth] {figures/fig-experiment-growth-delay-data.txt};
            \addlegendentry{No/private attack under bounded-delay $\Delta = 1/\bwtime$};
            \label{plt:experiment-growth-delay};

            \addplot [myparula73, mark size=1.5pt,%
            only marks] table [x=bandwidth,y=chain_growth] {figures/fig-experiment-teaser-activeattacker-data.txt};
            \addlegendentry{\Teaserattack};
            \label{plt:experiment-teaser-attacker};

        \end{axis}
    \end{tikzpicture}%
    \vspace{-0.5em}%
    \caption{%
    Results of a simulation showing that attackers can slow the growth of the honest chain using the \teaserattack. Shown, is the rate of chain growth relative to honest block production rate, when nodes prioritize processing towards the longest header chain, for various capacity limits. When the attacker does not release any blocks (no attack or private attack), we already see $\blkratetimeGrowth < \blkratetimeHon$ due to natural congestion (\ref{plt:experiment-teaser-noattacker}).
    The honest chain growth rate under the private attack is approximately the same for a network with finite processing capacity $\bwtime$ (\ref{plt:experiment-teaser-noattacker}), or for an idealized network with bounded delay $\Delta = 1/\bwtime$ (\ref{plt:experiment-growth-delay}).
    With a \teaserattack, processing is slowed roughly by a factor of $2$, which lowers the growth rate of the chain further (\ref{plt:experiment-teaser-attacker}).
    This lowers security compared to a private attack, \cf \cref{fig:comparison-bddelay-bdbandwidth}.%
    }%
    
    \label{fig:experiment-teaser}%
\end{figure}%

\myparagraph{Description of the \TeaserAttack}
\label{sec:teaser-attack}
The key idea in the \teaserattack (summarized in \cref{fig:attack-teaser}) is that the adversary can strategically time the release of blocks it had mined in order to take up some of the capacity of honest nodes.\footnote{While similar 
to the BDoS attack of \cite{mirkin2020bdos}, we note that while they exploit miner incentives to depress honest mining, our \teaserattack exploits network and processing congestion to attack safety.}
In a nutshell, while the adversary continues to mine a private chain, every time an honest node announces a block at a new height,
the adversary releases the headers of a segment of its longer withheld chain and the contents of only the first block.
Due to \rulelc scheduling, honest nodes prioritize processing blocks on the chain announced by the adversary.
Only after an honest node has processed the first adversary block and realizes that the content for the remaining blocks in the announced adversary chain segment are unavailable, does the \rulelc rule switch back to processing the newly created honest block.
Therefore, the adversary \emph{`teased'} the honest nodes to spend some of their resources processing the adversary chain,  but without actually gaining a longer chain of blocks compared to the chain they already possessed.
The result of this strategy is delayed processing of honest blocks that extend the longest honest chain. Processing is delayed by a factor of $2$ compared to the private attack.
This in turn results in more honest blocks forking, thus slowing down the honest chain growth rate 
(\cref{fig:experiment-teaser}~\ref{plt:experiment-teaser-attacker}) to $\blkratetimeGrowthTeaser < \blkratetimeGrowthSilent$.

\myparagraph{Conditions for success of the \teaserattack}
Formally, in both the private attack and the \teaserattack, the length difference between the adversary chain and the honest chain is a random walk \cite{dem20} which increases
at the rate $\blkratetimeAdv$
and decreases at the rate $\blkratetimeGrowth$.
If $\blkratetimeAdv > \blkratetimeGrowth$, the random walk has a positive drift, so
in the long run, the adversary chain will outgrow the honest chain indefinitely and the attack succeeds.
Conversely, if $\blkratetimeAdv < \blkratetimeGrowth$, the random walk has a negative drift and the attack will eventually fail. 
Thus, $\blkratetimeGrowth$
determines the fraction $\beta$ of total mining power that the adversary needs for the attack, \ie, the attack succeeds if
\begin{IEEEeqnarray}{c}
    \label{eq:chain-growth-rate-beta}
    \beta \triangleq \frac{\blkratetimeAdv}{\blkratetimeAdv+\blkratetimeHon} > \frac{\blkratetimeGrowth}{\blkratetimeGrowth+\blkratetimeHon}.
\end{IEEEeqnarray}
\begin{figure}[tb]%
    \centering%
    \begin{tikzpicture}[]%
        \scriptsize%
        \begin{axis}[
            mysimpleplot,
            grid = both,
            xlabel={Time (seconds)},
            ylabel={Attacker lead (blocks)},
            xmin=0, xmax=500,
            ymin=0, ymax=100, 
            height=0.5\linewidth,
            width=\linewidth,
            yticklabel style={
                    /pgf/number format/fixed,
                    /pgf/number format/precision=2
            },
            scaled y ticks=false,
            legend columns=4,  %
            legend style={
                    xshift=-3mm,
                },
        ]
            
            \addlegendimage{empty legend}
            \addlegendentry{\Teaserattack:};

            \addplot [myparula11, no markers,mark=none] table [col sep=comma, x=time,y=height_delta,restrict expr to domain={\thisrow{beta}}{0.75:0.75}]
            {figures/fig-experiment-teaser-start-data3.txt};
            \addlegendentry{$\blkratetimeAdv=0.75$};
            \label{plt:teaser-strong2};

            \addplot [myparula71, no markers,mark=none] table [col sep=comma, x=time,y=height_delta,restrict expr to domain={\thisrow{beta}}{0.6:0.6}]
            {figures/fig-experiment-teaser-start-data3.txt};
            \addlegendentry{$\blkratetimeAdv=0.60$};
            \label{plt:teaser-medium2};

            \addplot [myparula51, no markers, solid,mark=none] table [col sep=comma, x=time,y=height_delta,restrict expr to domain={\thisrow{beta}}{0.45:0.45}]
            {figures/fig-experiment-teaser-start-data3.txt};
            \addlegendentry{$\blkratetimeAdv=0.45$};
            \label{plt:teaser-weak2};

            \addlegendimage{empty legend}
            \addlegendentry{Private attack:};

            \addplot [myparula11, opacity=0.3, no markers,mark=none] table [col sep=comma, x=time,y=height_delta,restrict expr to domain={\thisrow{beta}}{0.75:0.75}]
            {figures/fig-experiment-private-start-data3.txt};
            \addlegendentry{$\blkratetimeAdv=0.75$};
            \label{plt:private-strong2};

            \addplot [myparula71, opacity=0.3, no markers,mark=none] table [col sep=comma, x=time,y=height_delta,restrict expr to domain={\thisrow{beta}}{0.6:0.6}]
            {figures/fig-experiment-private-start-data3.txt};
            \addlegendentry{$\blkratetimeAdv=0.60$};
            \label{plt:private-medium2};

        \end{axis}%
    \end{tikzpicture}%
    \vspace{-0.5em}%
    \caption{%
    Adversary lead (difference between adversary and honest  chain lengths) under private attack and \teaserattack.
    The simulation consists of $100$ honest nodes with capacity $C=2$ blocks per second, collectively mining $\blkratetimeHon=1$ block per second, and one adversary node with variable mining rate $\blkratetimeAdv$ blocks per second.
    With these parameters, $\blkratetimeGrowthSilent = 0.67$ and $\blkratetimeGrowthTeaser = 0.50$ (from \cref{fig:experiment-teaser}). 
    A weak adversary ($\blkratetimeAdv = 0.45$, \ref{plt:teaser-weak2}) is unable to mine fast enough to gain a lead on the network. A stronger adversary ($\blkratetimeAdv = 0.60$) fails to gain a permanent lead through the private attack (\ref{plt:private-medium2}). But, through the \teaserattack, after repeatedly retrying during the first 200 seconds, eventually manages to maintain a lead (\ref{plt:teaser-medium2}). An even stronger adversary ($\blkratetimeAdv = 0.75$) succeeds almost at once under both strategies (\ref{plt:teaser-strong2},\ref{plt:private-strong2}).
    }%
    \label{fig:experiment-teaser-start}%
\end{figure}%

Note that the \teaserattack requires the adversary to maintain a lead of at least two blocks with respect to the honest chain at all times (to proceed in steps (a), (e) in \cref{fig:attack-teaser}). If this fails, then the adversary must give up and try the attack again.
We show a sample plot of the adversary's lead for different mining rates in \cref{fig:experiment-teaser-start}. For a large enough adversary (if $\blkratetimeAdv > \blkratetimeGrowthSilent$), it is clear the lead has a positive drift and eventually stays positive. However, the \teaserattack succeeds even when the lead has a negative drift initially (\eg for $\blkratetimeGrowthTeaser < \blkratetimeAdv < \blkratetimeGrowthSilent$), as it only needs a random lucky short burst to kickstart step (a). The resulting congestion then decreases the average growth rate of the honest chain to $\blkratetimeGrowthTeaser$, and the adversary with mining power $\blkratetimeAdv > \blkratetimeGrowthTeaser$ can positively bias the random walk, thus eventually maintaining a positive lead, and succeed. We see this process in \cref{fig:experiment-teaser-start}~\ref{plt:teaser-medium2}: the adversary's lead rises and drops to zero a few times, causing the adversary to try again. However, eventually, the adversary manages
to maintain a permanent lead. On the other hand, when $\blkratetimeAdv < \blkratetimeGrowthTeaser$, the adversary's lead has a negative drift even after the congestion effects kick in (\cref{fig:experiment-teaser-start}~\ref{plt:teaser-weak2}), and therefore the \teaserattack is bound to run out of blocks and fail.

With the combined mining rate $\blkratetime \triangleq \blkratetimeHon + \blkratetimeAdv$ of honest nodes and adversary,
and the honest chain growth rates from \cref{fig:experiment-teaser}, we use \eqref{chain-growth-rate-beta} to calculate the adversary fraction $\beta$
required for each attack
and plot it in \cref{fig:comparison-bddelay-bdbandwidth}.

\myparagraph{Simulation details}
We simulate\footnote{Source code: \gitSourceUrl} both the private attack and the \teaserattack
on a network of $100$ nodes.
Honest nodes collectively 
mine
blocks at a rate $\blkratetimeHon = 1$ block per second.
Each node has a limited processing rate of $\bwtime$ blocks per second.
Blocks consist of content (transactions) and a header
(PoW and parent block pointer).
Since the header contains all information necessary to verify the PoW,
nodes only process validly created blocks.
All honest nodes and the adversary can directly send valid block headers to one another.
Given a tree of valid block headers, nodes run the \emph{\rulelc policy}, \ie,
nodes attempt to process (download and verify) the first unprocessed block along the longest header chain.
If the longest chain is already processed, or if the content of any block on that chain is unavailable or invalid, then the rule considers the next longest header chain, and so on.
We further elaborate on the setup and other simulation details in \appendixRef{\cref{sec:attacks-details}}.

\myparagraph{Practical aspects of the \teaserattack}
The \teaserattack may not acutely break specific real-world implementations of PoW NC,
mainly because 
miners have over-provisioned capacity. 
Although the \teaserattack is specific to the \rulelc policy,
it is possible to devise attacks that exploit congestion
even for other policies
(see \appendixRef{\cref{sec:greedy-attack}}).
We also note that in basic PoS NC, the adversary can 
exacerbate the \teaserattack by 
equivocating the whole adversary chain 
every time before it releases a block.
As the attack goes on, the length of the new announced chain increases. This increases the time honest nodes spend processing this chain, and \emph{decelerates} the honest chain growth until it comes to a halt. As a result, 
the chain growth rate under the \PoSteaserattack is nearly zero (details in \appendixRef{\cref{sec:pos-teaser-attack}}).
The key takeaway from the \teaserattack is 
that exploiting congestion results in attacks that are more severe than the private attack, 
even in PoW where the block production is limited, and even when the block production rate is below the capacity of nodes. This invalidates the bounded-delay model's predictions and emphasizes the need for a security analysis under models that capture the effects of congestion, especially for protocols that aim to saturate physical performance limits.

\begin{figure}[tb]%
    \centering%
    \begin{tikzpicture}[]
        \footnotesize
        \begin{axis}[
                mysimpleplot,
                xlabel={Capacity: $\bwtime$ [blocks per second]},
                ylabel={Honest chain growth\\rate ($\blkratetimeGrowth/\blkratetimeHon$)},
                xmin=0, xmax=2,
                ymin=0, ymax=0.7,
                height=0.5\linewidth,
                width=\linewidth,
                yticklabel style={
                        /pgf/number format/fixed,
                        /pgf/number format/precision=2
                },
                scaled y ticks=false,
                legend columns=2,
            ]

            \addplot [myparula11, %
                    only marks, mark size=1.5pt] table [x=bandwidth,y=chain_growth] {figures/fig-experiment-teaser-noattacker-data.txt};
            \addlegendentry{No attack or private attack ($\blkratetimeSPV = 0$)};
            \label{plt:experiment-teaser-spv-noattacker};

            \addplot [myparula73, mark size=1.5pt,%
            only marks] table [x=bandwidth,y=chain_growth] {figures/fig-experiment-teaser-activeattacker-data.txt};
            \addlegendentry{\Teaserattack ($\blkratetimeSPV = 0$)};
            \label{plt:experiment-teaser-spv0-attacker};

            \addplot [
            myparula45, 
            mark size=1.5pt,
            only marks
            ] table [x=bandwidth,y=chain_growth] {figures/fig-experiment-teaser-spv50new-activeattacker-data.txt};
            \addlegendentry{\Teaserattack ($\blkratetimeSPV = 0.5$)};
            \label{plt:experiment-teaser-spv50new-attacker};

        \end{axis}
    \end{tikzpicture}%
    \vspace{-0.5em}%
    \caption{%
    \Teaserattack in the presence of SPV miners, compared with \teaserattack and private attack without SPV miners (same as in \cref{fig:experiment-teaser}). 
    Total mining rate of SPV miners is $\blkratetimeSPV$ blocks per second.
    Total mining rate of honest miners is 
    $\blkratetimeHon = 1$ block per second. 
    SPV miners are not counted as honest. 
    \Teaserattack still succeeds with lower adversary power than private attack
(\cref{fig:experiment-teaser-spv}).
    }%
    
    \label{fig:experiment-teaser-spv}%
\end{figure}%

\myparagraph{Effect of SPV miners}
Rational miners in PoW NC face a \emph{verifier's dilemma}~\cite{verifier-dilemma,tuxedo,demystifying-incentives}:
there is no incentive to download and verify a block's content before mining to extend it.
Some so called \emph{SPV miners}
(named after simple-payment-verification clients who download only block headers)
mine empty blocks without verifying the parent block's content first,
and thus
get more time to mine, increasing their chances of being rewarded for mining the next block.
Since SPV miners are immune to congestion (as they do not process block content), how does their presence affect the \teaserattack?
Under the \teaserattack, SPV miners would mine on the adversary’s longer header chain (red block $3$ in \cref{fig:attack-teaser}(d)) without waiting for its contents. However, the remaining honest miners (who we assume still outnumber the SPV miners) still do not consider this chain valid (due to unavailable content). They continue mining on the honest chain, and would still be slowed down by the \teaserattack just as before.
We added SPV miners to our simulation and verified that the \teaserattack still succeeds with lower adversary power than the private attack
(\cref{fig:experiment-teaser-spv}).
Thus, the qualitative insight from the \teaserattack, that congestion enables worse attacks than the private attack, persists.

\section{Protocol \& Model}
\label{sec:modelprotocol}

We briefly recap Nakamoto consensus (NC)
and the bounded-ca\-pa\-ci\-ty model of~\cite{bwlimitedposlc}.
Detailed pseudocode of the protocol is provided in \appendixRef{\cref{sec:algos-reference-pseudocode}}.
Technical details about the model are provided in \appendixRef{\cref{sec:algos-reference-environment}}.
For ease of exposition, the execution features a \emph{static} set of $N$ \emph{equipotent} \emph{nodes}, each of which runs an independent instance of the protocol.
Temporary crash faults (`sleepiness') of nodes, heterogeneous distribution of hash power,
or difficulty adjustment
are left to be addressed with techniques from~\cite{backbone,sleepy,garay2017bitcoin}.
We are interested in the large system regime $N\to\infty$.
Nodes interact with each other and with the adversary $\Adv$ through an environment $\Env$ that models the network.
$\Adv$ and $\Env$ are summarized below.

\myparagraph{Nakamoto's Longest Chain Consensus Protocol}
For ease of analysis, we consider the protocol 
(pseudocode in \appendixRef{\cref{alg:generic-lc-protocol}})
to proceed in discrete \emph{\timeslots} of duration $\slotduration$.
Consider $\slotduration$ to be a small quantum of time where $\slotduration \to 0$.
At each \timeslot $t$, the protocol 
queries the PoW block production (`mining') oracle
(idealized functionality in \appendixRef{\cref{alg:hdrtree-pow}})
in an attempt to extend the \emph{longest processed chain} $\dC$ in the node's view with a new block 
of
pending transactions $\txs$.
Each block production attempt is committed to a parent block and block content,
and only a single block is produced when the attempt is successful.
Per \timeslot, each node can make one block production attempt that will be successful with probability $\blkrateslot/N$ where $\blkrateslot = \Theta(\slotduration)$, independently of other nodes and \timeslots.
If successful, the node disseminates both the resulting \emph{(block) header} $\Chain'$ and the associated \emph{(block) content} $\txs$ via the environment $\Env$ to all nodes.
Finally, the protocol identifies the $\confDepth$-deep prefix $\dC\trunc{\confDepth}$ containing all but the last $\confDepth$ blocks of $\dC$.
The transactions along $\dC\trunc{\confDepth}$ are concatenated to produce the \emph{output ledger} $\LOG{}{t}$.

When a node $p$ receives a new valid block header $\Chain$ from $\Env$ (push-based header broadcasting), 
then $p$ adds $\Chain$ to its \emph{header tree} $\hT$ 
and relays $\Chain$ to all other nodes via $\Env$.
Throughout the execution, the protocol requests from $\Env$
(pull-based content downloading)
the content 
for
block headers 
decided by a \emph{scheduling policy}.
As a concrete example, we use the \rulelc rule
(pseudocode in \appendixRef{\cref{alg:longest-header-chain-rule}})
in which
a node downloads content for the first block header with unknown content on the longest header chain it sees.
Once a block's content is received and verified by executing its transactions, 
the node makes it available to other nodes via $\Env$, and updates its 
$\dC$.

\myparagraph{Bounded-Capacity Network}
We borrow the bounded-capacity network model of~\cite{bwlimitedposlc} (see \appendixRef{\cref{fig:model}} for an illustration).
In this model, $\Env$ abstracts \emph{push-based flooding of `small' block headers} and \emph{pull-based downloading of `large' block contents} from peers.
Broadcasted block header chains 
are delivered by $\Env$ to every node,
with a per-node per-header delay determined by $\Adv$, up to a commonly known delay upper bound $\DeltaHeader$.
Block content made available for download
is kept by $\Env$ in what can be thought of as a `cloud'.
Nodes can request the content associated with a particular header.
If content matching the header is available, then it is delivered by $\Env$ to the node.
Content download and verification is subject to a per-node capacity constraint of~$\bwtime$.
Blocks have a fixed maximum size, hence $C$ is measured in blocks per second.
See \appendixRef{\cref{sec:algos-reference-environment}} for a more formal description of $\Env$.

The `cloud' captures key properties of pull-based
peer-to-peer 
downloading. At first, content matching a particular header might not be available (\eg, $\Adv$ produced a block and disseminated its header, but withheld its content). Later, such content can become available (\eg, $\Adv$ releases the content to one node). Thus, the `cloud' ensures neither data availability nor strong consistency of query outcomes, unlike stronger primitives such as
verifiable information dispersal
\cite{avid,avidfp,dispersedledger,semiavidpr}. However, once content for a header does become available, it is unique and remains available. This captures the header's binding commitment to the content, and the fact that honest nodes share content 
with peers. Requests for unavailable content do not count towards the processing budget.

Also note that the adversary can push additional headers and contents to nodes at will.
This models non-uniform capacity (higher than the lower bound $C$)
and non-uniform delay (lower than the upper bound $\DeltaHeader$)
across nodes (analogous to adversary delay up to maximum $\Delta$ in the bounded-delay model).

\myparagraph{The Adversary}
The \emph{static} adversary $\Adv$ chooses a set of nodes (up to a fraction $\beta$ of all $N$ nodes, where $\beta$ is common knowledge) to corrupt before the randomness of the execution is drawn and the execution commences. Uncorrupted \emph{honest} nodes follow the protocol at all times. Corrupted \emph{adversary} nodes have arbitrary computationally-bounded \emph{Byzantine} behavior, coordinated by $\Adv$ in an attempt to break consensus.
Among other things, the adversary can:
withhold block headers and contents, or release them late or selectively to honest nodes;
push headers and contents to nodes while bypassing the delay and capacity constraints;
break ties in the chain selection and schdeuling policy.
Note that all miners that deviate from the honest protocol (including crash faults and SPV miners) are modeled as adversary.

\myparagraph{Security}
For an execution of 
PoW NC
where every honest node $p$ at every \timeslot $t$ outputs a ledger $\LOG{p}{t}$, we recall the security desiderata.

\begin{itemize}
    \item \emph{Safety:}
          For all adversary strategies,
          all \timeslots $t,t'$, and 
          all honest nodes $p, q$ (same or different): $\LOG{p}{t}\preceq\LOG{q}{t'}$ or $\LOG{q}{t'}\preceq\LOG{p}{t}$.
    \item \emph{$\Tlive$-Liveness:}
          For all adversary strategies, if a transaction $\tx$ is received by all honest nodes by \timeslot $t$,
          then for every honest node $p$ and for all \timeslots $t' \geq t+\Tlive$, $\tx \in \LOG{p}{t'}$.
\end{itemize}

Note that since blocks have a fixed maximum size, liveness is expected only if transactions are received at a bounded rate. The following definition captures this.

\begin{definition}
\label{def:env-bounded-tx}
    The environment $\Env$ is \emph{$(\tput,\Ttput)$-tx-limited}, if the cumulative size of all transactions received by honest nodes during any interval of $\Ttput$ \timeslots is at most $\tput\cdot\Ttput$ times the maximum block size.
\end{definition}

Liveness will be proved under transaction-limited environments. The parameter $\tput$ is thus the worst-case throughput ($\blkratetime$ being the best-case throughput).
The burstiness of transaction arrival is measured by $\Ttput$; large $\Ttput$ may increase confirmation latency $\Tlive$.

A consensus protocol is \emph{secure over time horizon $\Thorizon$ \timeslots with transaction rate $\tput$} iff for some finite $\Ttput,\Tlive$, for all $(\tput,\Ttput)$-tx-limited environments, it satisfies safety, and $\Tlive$-liveness 
with overwhelming probability\footnote{As is customary, we denote by $\kappa$ the security parameter. 
Event $\Event_{\kappa}$ occurs \emph{with overwhelming probability} if $\Prob{\Event_{\kappa}} \geq 1 - \negl(\kappa)$.
Here, a function $f(\kappa)$ is \emph{negligible} $\negl(\kappa)$, if for all $n>0$, there exists $\kappa_n^*$ such that for all $\kappa > \kappa_n^*$, $f(\kappa) < \frac{1}{\kappa^n}$.} over executions of time horizon $\Thorizon$ \timeslots.
The properties 
can also be redefined in terms of real-time units instead of \timeslots.

\section{Security Proof}
\label{sec:proof}

Due to space constraints, we
focus on the intuition for the proof.
The security theorem for PoW NC is \cref{thm:safety-and-liveness-pow}.
The detailed full proof is provided in \appendixRef{\cref{sec:fullproof}}.

\subsection{Definitions}
\label{sec:proof-definitions}

For any sequence $\{\Xat{k}\}$ and index set $I$,
let $\Xat{I} \triangleq \sum_{k \in I} X_k$.

\myparagraph{Probabilistic Model for PoW NC Executions}
A \emph{block production opportunity} (\BPO) is a pair $(p,t)$ where according
to the PoW block production lottery,
node $p$ is eligible to produce a block in \timeslot $t$.
A \BPO is \emph{honest} (resp.\ \emph{adversary}) if $p$ is honest (resp.\ adversary).
Since $N\to\infty$, and mining power is homogeneous,
honest (resp.\ adversary) \BPOs per slot are Poisson distributed
with parameter $(1-\beta)\rho$ (resp.\ $\beta\rho$).
An \emph{execution} refers to a particular realization of the block production lottery
for all \timeslots.

\myparagraph{\sltGood, \sltBad, and \sltEmpty \Timeslots}
\Timeslots without a \BPO are called \emph{`\sltempty'}.
A \timeslot is \emph{`\sltgood'} iff
it has 
exactly one honest \BPO and no adversary \BPOs,
and is followed by $\goodsep$ \sltempty \timeslots
(inspired by convergence opportunities \cite{pss16,sleepy,kiffer2018better}, loners \cite{dem20}, and laggers \cite{ren}).
Here, $\goodsep$ is an analysis parameter.
We define another analysis parameter $\goodsepbw$
which is related to $\goodsep$ as 
$(\goodsep+1)\slotduration \triangleq \DeltaHeader + \goodsepbw / \bwtime$.
Thus, $\goodsep, \goodsepbw$ are chosen so
that for a \sltgood \timeslot, every honest node can 
receive the block header for the honest \BPO, and
process content for $\goodsepbw$ blocks, before the next \BPO. 
Any non-\sltempty \timeslot which is not \sltgood is called \emph{`\sltbad'}.

We denote by $t_k$ the $k$-th non-\sltempty \timeslot.
Then, we can introduce random processes over \emph{\iindices},
with \iindex $k$ corresponding
to the $k$-th non-\sltempty \timeslot $t_k$.
Considering only \iindices simplifies notation considerably.
The process $\{\Gat{k}\}$  (`$G$' for \emph{good})
counts good \timeslots,
with $\Gat{k} \triangleq \Ind{ \predGood{t_k} }$.
Correspondingly, let $\Bat{k} \triangleq 1 - \Gat{k}$.
The following fact shows the distribution of \sltgood \iindices.
\begin{restatable}{proposition}{RestatePropXiIsIid}
    \label{prop:X_i-is-iid}
    The 
    $\{\Gat{k}\}$ are independent and identically distributed (\iid) with
        $\Prob{\Gat{k} = 1} \triangleq \probGood = \probGoodFormula$.
\end{restatable}
Throughout the analysis, we assume
$\probGood > \frac{1}{2}$ (`honest majority' assumption).

\myparagraph{Some \sltGood \Timeslots Imply Growth}
A special role is played by \sltgood \timeslots $t_k$
with the additional property that 
\emph{the block produced
at $t_k$ is `soon' processed by all honest nodes}.
Intuitively, these lead to \emph{chain growth},
the cornerstone of NC security~\cite{sleepy,dem20}.
We count these \timeslots with $\{\Dat{k}\}$ 
(`$D$' for \emph{downloaded}).
Specifically,
$\Dat{k} \triangleq 1$ if $t_k$ is good
\emph{and} the block produced at $t_k$
has been processed by all honest nodes by the end
of \timeslot $t_k + \goodsep$,
$\Dat{k} \triangleq 0$ otherwise,
and $\Nat{k} \triangleq 1 - \Dat{k}$.
Note that 
$\{\Gat{k}\}$
are \iid, and not affected by adversary action,
while 
$\{\Dat{k}\}$ \emph{do depend}
on the adversary action and are thus in particular
\emph{not} \iidPERIOD.

\myparagraph{Probabilistic and Combinatorial Pivots}
\begin{definition}
    \label{def:pp-informal}
    We call an \iindex $k$ a \emph{\sltpp} (\emph{probabilistic pivot}),
    denoted as $\predPP{k}$, iff
        $\predPP{k} \triangleq  
        (\forall \intvl{i}{j} \ni k\colon  \Gin{i}{j} > \Bin{i}{j})$.%
    \footnote{We denote intervals 
    as $\intvl{i}{j} \triangleq \{i+1,...,j\}$, with
    $\intvl{i}{j} \triangleq \emptyset$ if $j \leq i$.}
\end{definition}
\begin{definition}
    \label{def:cp-informal}
    We call an \iindex $k$ a \emph{\sltcp} (\emph{combinatorial pivot}),
    denoted as $\predCP{k}$, iff
        $\predCP{k} \triangleq 
        (\forall \intvl{i}{j} \ni k\colon  \Din{i}{j} > \Nin{i}{j})$.
\end{definition}
This definition of \sltpps and \sltcps decouples \cite[Def.~5]{sleepy} into its \emph{probabilistic} aspects~\cite[Sec.~5.6.3]{sleepy} and \emph{combinatorial} aspects~\cite[Sec.~5.6.2]{sleepy},
and casts them as conditions
on a random walk,
inspired by~\cite{dem20,close-latency-security-ling-ren}, to simplify the analysis.
The decoupling is one of the key differences from the analysis in \cite{sleepy} (see \cref{fig:analysis-comparison-sleepy}).
Note that a \sltcp is also a \sltpp because 
$\Dat{i} = 1$ implies $\Gat{i} = 1$.

\subsection{Analysis in the Probabilistic Model}
\label{sec:proof-analysis-overview}

We follow \cref{fig:analysis-comparison-sleepy}(b).
First, 
we show (\cref{sec:proof-analysis-cps-stabilize})
that blocks from \sltcps \emph{stabilize}, \ie,
they are in the longest processed chain of all nodes forever
(\cref{lem:cps-stabilize-informal}).
This is useful because
\emph{if we know that \sltcps occur frequently},
then honest nodes can confirm transactions that
must lie in 
the prefix of
a \sltcp's block (\emph{safety}),
and \sltcps' blocks (being produced by honest nodes)
bring any outstanding transactions onto chain (\emph{liveness}).
We then show that \sltcps occur frequently:
We show with a new probabilistic argument
(\cref{sec:proof-analysis-many-pps})
that \emph{\sltpps are abundant}, \ie,
in every `sufficiently long' interval (\ie, of length $\Omega(\kappa^2)$),
a constant fraction of the \timeslots are \sltpps (\cref{lem:many-pps-informal}).
Due to the decoupling of \sltcps and \sltpps, the proof up to this point does not depend on the capacity constraint and the scheduling policy.
Then, we show with a new combinatorial argument
(\cref{sec:proof-analysis-many-pps-one-cps})
that \emph{the adversary cannot prevent all \sltpps from becoming \sltcps},
\ie, in every `sufficiently long' interval,
there is at least one \sltcp (\cref{lem:many-pps-one-cps-informal}).
As a result,
if honest nodes confirm transactions that are still on their longest processed chain
after `sufficiently long' time (\ie, confirmation latency $\Omega(\kappa^2)$),
then PoW NC 
is safe and live under bounded capacity (\cref{sec:proof-pow}).

\subsubsection{Combinatorial Pivots Stabilize}
\label{sec:proof-analysis-cps-stabilize}

We now show that the honest block produced in a \timeslot corresponding to a \sltcp persists in the longest processed chain of all honest nodes forever after $\goodsep$ \timeslots after it was produced.
Towards this, we first show:

\begin{proposition}[\FormalVersion{\appendixRef{\cref{prop:chain-growth}}}]
\label{prop:chain-growth-informal}
At every \iindex $k$ with $\Dat{k} = 1$,
the length of the ``shortest (across honest nodes) longest processed chain'' grows.
\end{proposition}
That is, 
\sltgood \timeslots where
all honest nodes process the produced block
are \emph{chain growth events}.
Due to this and since, by \cref{def:cp-informal}, all intervals around a \sltcp contain more \iindices with $\Dat{k}=1$ than those with $\Dat{k}=0$,
there are not enough blocks for any other chain to outnumber the chain growth events that contributed to the growth of the processed chain containing the \sltcp's block.
Thus, 
we show the following
(proven
analogously to the combinatorial argument of~\cite{sleepy}):

\begin{lemma}[\FormalVersion{\appendixRef{\cref{lem:cps-stabilize}}}]
\label{lem:cps-stabilize-informal}
Let $b^*$ be the block produced in a non-\sltempty \timeslot $t_k$ such that $\predCP{k}$. 
Then, for all \timeslots $t \geq t_k + \goodsep$: $b^*$ is in the longest processed chains of all honest nodes.
\end{lemma}

\subsubsection{Probabilistic Pivots Are Abundant}
\label{sec:proof-analysis-many-pps}

Previous analyses of NC~\cite{sleepy,dem20} show that sufficiently long intervals contain at least \emph{one \sltpp} (\cref{fig:analysis-comparison-sleepy}(a)). This was enough for the bounded-delay analysis because in the bounded-delay setting, every \sltpp is also a \sltcp~\cite[Fact 1]{DBLP:journals/iacr/BentovPS16}. 
However, in the bounded-capacity setting, not every \sltcp is a \sltpp, because not every \sltgood \timeslot results in 
growth of the longest processed chain of honest nodes (\cref{fig:analysis-comparison-sleepy}(b)).
Thus, existence of one \sltpp in every large interval is not enough to conclude existence of one \sltcp in every large interval. 
Instead,
we prove,
using a concentration bound on the number of \sltpps,
that long intervals of \iindices in fact contain
\emph{a number of \sltpps proportional
to the interval length} (\cref{lem:many-pps-informal}).
Then, in \cref{sec:proof-analysis-many-pps-one-cps}, we prove that out of those many \sltpps, at least one must also be a \sltcp, which allows us to continue with the safety and liveness proofs from~\cite{sleepy}.

\begin{figure}[tb]%
    \centering%
    \begin{tikzpicture}[x=58pt,y=0.4cm]
        \scriptsize
        \draw ([xshift=-10pt]0,0) -- (0,0);
        \draw [dashed] ([xshift=116pt]0,0) -- ([xshift=150pt]0,0);
        \draw ([xshift=208pt]0,0) -- ([xshift=218pt]0,0);
        \begin{scope}
            \draw (0,0) -- (0.35,0);
            \draw [densely dotted] (0.35,0) -- (0.75,0);
            \draw (0.75,0) -- (1,0);
            \node [circle,draw=none,fill=myParula01Blue,inner sep=2pt] at (0.05,0) {};
            \node [circle,draw=none,fill=myParula02Orange,inner sep=2pt] at (0.15,0) {};
            \node [circle,draw=none,fill=myParula03Yellow,inner sep=2pt] at (0.25,0) {};
            \node [circle,draw=none,fill=myParula04Purple,inner sep=2pt] at (0.35,0) {};
            \node [circle,draw=none,fill=myParula05Green,inner sep=2pt] at (0.75,0) {};
            \node [circle,draw=none,fill=myParula06LightBlue,inner sep=2pt] at (0.85,0) {};
            \node [circle,draw=none,fill=myParula07Red,inner sep=2pt] at (0.95,0) {};
        \end{scope}
        \begin{scope}[xshift=58pt]
            \draw (0,0) -- (0.35,0);
            \draw [densely dotted] (0.35,0) -- (0.75,0);
            \draw (0.75,0) -- (1,0);
            \node [circle,draw=none,fill=myParula01Blue,inner sep=2pt] at (0.05,0) {};
            \node [circle,draw=none,fill=myParula02Orange,inner sep=2pt] at (0.15,0) {};
            \node [circle,draw=none,fill=myParula03Yellow,inner sep=2pt] at (0.25,0) {};
            \node [circle,draw=none,fill=myParula04Purple,inner sep=2pt] at (0.35,0) {};
            \node [circle,draw=none,fill=myParula05Green,inner sep=2pt] at (0.75,0) {};
            \node [circle,draw=none,fill=myParula06LightBlue,inner sep=2pt] at (0.85,0) {};
            \node [circle,draw=none,fill=myParula07Red,inner sep=2pt] at (0.95,0) {};
        \end{scope}
        \begin{scope}[xshift=150pt]
            \draw (0,0) -- (0.35,0);
            \draw [densely dotted] (0.35,0) -- (0.75,0);
            \draw (0.75,0) -- (1,0);
            \node [circle,draw=none,fill=myParula01Blue,inner sep=2pt] at (0.05,0) {};
            \node [circle,draw=none,fill=myParula02Orange,inner sep=2pt] at (0.15,0) {};
            \node [circle,draw=none,fill=myParula03Yellow,inner sep=2pt] at (0.25,0) {};
            \node [circle,draw=none,fill=myParula04Purple,inner sep=2pt] at (0.35,0) {};
            \node [circle,draw=none,fill=myParula05Green,inner sep=2pt] at (0.75,0) {};
            \node [circle,draw=none,fill=myParula06LightBlue,inner sep=2pt] at (0.85,0) {};
            \node [circle,draw=none,fill=myParula07Red,inner sep=2pt] at (0.95,0) {};
        \end{scope}
        \draw ([yshift=3pt]0,0) -- (0,2);
        \draw ([yshift=3pt]1,0) -- (1,1);
        \draw ([yshift=3pt]2,0) -- (2,1);
        \draw ([yshift=3pt,xshift=34pt]2,0) -- ([xshift=34pt]2,1);
        \draw ([xshift=34pt,yshift=3pt]3,0) -- ([xshift=34pt]3,2);
        \draw [latex-latex] ([yshift=-3pt]0,1) -- ([yshift=-3pt]1,1) node [midway,fill=white] {$2 K_1$};
        \draw [latex-latex] ([yshift=-3pt]1,1) -- ([yshift=-3pt]2,1) node [midway,fill=white] {$2 K_1$};
        \draw [latex-latex] ([yshift=-3pt,xshift=34pt]2,1) -- ([yshift=-3pt,xshift=34pt]3,1) node [midway,fill=white] {$2 K_1$};
        \draw [latex-latex] ([yshift=-3pt]0,2) -- ([xshift=34pt,yshift=-3pt]3,2) node [midway,fill=white] {$2 K_1 K_2$};
    \end{tikzpicture}
    \caption{%
        An illustration for the proof of abundance of \sltpps (\appendixRef{\cref{prop:lower-tailbound-ppivots}}). Given a long interval of size $2K_1K_2$, we partition it into $K_2$ intervals of size $2K_1$ each, and we group the \iindices as indicated by different colors. All \iindices of the same color are at least $2K_1$ apart, so that intervals of size at most $K_1$ surrounding two \iindices from the same group are disjoint, and hence the corresponding \sltpp conditions are independent (conditioned on the fact that the \sltpp condition holds for all long intervals).
    }%
    \label{fig:ppivot-tailbound-illustration}%
\end{figure}%

The key challenge in proving that there are many \sltpps is that for two \iindices $k_1, k_2$, the events that $k_1$ is a \sltpp and that $k_2$ is a \sltpp are dependent, because both events depend on overlapping intervals. But a key observation is that since the \sltpp condition
(\cref{def:pp-informal})
already holds for large intervals with high probability (\appendixRef{\cref{prop:lower-tailbound-X}}), we only need to look at the small intervals. Then, for two \iindices $k_1,k_2$ that are sufficiently far apart, these short intervals are disjoint, and thus the corresponding \sltpp conditions are independent. Therefore, we decompose a long interval of \iindices into several groups of far-apart \iindices. This is illustrated in \cref{fig:ppivot-tailbound-illustration}, each group indicated by a different color. Within each group, by a concentration bound for \iid random variables, there are many \sltpps. Further, by a union bound, the concentration holds in all the groups simultaneously with high probability (\appendixRef{\cref{prop:lower-tailbound-ppivots}}). Using this, we show:

\begin{lemma}[\FormalVersion{\appendixRef{\cref{lem:many-pps}}}]
    \label{lem:many-pps-informal}
    For $\Kcp = \Omega(\kappa^2)$,
    with overwhelming probability,
    in every interval of size at least $\Kcp$,
    at least $(1-\delta)\probPP$ fraction of the \iindices in the interval are \sltpps.
\end{lemma}

\subsubsection{Many Probabilistic Pivots Imply One Combinatorial Pivot}
\label{sec:proof-analysis-many-pps-one-cps}

The \rulelc rule $\dlrulelong$
(\appendixRef{\cref{alg:longest-header-chain-rule}})
has a few useful properties.
Nodes using this rule 
\begin{enumerate}[(P1),leftmargin=3em]
    \item \label{item:good-download-rule-no-repeat}
    process a \BPO's block's content at most once,
    \item \label{item:good-download-rule-honest-block} either process the most recent honest block, or fully utilize their capacity to process other blocks (\ie, do not stay idle), and
    \item \label{item:good-download-rule-cutoff} prioritize blocks that were produced `recently'.
\end{enumerate}
\ref{item:good-download-rule-no-repeat} holds by construction.
\ref{item:good-download-rule-honest-block} holds because 
this rule
is never idle, and 
will always process towards an honest block
when it has processed all longer chains
and there is capacity remaining.
Moreover, we expect that in a secure execution,
\ref{item:good-download-rule-cutoff} holds because
a node's longest header chain cannot fork off too much from its longest processed chain.
More precisely, due to \cref{lem:cps-stabilize-informal},
any longest header chain in any honest node's view must extend the block produced in the most recent \sltcp, and therefore blocks with the highest process priority must have been produced after the most recent \sltcp.
Thus, if the adversary wants to prevent honest nodes from processing the block produced at a \sltgood \iindex $k$,
so that $G_k = 1$ but $D_k = 0$, 
then it can only ``distract'' them by providing $\goodsepbw$ blocks
produced after the most recent \sltcp
(\cref{prop:download-or-spend-budget}).

While we subsequently use 
$\dlrulelong$
as a concrete example,
the proofs only use \ref{item:good-download-rule-no-repeat}, \ref{item:good-download-rule-honest-block}, \ref{item:good-download-rule-cutoff},
and thus apply to several other simple scheduling policies, including the freshest-block rule 
of~\cite{bwlimitedposlc}.

\begin{restatable}{proposition}{RestatePropDownloadOrSpendBudget}
    \label{prop:download-or-spend-budget}
    If $G_k = 1$ and $D_k = 0$, then during \timeslots $[t_k, t_k + \goodsep]$, all honest nodes 
    using the \rulelc scheduling policy
    process content of at least $\goodsepbw$ blocks that are produced in $\intvl{i}{k}$, where $i<k$ is the largest index such that $\predCP{i}$ (if such an $i$ does not exist, $i=0$).
\end{restatable}

\begin{proof}
In \timeslot $t_k$, there is exactly one block $b$ produced by an honest node, 
the block header is made public at the beginning of the \timeslot,
and is seen by all honest nodes within $\DeltaHeader$ time.
Thereafter, each node has enough time to process $\goodsepbw$ blocks during \timeslots $[t_k, t_k + \goodsep]$.

Under the \rulelc scheduling policy,
if $\Dat{k} = 0$,
\ie an honest node did not process content for the block $b$ before the end of \timeslot $t_k + \goodsep$,
then
that honest node must process the content for at least $\goodsepbw$ blocks on chains longer than the height of the block $b$ or in the prefix of the block $b$.
Since honest nodes produce blocks extending their longest chain, $b$ extends the longest processed chain of some honest node at \timeslot $t_k - 1$.
Let $b^*$ be the block produced in \timeslot $t_i$ where $\predCP{i}$ (suppose $i$ exists).
$\predCP{i} \implies \Yat{i} = 1$, therefore this block is unique, and also $t_k > t_i + \goodsep$.
Due to \cref{lem:cps-stabilize-informal}, any valid header chain longer than $b$ (which is some node's longest processed chain) at time slot $t_k$ must contain $b^*$.
Therefore, the only blocks
that are processed by an honest node during \timeslots $[t_k, t_k + \goodsep]$
\begin{enumerate}
    \item must be produced after $t_i$ because they extend $b^*$, and
    \item must be produced no later than $t_k$ because there are no blocks produced in $\intvl{t_k}{t_k+\goodsep}$.
\end{enumerate}
In case a \sltcp $i<k$ does not exist, the claim is trivial.
\end{proof}

\begin{figure}[tb]
    \centering
    \begin{tikzpicture}[x=1.5em]
        \scriptsize
    
        \begin{scope}[]
            
            \node at (-2,0) {\textbf{(a)}};
            
            \node at (0,0) {$-$};
            \node at (1,0) {$+$};
            \node at (2,0) {$+$};
            \node at (3,0) {$+$};
            \node at (4,0) {$-$};
            \node at (5,0) {$+$};
            \node at (6,0) {$+$};
            \node at (7,0) {$+$};
            \node at (8,0) {$-$};
            \node at (9,0) {$+$};
            \node at (10,0) {$+$};
            \node at (11,0) {$+$};
            \node at (12,0) {$-$};

            \draw [densely dotted] (-0.5,0) ++ (0,-0.2) -- ++(0,0.4) node [above] {$0$};
            \draw [densely dotted] (12.5,0) ++ (0,-0.2) -- ++(0,0.4) node [above] {$\Kcp$};

            \node [circle,draw=myParula01Blue,ultra thick,inner sep=4pt] at (2,0) {};
            \node [circle,draw=myParula01Blue,ultra thick,inner sep=4pt] at (6,0) {};
            \node [circle,draw=myParula01Blue,ultra thick,inner sep=4pt] at (10,0) {};
        
        \end{scope}
    
        \begin{scope}[yshift=-2em]
            
            \node at (-2,0) {\textbf{(b)}};
            
            \node at (0,0) {$-$};
            \node at (1,0) {$+$};
            \node at (2,0) {$+$};
            \node at (3,0) {$+$};
            \node at (4,0) {$-$};
            \node at (5,0) {$+$};
            \node at (6,0) {$+$};
            \node at (7,0) {$+$};
            \node at (8,0) {$-$};
            \node at (9,0) {$+$};
            \node at (10,0) {$+$};
            \node at (11,0) {$+$};
            \node at (12,0) {$-$};

            \draw [densely dotted] (-0.5,0) ++ (0,-0.2) -- ++(0,0.4);
            \draw [densely dotted] (12.5,0) ++ (0,-0.2) -- ++(0,0.4);

            \node [circle,draw=myParula01Blue,ultra thick,inner sep=4pt] at (2,0) {};
            \node [circle,draw=myParula01Blue,ultra thick,inner sep=4pt] at (6,0) {};
            \node [circle,draw=myParula01Blue,ultra thick,inner sep=4pt] at (10,0) {};

            \draw [draw=myParula07Red,ultra thick] (5,0) ++(-0.3,-0.2) -- ++(0.6,0.4);
            \draw [draw=myParula07Red,ultra thick] (5,0) ++(-0.3,0.2) -- ++(0.6,-0.4);

            \draw [draw=myParula07Red,dashed,rounded corners] (1.4,-0.25) rectangle (5.4,0.25);
            \draw [draw=myParula07Red,dashed,rounded corners] (4.6,-0.35) rectangle (6.6,0.35);
        
        \end{scope}
        
    \end{tikzpicture}%
    \caption[]{%
    (a) Example realization of \BPOs
    in $\intvl{0}{\Kcp}$,
    with \sltgood ($+$) and \sltbad ($-$) \iindices,
    and resulting \sltpps
    (\tikz[x=0.75em,y=0.75em]{ \node [circle,draw=myParula01Blue,thick,inner sep=1.8pt] at (0,0) {}; }).
    (b) To prevent a \sltpp from being a \sltcp,
    the adversary needs to prevent timely processing
    (\tikz[x=0.6em,y=0.6em]{ \draw [draw=myParula07Red,thick] (0,0) ++(-0.5,-0.5) -- ++(1,1); \draw [draw=myParula07Red,thick] (0,0) ++(0.5,-0.5) -- ++(-1,1); })
    of some blocks produced at \sltgood \iindices,
    so that around the respective \sltpp
    there is an interval 
    (\tikz[x=0.7em,y=0.7em]{ \draw [draw=myParula07Red,dashed] (-0.5,-0.5) rectangle (0.5,0.5); })
    in which the \sltcp condition (\cf \cref{def:cp-informal}) is violated.
    Here, the first two \sltpps are not \sltcps.
    To prevent timely processing of a block from a \sltgood
    \iindex, the adversary must `spend' $\goodsepbw$ blocks.
    Once the adversary runs out of blocks, a \sltpp remains a \sltcp
    (here the third \sltpp).
    }
    \label{fig:ppivot-cpivot-intuitive}
\end{figure}

Given the above properties of the scheduling policy, we now want to show that \sltcps occur once in a while.
\Cref{fig:ppivot-cpivot-intuitive} illustrates the key argument for this.
To start, let us show that there is at least one \sltcp in $\intvl{0}{\Kcp}$.
From \cref{lem:many-pps-informal}, there are many \sltpps in $\intvl{0}{\Kcp}$.
If there were no \sltcps in $\intvl{0}{\Kcp}$, then the adversary must prevent each \sltpp from turning into a \sltcp.
We know that in any interval around a \sltpp, 
\sltgood \iindices outnumber \sltbad \iindices by a margin 
proportional to the interval size
(\appendixRef{\cref{prop:lower-tailbound-X}}, see top row in \cref{fig:ppivot-cpivot-intuitive}).
Therefore, for a \sltpp to not be a \sltcp, the adversary must prevent an honest node from processing the most recent honest block in several of these \sltgood \iindices (so that the corresponding $\Gat{k}=1$ \iindices have $\Dat{k}=0$).
\Cref{fig:ppivot-cpivot-intuitive} shows an example where the adversary prevented processing of the honest block in one \sltgood \iindex, and as a result, two of the \sltpps fail to become a \sltcp.
From \cref{prop:download-or-spend-budget}, for each such \iindex, the adversary must `spend' at least $\goodsepbw$ blocks that the honest node processes.
These blocks come from a `limited budget'. 
In \appendixRef{\cref{lem:one-cp-induction-base}}, through a combinatorial argument, we show that
this `budget' falls short of the number of blocks required to overthrow all \sltcps. Thus, there must be at least one \sltcp in $\intvl{0}{\Kcp}$.
Next, we would like to show that there is at least one \sltcp in $\intvl{m\Kcp}{(m+1)\Kcp}$ for all $m \geq 0$ (by induction, where 
we just saw the base case $m=0$).
Here,
one may be concerned that
the adversary could save up many blocks from the past and attempt to make honest nodes process these blocks at a particular target \timeslot $t_k$.
But, given that one \sltcp occurred in $\intvl{(m-1)\Kcp}{m\Kcp}$ (by induction hypothesis), \cref{prop:download-or-spend-budget} ensures that honest nodes will only process blocks that are produced after $(m-1)\Kcp$.
This allows us to bound the `budget' of blocks that the adversary can use to prevent \sltpps from becoming \sltcps, and 
we can complete the induction and conclude:
\begin{lemma}[\FormalVersion{\appendixRef{\cref{lem:many-pps-one-cps}}}]
\label{lem:many-pps-one-cps-informal}
    If honest nodes use the \rulelc scheduling policy,
    and in every interval of size at least $\Kcp$,
    at least a certain fraction of \BPOs are \sltpps
    (which holds for 
    $\blkrateslot, \blkratetime$ chosen as a function of the
    model and analysis parameters, as per \eqref{pow-max-tp}),
    then for all $m \geq 0$,
    the interval $\intvl{m\Kcp}{(m+1)\Kcp}$ has at least one \sltcp.
    It follows that any arbitrary interval of length $2\Kcp$ contains at least one \sltcp.
\end{lemma}

\subsection{Security of Proof-of-Work Nakamoto Consensus}
\label{sec:proof-pow}

From \cref{lem:many-pps-informal,lem:many-pps-one-cps-informal}, we conclude that for suitable $\blkratetime$,
with overwhelming probability, \sltcps occur
in every $2\Kcp$-interval.
This allows us, together with \cref{lem:cps-stabilize-informal} (\sltcps stabilize), to prove safety and liveness of the protocol for a confirmation depth $\confDepth=\Theta(\Kcp)$.
The key arguments are in the proof of the following lemma.

\begin{lemma}
    \label{lem:safety-and-liveness-comb-pow}
    If for some $\Kcp > 0$,
    \begin{IEEEeqnarray}{C}
        \label{eq:condition-one-cp-m-pow-safety}
        \forall k \colon \exists k^* \in \intvl{k}{k+2\Kcp} \colon \predCP{k^*},
        \IEEEeqnarraynumspace
    \end{IEEEeqnarray}
    then the PoW Nakamoto consensus protocol $\protocol$ with $\confDepth = 2\Kcp + 1$ satisfies safety.
    Further, if the environment is $((\frac{1}{2}-\beta)\blkrateslot, 2\Kcp/\blkrateslot)$-tx-limited, 
    then $\protocol$ also satisfies 
    liveness with 
    $\Tlive = \Theta(\Kcp)$.
\end{lemma}
\begin{proof}
    \emph{Safety:}
    Denote the longest processed chain of node $p$ at \timeslot $t$ as $\dC_p(t)$ and its $\confDepth$-deep prefix as $\dC_p(t)\trunc{\confDepth}$.
    For an arbitrary \timeslot $t$, let $k$ be the largest \iindex such that $t_k \leq t$.
    From \cref{lem:many-pps-one-cps-informal}, every interval of $2\Kcp$ \iindices contains at least one \sltcp. Therefore, there exists $k^* \in \intvl{k-2\Kcp-1}{k-1}$ such that $\predCP{k^*}$.
    Let $b^*$ be the block from \iindex $k^*$.
    Due to \cref{lem:cps-stabilize-informal}, for all honest nodes $p,q$ and $t' \geq t$,
    $b^* \in \dC_p(t)$ and $b^* \in \dC_q(t')$.
    But $k^* \geq k-\confDepth$, so the block $b^*$ cannot be $\confDepth$-deep in any chain at \timeslot $t$. Therefore, $\LOG{p}{t}$ is a prefix of $b^*$ which in turn is a prefix of $\dC_q(t')$.
    We can thus conclude that
    either 
    $\LOG{p}{t} \preceq \LOG{q}{t'}$ or $\LOG{q}{t'} \preceq \LOG{p}{t}$.
    Therefore, safety holds.

    \emph{Liveness (proof sketch, details in \appendixRef{\cref{sec:fullproof-pow}}):}
    Recall that since \iindices count \timeslots with block production, $T$ \timeslots corresponds to roughly $\blkrateslot T$ \iindices.
    Again let $k$ be the largest \iindex such that $t_k \leq t$.
    We will first prove that all transactions received between \iindices $k-2\Kcp$ and $k$, which are of total size at most $(1-2\beta)\Kcp$ as per the tx-limited environment, will be added to the longest processed chains of all nodes by \iindex $k + 2\Kcp$.
    We know that there exists $k^* \in (k,k+2\Kcp]$ such that $\predCP{k^*}$.
    Since $k^*$ is a \sltcp, there are more \iindices $j$ with $\Dat{j}=1$ than \iindices with $\Dat{j} = 0$ in the interval $\intvl{k}{k + 2\Kcp}$ (by \cref{def:cp-informal}).
    Since each \iindex with $\Dat{j}=1$ leads to chain growth, every honest node's longest processed chain grows by at least $\Kcp$ between \iindices $k$ and $k + 2\Kcp$.
    There are at most $\beta \cdot 2\Kcp$ adversary block productions in the interval $\intvl{k}{k+2\Kcp}$, hence every honest node's longest processed chain grows by at least $\Kcp - 2\beta\Kcp$ \emph{honest blocks}.
    These honest blocks will include pending transactions, whose size is at most $(1-2\beta)\Kcp$.
    Moreover, in the interval $\intvl{k+2\Kcp}{k+2\Kcp+2\confDepth}$, every honest node's longest processed chain grows by at least $\confDepth$.
    Thus, the newly added transactions are $\confDepth$-deep, hence confirmed, by all nodes by \iindex $k + 2\Kcp + 2\confDepth$, which is a latency of $\Tlive = \frac{6\Kcp+2}{\blkrateslot}$ \timeslots.   
\end{proof}

Subsequently, we take $\slotduration \to 0$ and $\blkratetime \triangleq \blkrateslot/\slotduration$ in order to model PoW accurately. 
Finally, since $\goodsepbw, \goodsep$ were analysis parameters chosen arbitrarily, we maximize over these parameters to find the best possible security--performance tradeoff (\cref{thm:safety-and-liveness-pow}).
The result is plotted for $\DeltaHeader \approx 0$ (reasonable approximation for large block content relative to headers) in \cref{fig:comparison-bddelay-bdbandwidth}.

\begin{restatable}{theorem}{RestateThmSafetyAndLivenessPow}
\label{thm:safety-and-liveness-pow}
For all $\beta < 1/2$,
$\blkratetime > 0$,
such that
\begin{IEEEeqnarray}{C}
\label{eq:pow-max-tp}
    \blkratetime < \max_{\goodsepbw} \frac{1}{\DeltaHeader + \goodsepbw/\bwtime} \ln\left( \frac{2(1-\beta)\goodsepbw}{\goodsepbw+4 + \sqrt{8\goodsepbw+16}} \right),
    \IEEEeqnarraynumspace
\end{IEEEeqnarray}
the PoW Nakamoto consensus protocol 
with 
the \rulelc scheduling policy,
$\slotduration \to 0$,
$\blkrateslot = \blkratetime \slotduration$,
and
$\confDepth = \Theta(\kappa^2)$
is secure
with 
transaction rate $(\frac{1}{2}-\beta)\blkratetime$,
confirmation latency
$\Theta(\kappa^2)$ 
over a time horizon of 
$\Thorizon = \poly(\kappa)$.
\end{restatable}

\Cref{thm:safety-and-liveness-pow} is proved in \appendixRef{\cref{sec:fullproof-pow}}.

\section{Proof-of-Stake Nakamoto Consensus}

Nakamoto consensus has been adapted to proof-of-stake in protocols of the Ouroboros~\cite{kiayias2017ouroboros,david2018ouroboros,badertscher2018ouroboros} and Sleepy Consensus~\cite{sleepy,snowwhite} families.
The protocol is identical to what was described in \cref{sec:modelprotocol} and formalized in \appendixRef{\cref{alg:generic-lc-protocol}}, except for a few key differences.
The block production oracle for proof-of-stake (idealized in \appendixRef{\cref{alg:hdrtree-pos}}) behaves differently.
As in PoW, each node can make one block production attempt per \timeslot that will be successful with probability $\blkrateslot/N$, independently of other nodes and \timeslots%
\footnote{There may be multiple blocks in one \timeslot, as in
the Ouroboros~\cite{kiayias2017ouroboros,david2018ouroboros,badertscher2018ouroboros} and Sleepy Consensus~\cite{sleepy,snowwhite} protocols.}%
, modeling uniform stake.
In PoS, however, (even past) block production opportunities can be `reused' to produce multiple blocks with different parents and/or content, \ie, to equivocate. 

\subsection{\ProtLong (\ProtShort)}
\label{sec:sapos}

In the classic bounded-delay analysis, the tradeoff between 
$\beta$ and $\lambda$
is the same for 
PoW and PoS NC~\cite{dem20,tight_bitcoin},
because, conceptually,
NC security depends only on 
a 
race between
the honest chain and adversary chains.
Even in PoS,
the adversary cannot 
use
equivocations
to boost
its chain growth rate,
because blocks within one chain must be from strictly increasing \timeslots, \ie, different \BPOs.
Under bounded capacity, however, as observed in~\cite{bwlimitedposlc}, honest nodes may waste their limited capacity processing 
equivocations
rather than staying up-to-date with the longest chain.
Thus, blocks they produce may not contribute to honest chain growth.
As a result,
the honest chain growth rate decreases, 
and with it
PoS NC security~\cite{bwlimitedposlc} (compared to PoW).
The key idea in \ProtShort is to modify the scheduling policy of PoS NC 
such that per \BPO at most one block is processed.
This restores the one assumption of the bounded-capacity PoW NC analysis (\cref{sec:proof-analysis-many-pps-one-cps}~(P1))
that was previously violated in PoS NC due to equivocations.
With the modification of \ProtShort,
the analysis from \cref{sec:proof} carries over to PoS.
One 
may consider 
this
alternative:
defer content processing until after consensus has been reached on a header chain.
This, however, requires
ensuring
that the contents belonging to headers
will be available for download.
Sampling-based approaches~\cite{DBLP:conf/fc/Al-BassamSBK21}
to check \emph{data availability}
come with various challenges~\cite{paradigmdas}
and
VID-based
approaches~\cite{dispersedledger,poar}   %
do not
scale to the large $N$ 
found in PoS NC.

\subsubsection{Protocol}
\label{sec:sapos_protocol}

\ProtShort is PoS NC (\cf\cite{kiayias2017ouroboros,sleepy}),
with the following modifications.

\myparagraph{The Scheduling Policy in \ProtShort}
\ProtShort uses
any scheduling policy that is secure for PoW NC (such as \rulelc),
modified as follows:
a node does not process content for a header
(denoted by the corresponding header chain $\Chain$)
if it has seen 
another equivocating header
from the same \BPO as $\Chain$.
Instead,
the node \emph{pretends} that content was ``processed'' and sets it to be empty.
The node can then continue processing content for headers that extend $\Chain$, and these blocks will be candidates for the node's longest processed chain.
With only the above scheduling policy,
one honest node may process
the real
content for a header while another 
may 
set it to be empty
(depending on when each node saw an equivocating header).
In order to output a consistent 
transaction ledger,
reaching consensus on the header chain is 
no longer
enough.
Instead, we ensure that honest nodes also agree on which blocks had an equivocation, through equivocation proofs, 
so that they can consistently \emph{blank} their contents.

\myparagraph{Equivocation Proofs}
An equivocation proof consists of two headers $\Chain, \Chain'$ from the same \BPO.
Whenever a node produces a new block header extending its longest processed chain,
it 
includes an equivocation proof
for any header $\Chain$ among the last $\keqproof$ headers
(on the new block's prefix)
for which it has seen an equivocating header $\Chain'$
but no equivocation proof 
was
recorded on chain yet.

\myparagraph{Equivocation Proof Deadline}
The deadline $\keqproof$ for adding equivocation proofs 
ensures
that 
the adversary cannot 
use equivocations or equivocation proofs
to make honest nodes blank the content of an old block
whose transactions they have already confirmed.
A header $\Chain$ is thus \emph{invalid}
if it contains an equivocation proof against 
a block 
that is 
not within $\keqproof$ blocks above $\Chain$.

\myparagraph{Ledger Construction in \ProtShort}
At the end of each \timeslot,
each node confirms 
all blocks on its longest processed chain that 
are $\confDepth$-deep,
except it blanks
the contents of 
blocks
against which there is an equivocation proof
on chain.

\subsubsection{Security Proof}
\label{sec:security}

The scheduling policy of \ProtShort ensures that,
just like in PoW NC,
honest nodes process at most one block per \BPO.
This eliminates additional block processing delays caused by equivocations, allowing the honest chain growth rate
to match that of PoW NC.
Given this, the security proof of PoW NC in \cref{sec:proof}
can be adapted to \ProtShort to 
show that
the $\confDepth$-deep \emph{header chains} of all nodes are consistent.

To ensure that their \emph{ledgers} are consistent, and complete the security proof, we need two more steps.
First, liveness of \ProtShort follows easily because the contents of blocks produced by honest nodes will never be blanked.
Second, for safety,
we show (a) honest nodes have processed the content for all blocks against which there is no equivocation proof on chain (these blocks must not be blanked), and (b) honest nodes blank content in their ledger consistently, that is, any honest node blanks the contents of a block in its ledger iff all honest nodes do so.
We prove (a) and (b) in \cref{thm:safety-and-liveness-pos} by choosing appropriate values for $\confDepth$ and $\keqproof$.

Since the analysis of PoW NC from 
\cref{sec:proof} (details in \appendixRef{\cref{sec:fullproof}}) applies to \ProtShort as well, \cref{lem:cps-stabilize-informal} (\sltcps stabilize the longest processed chains of all nodes) and \cref{lem:many-pps-one-cps-informal} (\sltcps recur) hold for \ProtShort.
Thus, \eqref{pow-max-tp} also determines the parameters under which \ProtShort is secure, \ie,
the security--throughput tradeoff of \ProtShort.%
\footnote{Technically, since PoS protocols run in \timeslots of fixed duration, unlike PoW, $\slotduration$ must match the \timeslot duration. If $\slotduration$ is small relative to the block production and processing times (such as $1$ second in Cardano), we can still use the approximation $\slotduration \to 0$, just like in PoW. We calculate the parameters for general $\slotduration$ in \appendixRef{\cref{sec:fullproof-analysis-many-pps-one-cps}}.}
In \cref{fig:comparison-bddelay-bdbandwidth-pos},
we plot the solutions of \eqref{pow-max-tp} with 
$\bwtime = 1$ and
$\DeltaHeader \approx 0$ (approximation for block content much larger than headers).
Since \eqref{pow-max-tp} does not depend on $\kappa$, for any given $\beta$, the block rate $\lambda$ is non-vanishing.
Only latency scales with $\kappa$, similar to PoW NC.

In both \cref{thm:safety-and-liveness-pos} 
(for \ProtShort) and \cref{thm:safety-and-liveness-pow} (for PoW NC), we prove an upper bound on the confirmation latency that scales with the security parameter $\kappa$ as $O(\kappa^2)$.
Concretely, our bound on \ProtShort's latency (\cref{thm:safety-and-liveness-pos}) is $3\times$ our bound for PoW NC (\cref{lem:safety-and-liveness-comb-pow}).

\begin{theorem}
\label{thm:safety-and-liveness-pos}
For all $\beta < 1/2$,
$\bwtime$, $\DeltaHeader$, $\blkrateslot$, $\slotduration$
satisfying \eqref{pow-max-tp},
there exists $\keqproof, \confDepth = \Theta(\kappa^2)$
such that the \emph{\ProtShort protocol
is secure}
with
transaction rate $(1 - 2\beta)\blkratetime$,
confirmation latency $\Tlive = \Theta(\kappa^2)$ \timeslots
over a time horizon of 
$\Thorizon = \poly(\kappa)$.
\end{theorem}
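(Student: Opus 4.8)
The plan is to piggyback on the PoW~NC analysis of \cref{sec:proof} for everything that concerns \emph{header} chains, and then supply two short extra arguments that upgrade header-chain agreement to \emph{ledger} agreement in the presence of blanking. The first thing I would establish is that \ProtShort's modified scheduling policy restores property~(P1) of \cref{sec:proof-analysis-many-pps-one-cps}: since a node that has seen an equivocation for a \BPO pretends its content is the (trivially valid) empty content, every node processes at most one block per \BPO, exactly as in PoW, and blanking only ever accelerates processing, so the policy stays ``suitable''. Consequently \cref{lem:cps-stabilize-informal} (\sltcps stabilize the longest processed chains of all nodes) and \cref{lem:many-pps-one-cps-informal} (\sltcps recur) carry over verbatim to \ProtShort's header tree. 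From these, for $\confDepth = \Theta(\kappa^2)$ chosen as in \cref{thm:safety-and-liveness-pow}, I get that the $\confDepth$-deep \emph{header} chains of all honest nodes are consistent and that every honest \BPO after a given slot is represented by a permanent honest block within $O(\kappa^2)$ slots, \wop over a horizon of $\poly(\kappa)$; this also reproduces \eqref{pow-max-tp}, which is $\kappa$-independent, as the admissible $(\beta,\blkratetime,\dots)$ region.

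Next I would prove the ``no spurious blanking'' half of safety: if a block $B$ on the consistent $\confDepth$-deep header chain has no equivocation proof recorded among the $\keqproof$ blocks above it, then no honest node blanks $B$. I would argue by contradiction. If honest node $q$ blanks $B$, the scheduling rule forces $q$ to have seen an equivocating header $B'$ for $B$'s \BPO \emph{before} it would have processed $B$'s content, hence essentially by the time $B$ entered $q$'s longest processed chain; header flooding then puts $B'$ into every honest view within $\DeltaHeader$. By \cref{lem:many-pps-one-cps-informal}, within $O(\kappa^2)$ more slots some honest node produces a \sltcp block $C$ on a chain through $B$, and $B$ is still within the last $\keqproof$ headers of $C$'s prefix as long as $\keqproof$ exceeds the worst-case chain growth over an $O(\kappa^2)$-slot window, \ie, $\keqproof=\Theta(\kappa^2)$ suffices. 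The producer of $C$ has seen the equivocation and (by hypothesis) no proof is yet on chain, so it embeds one in $C$; being a \sltcp, $C$ is permanent, contradicting the hypothesis. I would also note that the alternative failure mode---an honest node carrying $B$ on its header chain but not having processed its content at all by confirmation time---is ruled out by \cref{lem:cps-stabilize-informal} and the capacity constraint, and then the argument above identifies the processed content as the genuine one.

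The third ingredient is ``consistent blanking'' plus the wrap-up. If an equivocation proof for $B$'s \BPO \emph{is} on chain within $\keqproof$ blocks above $B$, then---choosing $\confDepth \ge \keqproof + \Theta(\kappa^2)$ so that the entire $\keqproof$-window above any $\confDepth$-deep block has itself already stabilized---that proof lies on the common $\confDepth$-deep header chain, every honest node sees it, and the ledger-construction rule makes all of them blank $B$. Together with the previous step, for every block below the confirmation depth and every pair of honest nodes/slots, $B$ is blanked by both or by neither, so the output ledgers are consistent prefixes, giving safety. For liveness, a transaction $\tx$ delivered to all honest nodes by slot $t$ is included by the producer of the next honest \sltcp block $B^\star$ after $t$ (which is permanent and appears by $t + O(\kappa^2)$), with the caveat that the honest producer re-includes $\tx$ whenever $\tx$ currently sits only inside a still-blankable prefix block; since honest \BPOs are never equivocated, $B^\star$ itself is never blanked, and after $\confDepth=\Theta(\kappa^2)$ further blocks ($O(\kappa^2)$ slots) it is $\confDepth$-deep everywhere, so $\tx\in\LOG{p}{t'}$ for all honest $p$ and $t'\ge t+\Tlive$ with $\Tlive=\Theta(\kappa^2)$; chaining the three waiting phases (into an honest permanent block, stabilization of the equivocation status of its prefix, $\confDepth$-confirmation) gives the claimed $3\times$ overhead relative to \cref{lem:safety-and-liveness-comb-pow}. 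A final union bound over the $\poly(\kappa)$-slot horizon collects all the \wop events.

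I expect the main obstacle to be the timing coordination in the second step: one must ensure that \emph{whenever} an honest node is driven to blank a block, a matching equivocation proof \emph{deterministically} reaches the permanent chain within the $\keqproof$-window, even though the adversary gets to choose when to reveal equivocations precisely so as to land on the boundary of that window. Pinning this down forces a joint, mutually consistent choice of $\keqproof$, $\confDepth$, the \sltcp-recurrence length and $\DeltaHeader$, and it is also the place where the latency degrades to $\Theta(\kappa^2)$ rather than $\Theta(\kappa)$. A lesser but real subtlety is the re-inclusion rule in the liveness argument, which is needed because an honest node that processed a block's real content before ever seeing its equivocation will not blank that block in its own view and must therefore not treat a transaction that is only in a potentially-blanked block as already final.
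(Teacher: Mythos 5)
Your proposal follows essentially the same route as the paper: restore (P1) so the \sltcp machinery (\cref{lem:cps-stabilize-informal}, \cref{lem:many-pps-one-cps-informal}) carries over to header chains, then prove ``no spurious blanking'' and ``consistent blanking'' by positioning \sltcps inside the $\keqproof$- and $\confDepth$-windows, and get liveness from the fact that honest blocks are never blanked --- which is exactly the paper's three-property decomposition with $\confDepth = 6\Kcp+1$, $\keqproof = 4\Kcp$ instantiated via three explicit \sltcps $j,k,l$. The one place you go beyond the paper is in flagging the re-inclusion subtlety for liveness (a transaction sitting only in a still-blankable adversary block); the paper's one-line liveness argument leaves this implicit (blanked blocks revert to empty, so their transactions become pending again in an honest producer's view), but your observation is a fair one and does not change the overall argument.
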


\begin{proof}
\begin{figure}[tb]
    \centering
    \begin{tikzpicture}[blockchainold,x=0.8cm,y=0.4cm]
        \footnotesize

        \coordinate (G) at (-0.75,0) {};

        \node [block-gray] (bi) at (0,0) {};
        \node [yshift=1.5em] (bli) at (bi) {$b$};
        \node [block-gray] (bj) at (1,0) {};
        \node [yshift=1.5em] (blj) at (bj) {$b_j$};
        \node [block-gray] (bk) at (2.5,0) {};
        \node [yshift=1.5em] (blk) at (bk) {$b_k$};
        \node [block-gray] (bl) at (4,0) {};
        \node [yshift=1.5em] (bll) at (bl) {$b_l$};

        \node [block-gray] (B1) at (5,0) {};
        
        \node [block-gray] (B21) at (5.5,-1) {};
        \node [block-gray] (B22) at (5.5,1) {};

        \node [block-gray] (B32) at (6.2,1) {};

        \node [align=left,anchor=west] (dCq) at (7,-1) {$\dC_q(t')$};
        \node [align=left,anchor=west] (dCp) at (7,1) {$\dC_p(t)$};

        \draw [->,dashed,shorten >=1em] (dCp) -- (B32);
        \draw [->,dashed,shorten >=1em] (dCq) -- (B21);

        \draw [link] (B32) -- (B22);
        \draw [link] (B22) -- (B1);

        \draw [link] (B21) -- (B1);

        \node [inner sep=0pt] (dots1) at (0.5,0) {...};
        \node [inner sep=0pt] (dots2) at (1.75,0) {......};
        \node [inner sep=0pt] (dots3) at (3.25,0) {......};
        \node [inner sep=0pt] (dots4) at (4.5,0) {...};
        \coordinate (dots4repl) at (4.5,0);

        \draw [link] (bi) -- (G);
        \draw [link] (dots1) -- (bi);
        \draw [link,-] (bj) -- (dots1);
        \draw [link] (dots2) -- (bj);
        \draw [link,-] (bk) -- (dots2);
        \draw [link] (dots3) -- (bk);
        \draw [link,-] (bl) -- (dots3);
        \draw [link] (dots4) -- (bl);
        \draw [link,-] (B1) -- (dots4);

        \coordinate (bot) at (-1,0);

        \coordinate (a) at (bi |- bot);
        \coordinate (b) at (dots3 |- bot);
        \draw [|-|] ([yshift=-1.5em]a) -- ([yshift=-1.5em]b) node [pos=0.5,below] {$\keqproof = 4\Kcp$ blocks};
        \coordinate (a) at (bi |- bot);
        \coordinate (b) at (dots4repl |- bot);
        \draw [|-|] ([yshift=-3.5em]a) -- ([yshift=-3.5em]b) node [pos=0.5,below] {$\confDepth = 6\Kcp+1$ blocks};

        \draw [<-,dashed] (bll.north) -- ([yshift=0.4cm]bll.north) node [anchor=south west,align=left,xshift=-3.5em] {Latest possible equivocation proof against $b$ stabilizes};
        \draw [<-,dashed] (blk.north) -- ([yshift=0.8cm]blk.north) node [anchor=south west,align=left,xshift=-3.5em] {Includes equivocation proof against $b$ (if any)};
        \draw [<-,dashed] (blj.north) -- ([yshift=1.2cm]blj.north) node [anchor=south west,align=left,xshift=-3.5em] {All honest nodes have processed $b$ or seen an equivocation};

    \end{tikzpicture}%
    \vspace{-1em}%
    \caption{%
        Illustration for the security proof of \ProtShort (\cref{thm:safety-and-liveness-pos}).
        Consider a block $b$ that is $\confDepth$-deep in the longest processed chain of a node.
        \Iindices $j,k,l$ are \sltcps.
        Since \sltcps stabilize (\cref{lem:cps-stabilize-informal}), the corresponding blocks $b_j, b_k, b_l$ are in all honest nodes' processed longest chains.
        At \sltcp $j$, we know for sure that all honest nodes 
        either processed $b$'s content or saw an equivocation for it, because they have processed $b_j$'s content.
        At \sltcp $k$, we know for sure that if $b$ had an equivocation, preventing processing of its content, then an equivocation proof against $b$ must have entered the chain.
        At \sltcp $l$, we know for sure that the last block that can add an equivocation proof against $b$ has stabilized (as the deadline of $\keqproof$ blocks has passed).
        Thus, a ledger formed from $\confDepth$-deep blocks (sufficient to obtain three \sltcps) will remain safe.
    }
    \label{fig:pos-safety-proof}
\end{figure}

Set $\confDepth \triangleq 6\Kcp + 1, \keqproof \triangleq 4\Kcp$.
Denote the longest processed chain of node $p$ at \timeslot $t$ as $\dC_p(t)$ and its $\confDepth$-deep prefix as $\dC_p(t)\trunc{\confDepth}$.
\emph{Safety} holds if the following three properties hold for all \timeslots $t \leq t'$ and for all honest nodes $p,q$:
(1) $\dC_p(t)\trunc{\confDepth} \preceq \dC_q(t')\trunc{\confDepth}$ or $\dC_q(t')\trunc{\confDepth} \preceq \dC_p(t)\trunc{\confDepth}$.
(2) If $b \in \dC_p(t)\trunc{\confDepth}$ and there is no equivocation proof in a block header following it, then node $p$ must have processed the content of $b$ before \timeslot $t$.
(3) 
If $b \in \dC_p(t)\trunc{\confDepth}$ and $b \in \dC_q(t')\trunc{\confDepth}$, then $p$ blanks the content of $b$ in $\LOG{p}{t}$ iff $q$ blanks it in $\LOG{q}{t'}$.

Consider an arbitrary block $b_i$ (produced at some \iindex $i$) that is confirmed by an honest node $p$ at \timeslot $t$, \ie, $b_i \in \dC_p(t)\trunc{\confDepth}$.
Since $b_i$ is $\confDepth$-deep, there must have been at least $6\Kcp + 1$ \iindices after $i$.
Due to \cref{lem:many-pps-one-cps-informal}, there must have been at least three \sltcps $j,k,l$ after \iindex $i$.
Due to \cref{lem:cps-stabilize-informal}, the blocks produced at these \iindices, $b_j,b_k,b_l$ are in $\dC_q(t')$ for all $t' \geq t$ and for all $q$ (see \cref{fig:pos-safety-proof}).
Therefore, $\dC_p(t)\trunc{\confDepth} \preceq \dC_q(t')$, and from this, we can prove (1).

To prove (2), suppose that node $p$ did not process the content of block $b_i$.
Since block $b_j$, and hence also $b_i$, is in every honest node's longest processed chain at \timeslot $t_{k+1} - 1$ (\cref{lem:cps-stabilize-informal}), it must have been that $p$ saw an equivocation for $b_i$ before \timeslot $t_{k+1} - 1$ (otherwise it must have actually processed the content of $b_i$).
Due to synchrony, all honest nodes see the headers of $b_i$ and its equivocation.
Since the block $b_k$ is produced by an honest node, and $k < i + 4\Kcp = i + \keqproof$, $b_k$ must contain an equivocation proof against $b_i$ (see \cref{fig:pos-safety-proof}).

To prove (3), we show that while confirming the block $b_i$, either all nodes see an equivocation proof against $b_i$ or none of them do.
The latest that an equivocation proof against $b_i$ can be included is $\keqproof$ blocks below $b_i$.
Since $\confDepth > \keqproof + 2\Kcp$, due to \cref{lem:many-pps-one-cps-informal}, the \sltcp $l$ must have occurred after $b_i$ became $\keqproof$-deep and before it became $\confDepth$-deep (see \cref{fig:pos-safety-proof}).
Thus, for all $p$ and $t$, if $b_i \in \dC_p(t)\trunc{\confDepth}$, then $b_l \in \dC_p(t)$, hence all nodes agree on whether or not an equivocation proof was included.

\emph{Liveness} follows similarly to PoW NC: Within $2\Kcp$ \iindices, there are enough honestly produced blocks to include new transactions, and their contents will never be blanked. In $\Theta(\Kcp)$ \timeslots, these blocks will become $\confDepth$-deep and will be confirmed.
\end{proof}

\subsection{Handling Loss of Predictable Validity}
\label{sec:predictablevalidity}

\subsubsection{Predictable Transaction Validity}
\label{sec:predictablevalidity-transaction}

In UTXO-based chains like Bitcoin (account-based like Ethereum), a transaction is \emph{valid} iff its inputs are unspent (its execution succeeds and fees are paid).

\begin{definition}
\label{def:predictablevalidity-transaction}
    A transaction has \emph{predictable validity} iff: 
    validity
    at the time an honest node adds it to a block
    implies validity when that block gets executed.
\end{definition}

The blanking in \ProtShort leads to a loss of \emph{predictable transaction validity}. An honest block $B$ may include a transaction that depends on the contents of a previous block $A$ whose equivocations were not known at the time. 
After block $B$ is produced, the adversary could release an equivocation for the block $A$, forcing honest nodes to blank block $A$'s contents, which may invalidate the transaction in block $B$. Such invalidated transactions take up free space in honest blocks and lower the effective throughput (valid confirmed transactions) of the ledger.

We propose a simple solution to recover predictable validity for \ProtShort:
If nodes limit transactions they include in a block to those that don't depend on any \emph{recently changed} state, then they can be sure that all equivocations that could affect the validity 
of a transaction already have a corresponding equivocation proof included on chain. 
This is because at the time of creating a block, honest nodes \emph{have seen all transactions which will be executed}, however, \emph{not all transactions nodes have seen will be executed}. The following lemma follows easily.

\begin{lemma}
    \label{lem:pred-valid-1}
    If a node produces a block whose transactions do not share state with any transaction included in the last $\keqproof$ blocks, then the block satisfies 
    \cref{def:predictablevalidity-transaction}.
\end{lemma}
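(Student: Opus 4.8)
The plan is to pin down exactly which on-chain data can influence the validity of a transaction $\tx$ in the block $B$, and then to use the equivocation-proof deadline $\keqproof$ to show that all of that data is already final at the moment $B$ is produced. Let $p$ be the honest node that produces $B$, extending its longest processed chain, and assume --- this is the hypothesis --- that \emph{every} transaction carried by $B$ shares no state with any transaction in the $\keqproof$ blocks immediately preceding $B$ on its chain. If $B$ is never confirmed and executed the claim is vacuous, so assume $B$ lies in a confirmed chain; then the prefix of $B$ is fixed once and for all, as is the content of $B$.

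First I would localize the state that determines $\tx$'s validity. When $B$ is executed, $\tx$ runs against the state obtained by applying, in order, the contents of $B$'s ancestors (blanked ones contributing nothing) followed by the transactions of $B$ preceding $\tx$. Since the hypothesis constrains \emph{all} of $B$'s transactions, those preceding ones likewise touch no state written by a transaction in the last $\keqproof$ blocks before $B$; combined with causality --- an ancestor's transaction cannot read state written by a later block --- this shows that whether $\tx$ is valid is a function only of (a)~the contents, together with the blanking status, of every block on $B$'s prefix that is not among the $\keqproof$ blocks immediately preceding $B$, and (b)~the (fixed) transactions of $B$ itself. In particular it is insensitive to whether any of those last $\keqproof$ blocks is eventually blanked.

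Second I would argue that the blanking status of every such older ancestor $A$ is already determined when $B$ is produced. By the validity rule, a header is invalid if it carries an equivocation proof against a block that is not among the $\keqproof$ blocks immediately preceding it; hence a valid equivocation proof against $A$ can occur only within the $\keqproof$ blocks immediately following $A$ on the chain, all of which lie strictly inside $B$'s prefix. So whether $A$ is blanked is a property of $B$'s already-fixed prefix and cannot change in any valid chain that contains $B$ (the confirmed chain, or any future extension). Because $p$ obtained $B$ by extending its longest processed chain, it had processed precisely that prefix and so had already observed exactly the equivocation proofs --- hence exactly the blanking pattern on those older ancestors --- that will be in force when $B$ is executed.

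Combining the two steps, at the instant $p$ placed $\tx$ in $B$ and checked its validity, $p$ evaluated $\tx$ against exactly the state (restricted to the part that matters for $\tx$) against which $\tx$ will later be executed; therefore $\tx$ valid at insertion implies $\tx$ valid at execution, and since this holds for every transaction of $B$, the block satisfies \cref{def:predictablevalidity-transaction}. The one delicate point is the first step: ruling out an \emph{indirect} dependence, in which $\tx$'s state is affected by a recently-blanked block through the mediation of another transaction of $B$. This is precisely why the hypothesis must be imposed on \emph{all} transactions of $B$ rather than on $\tx$ alone; once it is, the blocks whose blanking is still in flux at $B$'s production time are exactly the ones $\tx$ is guaranteed not to depend on, and the argument closes.
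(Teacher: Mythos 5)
Your proof is correct and takes essentially the same route as the paper's (which is stated only as a brief informal sketch preceding the lemma): since a block producer has already processed the entire prefix, the equivocation-proof deadline $\keqproof$ ensures the blanking status of every block old enough to affect the new transactions is already fixed on-chain and visible to the producer, so the state at insertion time matches the state at execution time. Your added care about indirect dependence through sibling transactions in $B$, and the observation that the deadline localizes proofs strictly inside $B$'s prefix, are sound elaborations of exactly that argument.
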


\subsubsection{Predictable Fee Validity}
\label{sec:predictablevalidity-fee}

In practice, %
in popular DeFi-ecosystems, which consist of very interdependent transactions~%
\cite{guo2019graph,chen2020understanding}, it may not always be practical to limit the interaction between transactions. %
We propose instead %
preserving the minimum requirement that each transaction \textit{pays its fee}, regardless of the outcome of its execution. This guarantees that miners are compensated for space used in their blocks, and also makes it costly for the adversary to take up space with invalid transactions.

\begin{definition}
\label{def:predictablevalidity-fee}
    A transaction has \emph{predictable fees} 
    iff:
    ability to pay fees when an honest node adds it to a block
    implies ability to pay fees when the block executes.
\end{definition}

In systems like Ethereum, transactions have a \emph{max gas} value set by the sender which limits the computation allowed by the transaction and ultimately its fee. We consider a protocol with this gas mechanism, as well as a base transaction cost that covers the block space taken up by the transaction. 
We introduce a notion of \emph{gas deposit accounts} to \ProtShort that can only be used for transaction fees (transactions internally do not have access to these accounts).
When a miner includes a transaction, it checks that the account funding the transaction has enough funds to cover the maximum gas, even if all transactions in its recent ancestor blocks make it to the blanked ledger and consume their maximum gas. Users thus need to maintain a balance proportional to the complexity and frequency of the transactions they make. 
We also require that any deposit to the account is not considered in the balance until $\keqproof$ blocks after the deposit transaction. 
Withdrawals can take place immediately, as direct transactions.

\begin{lemma}
    \label{lem:pred-valid-2}
    If a node produces a block whose transactions are funded by gas deposit accounts with sufficient balance (balance before $\keqproof$ blocks minus any fees since),
    then all transactions in the block satisfy 
    \cref{def:predictablevalidity-fee}.
\end{lemma}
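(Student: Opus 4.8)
The plan is to show that the funding-account balance the honest miner verifies against when composing the block is a \emph{permanent lower bound} on the balance available when the block is executed; \cref{def:predictablevalidity-fee} then follows transaction by transaction. Let $B$ be the block, produced by an honest node and passing the funding check. Since predictable fees is a conditional property, it suffices to treat the case that $B$ is executed, i.e., eventually confirmed into some honest node's output ledger. By the safety part of \cref{thm:safety-and-liveness-pos}, $B$ and its entire prefix then lie permanently on the confirmed chain, and because a header binds its whole ancestry, the prefix of $B$ — in particular its truncation $\keqproof$ blocks back — is exactly the one the miner saw when it built $B$. All account balances below are evaluated along this chain, using the blanking rule of \ProtShort.

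The crucial step is an \emph{immutability} claim: the blanking status of every block that sits at least $\keqproof$ blocks below $B$ is already determined at the time $B$ is mined. Indeed, a valid header may carry an equivocation proof only against one of its last $\keqproof$ ancestors, so any proof that could ever blank such a deep block must appear at $B$ or on $B$'s proper prefix; and the honest miner, which itself attaches to $B$ every still-unrecorded proof against $B$'s last $\keqproof$ ancestors, observes all of them. Hence the blanked ledger obtained by executing the prefix of $B$ truncated $\keqproof$ blocks back — which of its blocks are blanked, whether each gas deposit transaction in it succeeded, and the resulting deposit-account balances — is fixed at mining time and never changes. This is exactly why a deposit is credited only $\keqproof$ blocks after its transaction: by the time it counts, the block carrying it, and every block that transaction reads, already lie in this immutable region, so a counted deposit can never be retroactively cancelled. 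Consequently the ``balance before $\keqproof$ blocks'' the miner uses equals the corresponding balance in force at execution time, up to charges incurred in the intervening $\keqproof$ blocks.

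It remains to bound those intervening charges and combine them with the miner's check. Between the $\keqproof$-block cutoff and the execution of a given transaction $\tx$ of $B$, the only debits to the funding account come from transactions in the last $\keqproof$ blocks of $B$'s prefix and from transactions placed before $\tx$ within $B$ itself (fee charges, and any withdrawal transactions). Blanking within that recent window can only delete such transactions from the executed ledger — never create new debits and never cancel a matured deposit — and any deposit in that window not yet counted can only raise the true balance; so the realized total debit is at most the worst case in which every such transaction executes and consumes its full max gas. But that worst case is precisely what the funding check budgets for, in addition to the max gas (and base cost) of $\tx$. Therefore, at execution time, the balance available to $\tx$ is at least (the immutable balance $\keqproof$ blocks back) minus (worst-case recent debit), which is at least the max gas of $\tx$, which in turn upper-bounds the fee $\tx$ actually pays. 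Since this holds for every $\tx \in B$, the block satisfies \cref{def:predictablevalidity-fee}.

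I expect the main obstacle to be the immutability claim: making the equivocation-proof deadline, the $\keqproof$-block deposit-maturation delay, and the ``$\keqproof$ blocks back'' evaluation point line up so that no retroactive blanking — of a counted deposit, of a transaction funding it, or of a fee-charging transaction in the immutable prefix — can pull the account below what the honest miner observed. Once that is pinned down (including the off-by-one bookkeeping at the exact $\keqproof$-depth boundary, which the miner handles because such a block still falls inside its last-$\keqproof$-ancestors window), the remaining max-gas-versus-realized-fee accounting is routine.
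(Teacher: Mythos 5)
The paper does not give a proof of this lemma: it is stated bare, with the justification left to the informal rationale in the surrounding prose (the miner budgets for the worst-case max gas of all recent transactions funded from the account, and deposits are ignored until they are $\keqproof$ blocks old). So there is nothing in the paper to compare against; your proof is filling the gap rather than reproducing an argument.

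Your argument is essentially correct and formalizes the intended rationale. The load-bearing step is the immutability claim: that the blanking status — and hence the account-affecting content — of every block at least $\keqproof$ deep below $B$ is fixed the moment the honest miner commits to $B$'s header, because a valid header can carry an equivocation proof only against one of its last $\keqproof$ ancestors, and the honest miner itself controls which proofs go into $B$ for the boundary block exactly $\keqproof$ back. This is precisely why both the equivocation-proof deadline and the deposit-maturation delay are set to $\keqproof$, and you identify the alignment correctly. Two caveats worth tightening. First, "$B$ and its entire prefix lie permanently on the confirmed chain" is not a consequence of safety alone; it also needs monotone growth of the output ledger, which follows from safety together with liveness/chain growth — both are established in \cref{thm:safety-and-liveness-pos}, so cite both rather than "the safety part." Second, the paper's phrase "minus any fees since" is silent about withdrawals made in the last $\keqproof$ blocks (which the paper says take effect immediately and so can reduce the balance before the miner's cutoff has matured); you implicitly handle this by treating the realized recent debit as upper-bounded by the miner's worst-case provision, which is the right reading, but you should say explicitly that the miner's check must also budget for recent withdrawals, not only for the max gas of recent fee-paying transactions. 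With those two points made precise, the proof is sound.
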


The solutions in \cref{sec:predictablevalidity-transaction,sec:predictablevalidity-fee} are complementary and could each be adopted as per-validator heuristics (\ie, not a consensus rule), or by the system based on the use-case (\eg, expected inter-dependency of transactions).

\section{Discussion}
\label{sec:conclusion}
\myparagraph{Tightening the Analysis}
Our \teaserattack and security analysis (\cf \cref{fig:comparison-bddelay-bdbandwidth}) serve as the first lower and upper bounds on nodes' minimum capacity required to ensure security in the bounded-capacity PoW setting. A question remains on how to tighten the gap.
One avenue for future work is whether the adversary has better strategies
than the \teaserattack, which we believe may be optimal in the bounded-capacity model.
\begin{conjecture}
    For the PoW NC protocol with the \rulelc scheduling policy,
    for all $\beta < 1/2, \blkratetime$ for which the \teaserattack is unsuccessful,
    the protocol is secure (against all attacks).
\end{conjecture}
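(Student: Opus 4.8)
The plan is to prove the contrapositive: if some attack breaks safety or $\Tlive$-liveness of PoW NC under \rulelc scheduling, then so does the \teaserattack. As in every NC-style analysis, the first move is to reduce security to a chain-growth statement, but for the \emph{region} boundary --- not merely the latency --- to come out tight, two things must be sharpened simultaneously over the proof of \cref{sec:proof}: (i) the worst-case honest chain growth \emph{drift} against an adversary of mining power $\blkratetimeAdv$ must be shown to equal exactly the value $\blkratetimeGrowthTeaser$ attained by the \teaserattack; and (ii) the passage from this growth drift to a safety/liveness guarantee must be lossless in $(\beta,\blkratetime)$, i.e.\ the \sltpp density estimate and the conversion from \sltpps to \sltcps must be tightened so that \sltcps recur at the honest growth rate itself, rather than at the rate discounted by the factor $\frac{1}{12}$ that appears in \eqref{pow-max-tp} --- for instance by amortizing the conversion over long windows on which the adversary's spamming budget is provably exhausted. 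Granting (i) and (ii), $\blkratetimeAdv < \blkratetimeGrowthTeaser$ forces a negative drift on (best adversary chain length) minus (honest common-prefix length) in the genuine worst case, and safety and liveness follow by the argument behind \cref{thm:safety-and-liveness-pow}, now instantiated with the true worst case rather than only the private attack.

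The heart of the proof is part (i): a scheduling-specific combinatorial bound on how much an adversary can slow honest chain growth under \rulelc, via an amortized charging argument. Fix an execution and call honest processing \emph{wasted} if it is spent downloading and verifying content that never lies on the downloading node's longest processed chain. Using the ``cloud'' semantics of the bounded-capacity model --- requests for unavailable content cost nothing, once-revealed content is binding and thereafter permanently available, and a header chain of length $\ell$ requires $\ell$ valid PoW solutions --- one shows that every unit of an honest node's wasted processing can be injectively charged to an adversary block lying on a header chain that \rulelc forced that node, at that moment, to prefer over the honest chain, and that no such ``detour'' costs a node more than one content download unless the adversary simultaneously hands that node a strictly \emph{longer processed} chain, which accelerates rather than retards growth. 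Crucially, such charged blocks can be \emph{reused} from the adversary's own withheld chain (as in the \teaserattack) only so long as the revealed prefix stays no longer than the honest tip; this caps the sustained rate of wasted honest processing at essentially one block per honest chain-growth event --- exactly what the \teaserattack inflicts --- so the honest chain's effective processing capacity, hence its growth drift, is no worse than under teasing.

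The main obstacle is making this charging argument watertight against \emph{all} adaptive content-release schedules and \rulelc tie-breaks, not just the canonical ``announce a long header segment, reveal only its first block'' pattern; in particular one must handle partial and delayed content reveals, detours interleaved across honest nodes with divergent views, and strategies that mix teasing with genuine chain extension, and must show that none of these lets a single mined block buy more than one wasted honest download without a compensating concession. This interacts with a second, quantitative front: even granted the congestion bound, pinning the security region to the teasing boundary needs a sharp, non-asymptotic relation between honest chain growth and effective capacity (so that the adversary gains nothing by sacrificing extra mined blocks to congest near the regime where forking losses steepen), together with the tightening in (ii) of the \sltpp density and \sltpp-recurrence arguments. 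I would expect progress to come from establishing that the worst-case adversary strategy is stationary and collapsing its optimization to the single scalar ``congestion load'', for which the \teaserattack is provably the optimum; doing so rigorously is exactly what would upgrade this conjecture to a theorem.
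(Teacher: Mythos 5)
This statement is presented in the paper as an open \emph{conjecture}, not as a proved theorem. The surrounding discussion explicitly flags it as a direction for future work, noting that the current analysis ``uses only a few basic properties \ref{item:good-download-rule-no-repeat}, \ref{item:good-download-rule-honest-block}, \ref{item:good-download-rule-cutoff} of the scheduling policy'' and ``assume[s] that any valid block can be used by the adversary to spam honest nodes,'' and suggesting that one should ``account not just for the number of block productions in the adversary's budget but also their blocktree structure.'' There is no proof in the paper to compare against, so you are not replicating or diverging from a known argument --- you are proposing one from scratch.

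Read as a research plan, your sketch is aligned in spirit with the paper's own suggestions, and correctly identifies the two things that would need to be sharpened: the worst-case congestion bound under \rulelc (your part (i)) and the slack between the pivot-recurrence rate and the honest chain growth rate that produces the factor $\frac{1}{12}$ in \eqref{pow-max-tp} (your part (ii)). You also honestly flag the heart of the matter as an obstacle rather than a solved step. But as a proof it has gaps you do not close. On (i), the injective ``charging'' claim --- that each unit of wasted honest processing corresponds to a distinct adversary block, that reuse of withheld blocks across teases is capped, and that this cap holds uniformly across divergent honest views, partial content reveals, and arbitrary tie-breaking --- is stated but not proved; you observe this is the main obstacle, and indeed it is the entire content of the conjecture. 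In particular, you need a precise invariant that survives the adversary mixing teasing with genuine chain extension, since the charging rule you describe only distinguishes the two cases at the \emph{moment} of a download, not globally. On (ii), asserting that the \sltpp-density and \sltpp-to-\sltcp conversion arguments can be made lossless ``by amortizing over long windows'' is a substantial analytical claim with no sketch attached; the $\frac{1}{12}$ slack arises from union bounds over all window widths and from treating block budgets independently of blocktree structure, and it is not obvious these can be removed without a genuinely new argument. Finally, your passage from ``$\blkratetimeAdv < \blkratetimeGrowthTeaser$'' to ``negative drift on (best adversary chain length) minus (honest common-prefix length)'' implicitly presupposes that a race against a single private adversary chain is the worst case for safety, which is exactly the kind of conclusion the \teaserattack itself shows one must not take for granted in the bounded-capacity model; this reduction would need its own proof (e.g.\ a balance-style attack argument as in~\cite{dem20,tight_bitcoin} adapted to bounded capacity). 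In short, your plan points at the right mountains but does not climb them; the conjecture remains open.
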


Conversely, we expect that the security analysis can be improved in multiple ways. The current analysis uses only a few basic properties \ref{item:good-download-rule-no-repeat}, \ref{item:good-download-rule-honest-block}, \ref{item:good-download-rule-cutoff} of the scheduling policy. As a result, we assume that any valid block can be used by the adversary to spam honest nodes. However, when using the \rulelc policy, the adversary can only force honest nodes to process blocks that are on their longest header chain, which is already hard for the adversary given that the honest chain has been growing so far. An improved analysis should account not just for the number of block productions in the adversary's budget but also their blocktree structure. Further, \sltgood \timeslots are sufficient but not necessary for chain growth. Improved analysis of the chain growth rate using techniques such as blocktree partitioning \cite{dem20} can further tighten the analysis.

\myparagraph{Variable Difficulty}
In practice, PoW blockchains implement a difficulty adjustment algorithm (DAA) to maintain the target block rate as players join and leave the system.
This introduces new avenues for attack \cite{bahack2013theoretical}.
The variable difficulty protocol has so far been proven secure only in the lock-step synchronous 
model (\ie, messages delivered in exactly one round) 
\cite{garay2017bitcoin}. Security in the bounded-delay and bounded-capacity models remains an open problem.
We note, however, that the DAA seems to apply even more stress to limited capacity nodes,
as 
it would
lower the difficulty to compensate for chain growth rate lost due to congestion,
leading to an increase in the overall block rate of the system.
In turn, this would increase congestion, in particular
if honestly produced orphaned blocks are processed by honest parties, leading to a vicious cycle.
Under the \rulelc scheduling policy that we consider in this work,
honest nodes do not prioritize processing orphaned blocks,
but this appears to be the case for scheduling policies that 
allow for processing multiple blocks in parallel. The nuance in this analysis is left for future work.

\section*{Acknowledgment}
We thank
Lei Yang, Mohammad Alizadeh,
Sundararajan Renganathan, David Mazières,
Ertem Nusret Tas,
Ghassan Karame,
Florian Tschorsch,
and
George Danezis
for fruitful discussions.
The work of LK, JN, and AZ
was conducted in part
during Dagstuhl Seminar \#22421.
LK is supported by the
armasuisse Science and Technology CYD Distinguished Postdoctoral Fellowship.
JN is supported by the Protocol Labs PhD Fellowship.
JN and SS are supported by a gift from
the Ethereum Foundation and by a research hub funded by Input Output Global Inc.
SS is suported by NSF grant CCF-1563098.
AZ is supported by grant \#1443/21 from the Israel Science Foundation.
\bibliographystyle{ACM-Reference-Format}
\bibliography{references}

\ifshortVersion
    \appendix
    
    \section*{Appendices}
    Full version with appendices: \cite{full-version}.

    \nocite{doi:10.1080/01621459.1963.10500830,duchi-hoeffding,stackexchange-math-rwreturnto0,poisson_tail}
\else
    \nocite{full-version}
\appendix
\crefalias{section}{appendix}
\crefalias{subsection}{subappendix}
\crefalias{subsubsection}{subsubappendix}
\crefalias{subsubsubsection}{subsubsubappendix}

\section{Simulation Details}
\label{sec:attacks-details}

Nodes in our simulation\footnote{Source code: \gitSourceUrl}
generate blocks in a Poisson process with rate proportional to their mining power. We assume the mining difficulty is fixed, and do not include any adjustment by a difficulty adjustment algorithm. In fact, difficulty adjustment algorithms tend to worsen processing problems as they increase the block creation rate if the chain does not grow fast enough---which in turn requires more processing from nodes. 

Nodes process blocks one at a time according to the priority dictated by the processing policy, at a rate determined by their capacity. They are allowed to preempt their current task if new information (headers that are published, blocks that they mined) presents them with a higher priority target. Since queues can grow large if nodes do not manage to process all blocks in a timely manner, we maintain priority queues of bounded size (typically 100) and evict low priority tasks from the queue as needed.
As preemption of tasks may cause nodes to alternate between tasks, we 
allow nodes to retain partial work in an LRU cache of size 10.

Except where we note otherwise, headers are assumed to propagate instantly in the simulations.
To simulate an idealized bounded-delay network, where needed, we set the header propagation delay to $\Delta$ and the capacity of each node to be $\infty$.
Block headers %
contain the relevant lottery information which can be easily validated. We therefore assume the adversary never publishes headers it did not actually mine.

To remain close to the theoretical analysis, we model all processing tasks as dependent only on the resources available to the node itself. In reality, things are much more complex: nodes typically propagate blocks in a peer-to-peer network, which means both the overlay network topology and the underlying internet topology both greatly impact block download rates and performance. 
Our simplified setting allows us to focus more on the congestion effects in isolation from the effects of topology and other peer-to-peer related issues.

\section{Other Congestion-Based Attacks}
\label{sec:general-attacks}

\subsection{\GreedyAttack}
\label{sec:greedy-attack}
The \teaserattack relied on the fact that the adversary could entice nodes with a long header chain that is later discovered to be unavailable for processing. It is natural in this case to consider adjusting the scheduling policy to one that prefers the proverbial `bird in the hand over two birds in the bush', \ie, to extend the blocks we already processed over the illusive promise of a longer chain that the adversary may withhold from us. 

\emph{\ruleGreedy policy.}\;\;
This policy prioritizes processing blocks that extend the chain a node has already processed. If a header of a block at height $h$ is announced, and we already have $h_i$ blocks from that chain,
we set the priority of the block to be $(h_i,h)$ and compare between the two priorities lexicographically.

While the \rulegreedy policy performs well at high processing rates, we unfortunately find that it performs poorly in the low processing rate regime. Specifically, if a fork in the chain occurs, and nodes are split evenly between the two alternatives, the fork may never resolve. This is because nodes extend their own chain, and prioritize processing on their side of the split while having insufficient processing power to catch up with the other alternative chain. A fork in the chain can result from a deliberate attack by an adversary that releases blocks selectively to different nodes, by a network split, or worse, by an unlucky timing of honest node mining events. In this case, the blockchain fails even for small adversaries. 
Importantly, a fork that never resolves is either a safety or a liveness failure, as no transaction on either side of the split can be safely accepted.

\begin{figure}[tb]%
    \centering%
    \begin{tikzpicture}[]
        \scriptsize
        \begin{axis}[
                mysimpleplot,
                xmode=log, 
                xlabel={Capacity: $\bwtime$ [blocks per second]},
                ylabel={Growth of agreed chain\\ ($\mathrm{height}/\blkratetimeHon\cdot T$)},
                legend columns=2,
                xmin=1e-2, xmax=1e3,
                ymin=0, ymax=1.005,
                height=0.4\linewidth,
                width=\linewidth,
                yticklabel style={
                        /pgf/number format/fixed,
                        /pgf/number format/precision=2
                    },
                scaled y ticks=false,
                major grid style={solid,draw=gray!10},
            ]

            \addplot [myparula11, only marks,
                     mark size = 1pt] table [x=bandwidth,y=max_ancestor_height] {figures/fig-experiment-greedy-longestdl-data.txt};
            \addlegendentry{\ruleLc policy};
            \label{plt:experiment-greedy-longestdl};

            \addplot [myparula73, mark size=1pt,%
            dotted, 
            mark=o ] table [x=bandwidth,y=max_ancestor_height] {figures/fig-experiment-greedy-greedydl-data.txt};
            \addlegendentry{\ruleGreedy policy};
            \label{plt:experiment-greedy-greedydl};

        \end{axis}
    \end{tikzpicture}%
    \vspace{-0.5em}%
    \caption{%
        The rate nodes grow the agreed chain after the network splits into two sets of 50 nodes for 15 secs, when the scheduling policy is ``longest-header-chain'' (\ref{plt:experiment-greedy-longestdl}) or ``\rulegreedy'' (\ref{plt:experiment-greedy-greedydl}).
        Nodes using the \rulegreedy policy prioritize processes on their current chain.
        Under low capacity, they do not recover from the split, resulting in two chains forking at genesis, providing no growth of the agreed chain. Thus, longest chain is insecure \emph{without an adversary}
        (\cf \cref{fig:comparison-bddelay-bdbandwidth}(c)).
    }%
    \label{fig:experiment-greedy}%
\end{figure}%

To demonstrate this scheduling policy in action, we simulate a network of 100 nodes that are split evenly between two partitions for only 15 seconds, \ie, for an expected time required to produce 15 blocks.%
\footnote{Such short splits are relatively easy to induce in reality (transient problems with Internet routing, denial-of-service on the network, etc.) and thus a practical scheduling rule must recover from such splits.}
Once the network split ends, the simulation continues for another 4000 seconds, allowing nodes the opportunity to 
converge on a chain.
We 
measure the height of the latest block all nodes agree upon. If nodes do not recover from the partition, 
this block will be the genesis and the liveness of the protocol has failed. Otherwise, nodes quickly agree on the main chain and the height of the latest agreed block is 
just a little behind the longest tip of the chain.

We simulate the evolution after a brief partition for both the \rulelc policy as well as for the \rulegreedy policy. Our results (\cref{fig:experiment-greedy}) show that in settings where capacity is greater than $1/2$, nodes manage to catch up with the chain and the rate of growth matches for both scheduling policies. In lower capacity settings, however, nodes never catch up.
Note that this attack requires no adversary mining,
yet the protocol is insecure (\cf \cref{fig:comparison-bddelay-bdbandwidth}(c)).
This is in stark contrast to the bounded-delay analysis
which suggests that the protocol retains security
against a non-mining adversary
at any capacity (\cf \cref{fig:comparison-bddelay-bdbandwidth}(a)),
and highlights again the need to study the security of blockchains at capacity.

\subsection{The \PoSTeaserAttack (PoS)}
\label{sec:pos-teaser-attack}

\begin{figure}[tb]%
    \centering%
    \begin{tikzpicture}[]
        \footnotesize
        \begin{axis}[
                mysimpleplot,
                xlabel={Capacity: $\bwtime$ [blocks per second]},
                ylabel={Honest chain growth\\rate ($\blkratetimeGrowth/\blkratetimeHon$)},
                xmin=0, xmax=2,
                ymin=0, ymax=0.7,
                height=0.5\linewidth,
                width=\linewidth,
                yticklabel style={
                        /pgf/number format/fixed,
                        /pgf/number format/precision=2
                },
                scaled y ticks=false,
                legend columns=2,
            ]

            \addplot [myparula11, %
                    only marks, mark size=1.5pt] table [x=bandwidth,y=chain_growth] {figures/fig-experiment-teaser-noattacker-data.txt};
            \addlegendentry{No attack or private attack};
            \label{plt:experiment-teaser-noattacker};

            \addplot [myparula73, mark size=1.5pt,%
            only marks] table [x=bandwidth,y=chain_growth] {figures/fig-experiment-teaser-activeattacker-data.txt};
            \addlegendentry{\Teaserattack};
            \label{plt:experiment-teaser-attacker};

            \addplot [myparula54, mark size=1.5pt,%
            only marks] table [x=bandwidth,y=chain_growth] {figures/fig-experiment-teaserequiv-equivteasingattacker-data.txt};
            \addlegendentry{\PoSTeaserattack (PoS)};
            \label{plt:experiment-teaser-equivattacker};

        \end{axis}
    \end{tikzpicture}%
    \vspace{-0.5em}%
    \caption{%
    Results of a simulation of the \PoSTeaserattack comparing the rate of chain growth relative to honest block production rate, when nodes prioritize processing towards the longest header chain, for various capacity limits. The honest chain growth rate falls to almost zero as the adversary spams honest nodes with longer chains. The \teaserattack is shown for comparison.
    }%
    
    \label{fig:experiment-teaser-pos}%
\end{figure}%

\import{./figures/}{fig-attack-pos-teaser.tex}

In PoS, 
the adversary can greatly increase the network's processing load using equivocations. The \PoSteaserattack, described in \cref{fig:attack-pos-teaser}, uses equivocations to announce a whole new chain at every instance when the \teaserattack would have announced a single new block.
As the attack goes on, the length of the new announced chain increases. This increases the time honest nodes spend processing this chain, and \emph{decelerates} the honest chain growth until it comes to a halt. As a result,
in \cref{fig:experiment-teaser-pos}, 
the chain growth rate under the \PoSteaserattack is nearly zero.

As in the \teaserattack, the adversary starts by producing a private chain. Assuming the adversary's block production rate $\blkratetimeAdv$ is less than the honest chain growth rate before the attack ($\blkratetimeGrowthSilent$), the probability that the adversary produces a chain of length $k$ before the honest chain reaches length $k$ is $e^{-O(k)}$ \cite{nakamoto_paper,dem20}. This means that with probability $\epsilon$, the adversary eventually produces a private chain of length $k = O(\log(1/\epsilon))$, of which it can announce equivocations during the attack.
Since this chain is longer than the honest chain, it has higher scheduling priority.
It takes honest nodes $k/\bwtime$ time to process such a chain, during which time, honest nodes do not process blocks on the honest chain. So, any honest blocks produced within $k/\bwtime$ time after the first honest block at height $h$ do not grow the honest chain (\cref{fig:attack-teaser}(e)). If $\blkratetimeHon k/\bwtime$ is large, then there are many honest blocks that do not lead to chain growth, causing the chain growth rate $\blkratetimeGrowth$ to drop
(\cref{fig:experiment-teaser-pos}). 
As in the \teaserattack, if the adversary's block production rate $\blkratetimeAdv$ exceeds $\blkratetimeGrowth$, then the adversary succeeds in maintaining the number of block productions required for the attack to go on forever. This eventually slows honest chain growth to a halt. Thus, if $\blkratetimeHon k/\bwtime$ is large, \ie, $\blkratetimeHon = \Omega(1/k) = \Omega\left(\frac{1}{\log(1/\epsilon)}\right)$, then the attack succeeds with probability $\epsilon$.
\section{Protocol \& Model Details}
\label{sec:algos-reference}

\subsection{Nakamoto Consensus Pseudocode}
\label{sec:algos-reference-pseudocode}

\begin{algorithm}[tb]%
    \caption{%
        Idealized NC protocol $\protocol$ with scheduling policy
        (helper functions: \cref{sec:algos-reference-helperfunctions},
        environment $\Env$: \appendixRef{\cref{sec:algos-reference-environment}},
        functionality $\FtreePoW$: \appendixRef{\cref{alg:hdrtree-pow}})%
    }%
    \label{alg:generic-lc-protocol}%
    \begin{algorithmic}[1]%
        \scriptsize%

        \LineComment{Global counter of \timeslots $t \gets 1,2,...$ of duration $\tau$ ($\tau \to 0$ for PoW)}
        
        \On{$\Call{init}{\genesisHeaderChain, \mathsf{genesisTxs}}$}
                \label{loc:generic-lc-protocol-init}
            \LineComment{Initialize header tree $\hT$, longest processed chain $\dC$, and mappings from block header to content $\TxsMap$}
            \State $\hT, \dC \gets \{\genesisHeaderChain\}, \genesisHeaderChain$
            \State $\TxsMap[\genesisHeaderChain] \gets \mathsf{genesisTxs}$
                \Comment{Unset entries of $\TxsMap$ are $\BLOCKUNKNOWN$}
        \EndOn
        
        \On{$\Call{receivedHeaderChain}{\Chain}$}
                \label{loc:generic-lc-protocol-receiveheader}
                \Comment{Called by $\Env$ or $\Adv$}
            \State \Assert{$\FtreePoW.\Call{verify}{\Chain}$}
                \Comment{Validate header chain}
            \State $\hT \gets \hT \cup \operatorname{prefixChainsOf}(\Chain)$
                \Comment{Add $\Chain$ and its prefixes to $\hT$}
            \State $\Env.\Call{broadcastHeaderChain}{\Chain}$
        \EndOn
        
        \On{$\Call{receivedContent}{\Chain, \txs}$}
                \label{loc:generic-lc-protocol-receivecontent}
                \Comment{Called by $\Env$ or $\Adv$}
            \LineComment{Defer processing the content until all prefixes' contents are processed}
            \State \textbf{defer until} $\forall \Chain' \prec \Chain\colon \TxsMap[\Chain'] \neq \BLOCKUNKNOWN$
            \State \Assert{$\Chain.\mathsf{txsHash} = \operatorname{Hash}(\txs)$}
            \State $\Call{receivedHeaderChain}{\Chain}$
                \Comment{Validate header chain}
            \State \Assert{$\operatorname{AreTxsValid}(\txs)$}
                \Comment{Validate content of chain}
            \State $\TxsMap[\Chain] \gets \txs$
            \State $\Env.\Call{uploadContent}{\Chain, \txs}$
            \LineComment{Update the longest processed chain}
            \State $\Tree' \gets \{ \Chain' \in \hT \mid \TxsMap[\Chain'] \neq \BLOCKUNKNOWN \}$
            \State $\dC \gets \argmax_{\Chain' \in \Tree'} \len{\Chain'}$
        \EndOn
        
        \At{\timeslot $t \gets 1,2,...$}
                \label{loc:generic-lc-protocol-mainloop}
                \Comment{NC protocol main loop}
            \State $\txs \gets \Env.\Call{receivePendingTxs}{\null}$
            \LineComment{Produce and disseminate a new block if eligible}
            \If{$\Chain' \neq \bot$ \textbf{with} $\Chain' \gets \FtreePoW.\Call{extend}{\dC, \txs}$}
                \State $\Env.\Call{broadcastHeaderChain}{\Chain'}$
                \State $\Env.\Call{uploadContent}{\Chain', \txs}$
            \EndIf
                \LineComment{Confirm all but the last $\confDepth$ blocks on the longest processed chain}
            \State $\LOG{}{t} \gets \operatorname{txsLedger}(\TxsMap, \dC\trunc{\confDepth})$
                \label{loc:generic-lc-protocol-confirmation}
                \label{loc:generic-lc-protocol-ledger}
                \Comment{Ledger of node $p$ at $t$: $\LOG{p}{t}$}
        \EndAt

        \Throughout
            \State{Download content for some $\Chain$ chosen by scheduling policy (\eg \appendixRef{\cref{alg:longest-header-chain-rule}})}
        \EndThroughout
        \label{loc:generic-lc-protocol-downloadrule}
    \end{algorithmic}%
\end{algorithm}%
\begin{algorithm}[tb]%
    \caption{%
        Idealized functionality $\FtreePoW$:
        block production lottery and header chain structure
        for PoW
        (helper functions: \cref{sec:algos-reference-helperfunctions})%
    }%
    \label{alg:hdrtree-pow}%
    \begin{algorithmic}[1]%
        \scriptsize%
        
        \On{$\Call{init}{\genesisHeaderChain, \mathsf{numNodes}}$}
            \State $N \gets \mathsf{numNodes}$
            \State $\Tree \gets \{\genesisHeaderChain\}$
                \Comment{Global set of valid header chains}
        \EndOn

        \On{$\Call{extend}{\Chain, \txs}$ \textbf{from} node $P$ (possibly adversary) \textbf{at} \timeslot $t$}
                \label{loc:hdrtree-pow-binding}
            \LineComment{Abstraction of proof-of-work lottery: each node can call this once per slot and produces a block with probability $\blkrateslot/N$ independently of other nodes and slots}
            \If{$\Lottery[P,t] \neq \bot$}
                \Return{$\bot$} 
                \Comment{Only one ticket per node and slot}
            \EndIf
            \State $\Lottery[P,t] \overset{\$}{\gets} (\TRUE \text{ with probability $\blkrateslot/N$, else } \FALSE)$
            \label{loc:hdrtree-pow-blockproductionlottery}
            \If{$\Chain\in\Tree \land \Lottery[P,t]$}
                    \Comment{Parent chain $\Chain$ is valid and lottery was won?}
                \LineComment{Produce a new block header extending $\Chain$}
                \State $\Chain' \gets \Chain\|\operatorname{newBlock}(\mathsf{txsHash}\colon \operatorname{Hash}(\txs))$
                \State $\Tree \gets \Tree \cup \{\Chain'\}$
                    \Comment{Register new header chain in header tree}
                \State \Return{$\Chain'$}
            \EndIf
            \State \Return{$\bot$}
        \EndOn
        
        \On{$\Call{verify}{\Chain}$}
            \State \Return{$\Chain \in \Tree$}
                \Comment{Header chain is valid if previously added to header tree}
        \EndOn
    \end{algorithmic}%
\end{algorithm}%
\begin{algorithm}[tb]
    \caption[]{\ruleLc rule $\dlrulelong$
    }
    \label{alg:longest-header-chain-rule}
    \begin{algorithmic}[1]
        \scriptsize
        \RealFunction{$\operatorname{dlLongestHdrChain}(\hT, \TxsMap)$}
            \State $\Tree' \gets \{ \Chain \in \Tree' \mid \TxsMap[\Chain] = \BLOCKUNKNOWN \}$ \Comment{Ignore processed chains}
            \State $\Chain \gets \argmax_{\Chain'\in\Tree'} \len{\Chain'}$   \Comment{Select the longest chain} \label{loc:pseudocode-longest-headear-chain-decision-line}
            \State $\Chain' \gets \argmin_{\Chain''\preceq\Chain\colon  \TxsMap[\Chain'']=\BLOCKUNKNOWN} \len{\Chain''}$ 
            \Comment{First unknown block on that chain
            (if non-existent: $\bot$)
            }
            \State \Return{$\Chain'$}
        \EndRealFunction
    \end{algorithmic}
\end{algorithm}

Pseudocode of an idealized NC protocol $\protocol$ is provided in \appendixRef{\cref{alg:generic-lc-protocol}}.
Details of the PoW-based block production lottery, \ie, of production and verification of blocks, are abstracted through an idealized functionality $\FtreePoW$ whose pseudocode is provided in \appendixRef{\cref{alg:hdrtree-pow}} (\cf~\cite[Fig.~2]{sleepy}, \cite[Alg.~3]{bwlimitedposlc}).
Pseudocode for the \rulelc scheduling policy $\dlrulelong$ is provided in \appendixRef{\cref{alg:longest-header-chain-rule}}.
Helper functions used in the pseudocode are detailed in \cref{sec:algos-reference-helperfunctions}.

\subsection{Helper Functions for Nakamoto Consensus Pseudocode}
\label{sec:algos-reference-helperfunctions}

\begin{itemize}
      \item $\operatorname{Hash}(\txs)$:\;\;
            Cryptographic hash function to produce
            a binding commitment to $\txs$
            (modelled as a random oracle)

      \item $\Chain' \preceq \Chain$, $\Chain \succeq \Chain'$:\;\;
            Relation that $\Chain'$ is a prefix of $\Chain$

      \item $\Chain \| \Chain'$:\;\;
            Concatenation of $\Chain$ and $\Chain'$

      \item $\len{\Chain}$:\;\;
            Length of $\Chain$

      \item $(\TRUE \text{ with probability $x$, else } \FALSE)$:\;\;
            Bernoulli random variable with success probability $x$

      \item $\operatorname{prefixChainsOf}(\Chain)$:\;\;
            Set of prefixes of $\Chain$, \ie, all $\Chain'$ with $\Chain' \preceq \Chain$

      \item $\operatorname{newBlock}(\mathsf{txsHash}\colon \operatorname{Hash}(\txs))$ and
      \\
      $\operatorname{newBlock}(\mathsf{time}\colon t, \mathsf{node}\colon P, \mathsf{txsHash}\colon \operatorname{Hash}(\txs))$:\;\;
            Produce a new PoW and PoS block header with
            given parameters, respectively

      \item $\operatorname{txsLedger}(\TxsMap, \Chain)$:\;\;
            Concatenates the block contents stored in $\TxsMap$ for the blocks along the chain $\Chain$, to obtain the corresponding transaction ledger
\end{itemize}

\subsection{Bounded-Capacity Model Environment $\Env$}
\label{sec:algos-reference-environment}

\import{./figures/}{fig-model.tex}

We study PoW NC (\appendixRef{\cref{sec:algos-reference-pseudocode}})
using the following model
for a network $\Env$ with finite capacity (\appendixRef{\cref{fig:model}}), and for the powers and limits of an adversary $\Adv$.

The environment $\Env$ initializes $N$ nodes and lets $\Adv$ corrupt up to $\beta N$ nodes at the beginning of the execution. Corrupted nodes are controlled by the adversary. Honest nodes run $\protocol$.
The environment maintains a mapping $\Env.\TxsMap$ from block headers to the block content (transactions). This mapping is referred to as the `cloud' in 
\appendixRef{\cref{fig:model}}.
$\Env$ also maintains for each node a queue of pending block headers
to be delivered after a delay determined by the adversary.
If $\Adv$ has not instructed $\Env$ to deliver a header $\DeltaHeader$ real time after it was added to the queue of pending block headers,
then $\Env$ delivers it to the node.

Honest nodes and $\Adv$ interact with $\Env$ via the following functions:
\begin{itemize}
      \item $\Env.\Call{broadcastHeaderChain}{\Chain}$:

            \noindent If called by an honest node, $\Env$
            enqueues $\Chain$ in the queue of pending block headers for each node, and notifies $\Adv$.
            Then, for each node $P$, on receiving $\Call{deliver}{\Chain,P}$ from $\Adv$,
            or when $\DeltaHeader$ time has passed since $\Chain$
            was added to the queue of pending headers, $\Env$ triggers $P.\Call{receivedHeaderChain}{\Chain}$.

      \item $\Env.\Call{uploadContent}{\Chain, \txs}$:

            \noindent $\Env$ stores a mapping from the header chain $\Chain$ to the content $\txs$ of its last block by setting $\Env.\TxsMap[\Chain] = \txs$.
            $\Env$ only stores the content $\txs$
            if $\mathrm{Hash}(\txs) = \Chain.\mathsf{txsHash}$.

      \item $\Env.\Call{receivePendingTxs}{\null}$:

            \noindent $\Env$ generates a set of pending
            transactions and returns them.
    
    \item
        If node $P$ at \timeslot $t$ requests the content associated with a block header $\Chain$, $\Env$ acts as follows.
If $\Env.\TxsMap[\Chain]$ is set, then let $\txs = \Env.\TxsMap[\Chain]$ (if not, $\Env$ ignores the request).
If the request was received from an honest node $P$,
if $\Env$ has recently triggered $P.\Call{receivedContent}{.}$ at a rate below $\bwtime$,
then $\Env$ triggers  $P.\Call{receivedContent}{\Chain, \txs}$ (else, $\Env$ ignores the request).
If the request was received from $\Adv$, $\Env$ sends $(\Chain, \txs)$ to $\Adv$.
\end{itemize}

At all times, $\Adv$ can trigger $P.\Call{receivedHeaderChain}{\Chain}$
and $P.\Call{receivedContent}{\Chain, \txs}$
for honest nodes $P$ (bypassing header delay and capacity constraint in an adversarially chosen way).

\section{Full Security Proof}
\label{sec:fullproof}

This section provides a self-contained
proof of the argument
developed in \cref{sec:proof}.

Nodes are identified using $p, q$.
We distinguish between three notions of `time':
\emph{\Timeslots} of $\protocol$ are indicated by $r, s, t$.
\Timeslots in which one or more blocks are produced form a sub-sequence $\{t_k\}$, defined in \cref{sec:fullproof-definitions}.
\emph{\Iindices} into this sub-sequence are denoted by $i, j, k$.
The physical parameters of our model,
header propagation delay $\DeltaHeader$ and capacity $\bwtime$, as well as the mining rate $\blkratetime$, are specified in units of \emph{real time}.

We denote by $\dC_p(t)$ the longest fully processed chain of an honest node $p$ at the end of \timeslot $t$, and let $\len{b}$ denote the height of a block $b$. We use the same notation $\len{\Chain}$ to denote the length of a chain $\Chain$, define $L_p(t)=\len{\dC_p(t)}$ and $L_{\min}(t) = \min_p L_p(t)$
(where ``$\min_p$'' ranges only over honest nodes).

We denote intervals of \iindices (or \timeslots) as $\intvl{i}{j} \triangleq \{i+1,...,j\}$, with the convention that $\intvl{i}{j} \triangleq \emptyset$ for $j \leq i$.
We study executions over a finite horizon of $\Thorizon$ \timeslots (or $\Khorizon$ \iindices), and any interval $\intvl{i}{j}$ with $i < 0$ or $j > \Khorizon$ considered truncated accordingly.
The notation $\intvl{i}{j} \intvlg K$ (resp.\ $\intvlgeq, \intvll, \intvlleq, \intvleq$) is short for $j-i > K$ (resp.\ $\geq, <, \leq, =$).
In the analysis, we denote with upper-case Latin letters several random processes over \iindices (\eg, $\Xat{k}$) or \timeslots (\eg, $\Hat{t}$).
For any set $I$ of \iindices (analogously for \timeslots), we define $\Xat{I} \triangleq \sum_{k \in I} X_k$.

We denote by $\kappa$ the security parameter. An event $\Event_{\kappa}$ occurs \emph{with overwhelming probability} if $\Prob{\Event_{\kappa}} \geq 1 - \negl(\kappa)$.
Here, a function $f(\kappa)$ is \emph{negligible} $\negl(\kappa)$, if for all $n>0$, there exists $\kappa_n^*$ such that for all $\kappa > \kappa_n^*$, $f(\kappa) < \frac{1}{\kappa^n}$.

\subsection{Probabilistic Model for PoW NC Executions}
\label{sec:fullproof-probmodel}

Recall that the protocol runs in \timeslots of duration $\slotduration$.
A \emph{block production opportunity} (\BPO) is a pair $(p,t)$ where according
to the PoW block production lottery,
node $p$ is eligible to produce a block in \timeslot $t$.
A \BPO is called \emph{honest} (resp.\ \emph{adversary}) if node $p$ is honest (resp.\ adversary).
The random variables $H_t$ and $A_t$ 
denote
the number of honest and adversary 
\BPOs
in \timeslot $t$, respectively.
When the number of nodes $N\to\infty$ and each node holds an equal rate of block production, by the Poisson approximation of a binomial random variable,
we have $\Hat{t}\overset{\text{i.i.d.}}{\sim}\mathrm{Poisson}((1-\beta)\rho)$
and $\Aat{t}\overset{\text{i.i.d.}}{\sim}\mathrm{Poisson}(\beta\rho)$, independent of each other and across \timeslots.
The total number of \BPOs per \timeslot is $\Qat{t} \triangleq \Hat{t} + \Aat{t}$.
An \emph{execution} refers to a particular realization of the random process $\{(\Hat{t}, \Aat{t})\}$.

\subsection{Definitions}
\label{sec:fullproof-definitions}

\myparagraph{\sltGood, \sltBad, and \sltEmpty \Timeslots}
\Timeslots without a \BPO are called \emph{`\sltempty'}.
A \timeslot is \emph{`\sltgood'} iff
it has 
exactly one honest \BPO and no adversary \BPOs,
and is followed by $\goodsep$ \sltempty \timeslots.
This definition is inspired by convergence opportunities \cite{pss16,sleepy,kiffer2018better}, loners \cite{dem20}, and laggers \cite{ren}.
Here, $\goodsep$ is an analysis parameter.
We define another analysis parameter $\goodsepbw$
which is related to $\goodsep$ as 
\begin{IEEEeqnarray}{C}
    \label{eq:goodsep-bw-equation}
    (\goodsep+1)\slotduration \triangleq \DeltaHeader + \goodsepbw / \bwtime.
\end{IEEEeqnarray}
Thus, $\goodsep, \goodsepbw$ are chosen such
that for a \sltgood \timeslot, every honest node can 
receive the block header for the honest \BPO, and
process content for $\goodsepbw$ blocks, before the next \BPO. 
Any non-\sltempty \timeslot which is not \sltgood is called \emph{`\sltbad'}.
\begin{definition}
    \label{def:slots}
    We call a \timeslot $t$ \emph{\sltgood}, \emph{\sltbad}, \emph{\sltempty},
    respectively,
    denoted as $\predGood{t}$, $\predBad{t}$, $\predEmpty{t}$, respectively, iff:
    \begin{IEEEeqnarray}{rCl}
        \predGood{t} &\;\triangleq\;& (\Hat{t} = 1) \land (\Aat{t} = 0) \nonumber \\ && \quad {}\land{} (\Hin{t}{t+\goodsep} + \Ain{t}{t+\goodsep} = 0)
        \IEEEeqnarraynumspace\\
        \predBad{t} &\;\triangleq\;& (\Hat{t} + \Aat{t} > 0) \land \lnot\predGood{t}
        \IEEEeqnarraynumspace\\
        \predEmpty{t} &\;\triangleq\;& (\Hat{t} + \Aat{t} = 0).
        \IEEEeqnarraynumspace
    \end{IEEEeqnarray}
\end{definition}
Note that $\predEmpty{t} = \lnot\predGood{t} \land \lnot\predBad{t}$.

We denote by $t_k$ the $k$-th non-\sltempty \timeslot.
Then, we can introduce random processes over \emph{\iindices},
with \iindex $k$ corresponding
to the $k$-th non-\sltempty \timeslot $t_k$.
Considering only \iindices simplifies analysis by not having to deal with \sltempty \timeslots.
The process $\{\Gat{k}\}$ counts good \timeslots,
with $\Gat{k} \triangleq \Ind{ \predGood{t_k} }$.
Correspondingly, $\{\Bat{k}\}$ counts bad \timeslots,
$\Bat{k} \triangleq 1 - \Gat{k}$.

The following fact shows the distribution of \sltgood \iindices.
\RestatePropXiIsIid*
\begin{proof}
    First, for any $k$,
    \begin{IEEEeqnarray}{rCl}
        \Prob{\Gat{k} = 1} &=& \Prob{ \predGood{t_k} \mid \lnot \predEmpty{t_k} } \\
        &=& \frac{\Prob{\predGood{t_k}}}{\Prob{\lnot \predEmpty{t_k}}}
        = \frac{(1-\beta)\blkrateslot e^{-\rho(\goodsep+1)}}{1-e^{-\blkrateslot}}. \IEEEeqnarraynumspace
    \end{IEEEeqnarray}
    Let $\probEmpty \triangleq \Prob{\Hat{t}+\Aat{t}=0}$.
    Take an \iid random process $\{T_k\}$ with $\Prob{T_k = t} = (1-\probEmpty)\probEmpty^t$ for $t \geq 0$.
    The random variables $\{T_k\}$ describe the inter-arrival times between non-empty slots.
    Take another \iid random process $\{\Gat{k}'\}$, independent of $\{T_k\}$, such that $\Gat{k}' = 1$ with probability $\Prob{\Hat{t} = 1 \land \Aat{t} = 0 \mid \Hat{t}+\Aat{t}>0}$ and $\Gat{k}' = 0$ otherwise.
    The random process $\{\Gat{k}\}$ can be equivalently defined as $G_k = 1$ iff $G_k' = 1$ and $T_k \geq \goodsep$.
    The independence of the random variables $\{\Gat{k}\}$ then follows from the independence of the random variables $\{(T_k, \Gat{k}')\}$.
\end{proof}
Throughout the analysis, we assume
$\probGood > \frac{1}{2}$ (`honest majority' assumption).

\myparagraph{Some \sltGood \Timeslots Imply Growth}
A special role is played by \sltgood \timeslots $t_k$
with the additional property that 
the block produced
at $t_k$ is `soon' processed by all honest nodes.
Intuitively, these lead to \emph{chain growth},
the cornerstone of NC security~\cite{sleepy,dem20}.
We count these \timeslots with $\{\Dat{k}\}$,
and all other non-\sltempty \timeslots with $\{\Nat{k}\}$.
Specifically,
$\Dat{k} \triangleq 1$ if $\predGood{t_k}$
\emph{and} the block produced at $t_k$
has been processed by all honest nodes by the end
of \timeslot $t_k + \goodsep$,
$\Dat{k} \triangleq 0$ otherwise,
and $\Nat{k} \triangleq 1 - \Dat{k}$.
Finally, we define two random walks
on \iindices of non-\sltempty \timeslots
with increments
$\{\Xat{k}\}$ and $\{\Yat{k}\}$
that are handy for the definition of probabilistic
and combinatorial pivots:
\begin{IEEEeqnarray}{rClCrCl}
    \label{eq:random_walks_X_and_Y}
    \Xat{k} &\triangleq& \Gat{k} - \Bat{k}
    &\qquad\qquad&
    \Yat{k} &\triangleq& \Dat{k} - \Nat{k}
    \IEEEeqnarraynumspace
\end{IEEEeqnarray}
Note that the increments $\{\Xat{k}\}$
are \iid, and not affected by adversary action,
while the increments $\{\Yat{k}\}$ \emph{do depend}
on the adversary action and are thus in particular
\emph{not} \iidPERIOD.
Also note that $\forall k\colon Y_k \leq X_k$ since $D_k = 1 \implies G_k = 1$.

\myparagraph{Probabilistic and Combinatorial Pivots}
\begin{definition}
    \label{def:pp}
    We call an \iindex $k$ a \emph{\sltpp} (short for \emph{probabilistic pivot}),
    denoted as $\predPP{k}$, iff:
    \begin{IEEEeqnarray}{rCl}
        \predPP{k} &\;\triangleq\;&  ( \forall \intvl{i}{j} \ni k\colon  \Xin{0}{i} < \Xin{0}{k} \leq \Xin{0}{j} )
        \IEEEeqnarraynumspace
    \end{IEEEeqnarray}
\end{definition}
\begin{definition}
    \label{def:cp}
    We call an \iindex $k$ a \emph{\sltcp} (short for \emph{combinatorial pivot}),
    denoted as $\predCP{k}$, iff:
    \begin{IEEEeqnarray}{rCl}
        \predCP{k} &\;\triangleq\;&  ( \forall \intvl{i}{j} \ni k\colon  \Yin{0}{i} < \Yin{0}{k} \leq \Yin{0}{j} )
        \IEEEeqnarraynumspace
    \end{IEEEeqnarray}
\end{definition}
This definition of \sltpps and \sltcps decouples \cite[Def.~5]{sleepy} into its \emph{probabilistic} aspects~\cite[Sec.~5.6.3]{sleepy} and \emph{combinatorial} aspects~\cite[Sec.~5.6.2]{sleepy},
and casts them as conditions
on a random walk,
inspired by~\cite{dem20,close-latency-security-ling-ren}, to simplify the analysis.
The decoupling is one of the key differences from the analysis in \cite{sleepy} (see \cref{fig:analysis-comparison-sleepy}).
Note that a \sltcp is also a \sltpp because $Y_i \leq X_i$.
Also, \appendixRef{\cref{def:pp}} is equivalent to \cref{def:pp-informal},
and \appendixRef{\cref{def:cp}} is equivalent to \cref{def:cp-informal}
(\cf proof of \cref{prop:pivot-conditions-equivalence}).

\subsection{Analysis in the Probabilistic Model}
\label{sec:fullproof-analysis-overview}

We now develop the tools needed to prove safety and liveness of PoW NC in the bounded-capacity model, following \cref{fig:analysis-comparison-sleepy}.
First, analogously to the combinatorial argument of~\cite{sleepy},
we show (\cref{sec:fullproof-analysis-cps-stabilize})
that blocks from \sltcps \emph{stabilize}, \ie,
they remain in the longest processed chain of all honest nodes forever.
This is useful because
\emph{if we know that \sltcps occur frequently},
then honest nodes can confirm transactions that
must lie in 
the prefix of
a \sltcp's block (\emph{safety}),
and \sltcps' blocks (being produced by honest nodes)
bring any outstanding transactions onto chain (\emph{liveness}).
We then show that \sltcps do indeed occur frequently:
We show with a new probabilistic argument
(\cref{sec:fullproof-analysis-many-pps})
that \emph{\sltpps are abundant}, \ie,
in every `sufficiently long' interval (\ie, of length $\Omega(\kappa^2)$),
a constant fraction of the \timeslots are \sltpps (\appendixRef{\cref{lem:many-pps}}).
Then, we show with a new combinatorial argument
(\appendixRef{\cref{sec:fullproof-analysis-many-pps-one-cps}})
that \emph{the adversary cannot prevent all \sltpps from becoming \sltcps},
\ie, in every `sufficiently long' interval,
there is at least one \sltcp (\appendixRef{\cref{lem:many-pps-one-cps}}).
As a result,
if honest nodes confirm transactions that are still on their longest processed chain
after `sufficiently long' time (\ie, confirmation latency $\Omega(\kappa^2)$),
then PoW NC $\protocol$ is safe and live under bounded capacity.

\subsubsection{Combinatorial Pivots Stabilize}
\label{sec:fullproof-analysis-cps-stabilize}

We now show that the honest block produced in a \timeslot corresponding to a \sltcp persists in the longest processed chain of all honest nodes forever after $\goodsep$ \timeslots after it was produced.
Towards this, we first show that if $\Dat{k}=1$, \ie, if all honest nodes process the block produced in the \sltgood \timeslot $t_k$,
then the length of the longest processed chain of honest nodes increases, \ie, a \emph{chain growth event} (made precise in \appendixRef{\cref{prop:chain-growth}}).
Due to this, since, by \appendixRef{\cref{def:cp}}, all intervals around a \sltcp contain more \iindices with $\Dat{k}=1$ than those with $\Dat{k}=0$,
there can never be some honest node with a longest processed chain that does not contain the block corresponding to the \sltcp (\appendixRef{\cref{lem:cps-stabilize}}).
This is because there are not enough blocks for any other chain to outnumber the \emph{chain growth events} that contributed to the growth of the processed chain containing the \sltcp's block.
Thus, the block corresponding to the \sltcp remains in all honest nodes' longest processed chains forever.
\appendixRef{\cref{lem:cps-stabilize}} is proven
analogously to the combinatorial argument of~\cite{sleepy}.

Recall that $\dC_p(t)$ is the longest processed chain of node $p$ at the end of \timeslot $t$, $\len{\Chain}$ denotes the length of chain $\Chain$, $L_p(t) = \len{\dC_p(t)}$ and the length of the ``shortest (across honest nodes) longest processed chain'' is $L_{\min}(t) = \min_p L_p(t)$ (where ``$\min_p$'' ranges only over honest nodes).
The following proposition says that 
$L_{\min}(t)$
grows for every \iindex $k$ with $\Dat{k}=1$,
\ie, these are ``chain growth events''.
\begin{proposition}
\label{prop:chain-growth}
If $\Dat{k} = 1$, then $L_{\min}(t_k + \goodsep) \geq L_{\min}(t_k-1) + 1$.
\end{proposition}
\begin{proof}
Since $\Dat{k} = 1$, \timeslot $t_k$ is a \sltgood \timeslot.
Let $b$ be the unique honest block produced in \timeslot $t_k$, and let honest node $p$ be its producer.
Since honest nodes produce blocks on their longest processed chain, $\len{b} = L_p(t_k-1) + 1 \geq L_{\min}(t_k-1) + 1$.
Further, $\Dat{k} = 1$ means that the block $b$ is processed by all honest nodes by the end of \timeslot $t_k + \goodsep$. Therefore, $L_{\min}(t_k + \goodsep) \geq \len{b}$.
\end{proof}

\begin{lemma}
\label{lem:cps-stabilize}
Let $b^*$ be the block produced in a non-\sltempty \timeslot $t_k$ such that $\predCP{k}$. 
Then, for all header chains $\Chain'$ that are valid at \timeslot $t \geq t_k + \goodsep$ and $\len{\Chain'} \geq L_{\min}(t)$: $b^* \in \Chain'$.
Also then, for all honest nodes $p$ and for all \timeslots $t \geq t_k + \goodsep$: $b^* \in \dC_p(t)$.
\end{lemma}

The following proposition is helpful for proving \appendixRef{\cref{lem:cps-stabilize}}.
\begin{proposition}
    \label{prop:chain-growth-interval}
    For any $i < j$,
    \begin{IEEEeqnarray}{C}
        L_{\min}(t_j + \goodsep) \geq L_{\min}(t_{i+1} - 1) + \Din{i}{j}.
    \end{IEEEeqnarray}
\end{proposition}
\begin{proof}
By noting that if $\Dat{k} = 1$, then $t_{k+1} > t_k + \goodsep$, and adding the result of \appendixRef{\cref{prop:chain-growth}} for each \iindex with $\Dat{k}=1$.
\end{proof}

\begin{proof}[Proof of \appendixRef{\cref{lem:cps-stabilize}}]
    Note that $\dC_p(t)$ is a valid chain at \timeslot $t$ and $\len{\dC_p(t)} = L_p(t) \geq L_{\min}(t)$. Therefore, it suffices to show the first claim of the lemma.
    
    For contradiction, let $s \geq t_k + \goodsep$ be the first \timeslot in which 
    there is a valid header chain $\Chain'$ such that 
    $\len{\Chain'} \geq L_{\min}(s)$ and $b^* \not\in \Chain'$.
    
    Let $b'$ be the block with maximum height on the chain $\Chain'$, such that $b'$ was produced in a \timeslot $t_i$ with $D_i = 1$.
    For $\Chain'$ to be a valid chain at \timeslot $s$, we need $t_i \leq s$.
    Since the block $b'$ is produced by an honest node, $b'$ extends $\dC_q(t_i-1)$ for some honest node $q$.
    Therefore, $\dC_q(t_i-1)$ is a prefix of $\Chain'$.
    This means that $b^* \not\in \dC_q(t_i-1)$.
    Moreover, $\len{\dC_q(t_i-1)} = L_q(t_i-1) \geq L_{\min}(t_i-1)$.
    If $i > k$, then $t_i-1 \geq t_k + \goodsep$ (since $D_k = 1$) and $t_i - 1 < s$ (shown above). 
    This is a contradiction because we assumed that $s$ is the first \timeslot such that $s \geq t_k + \goodsep$ and 
    $b^* \notin \Chain'$ and $\len{\Chain'} \geq L_{\min}(s)$ for some valid chain $\Chain'$.
    Since $b^*$ is the only block produced in slot $t_k$, $i=k$ is also not possible.
    We conclude that $i < k$.
    
    Since $D_i = 1$ and $b'$ is produced in \timeslot $t_i$,
    \begin{IEEEeqnarray}{C}
    \label{eq:block-i-download}
        L_{\min}(t_i + \goodsep) \geq \len{b'}.
    \end{IEEEeqnarray}
    By assumption,
    \begin{IEEEeqnarray}{C}
    \label{eq:block-j-switch}
        \len{\Chain'} \geq L_{\min}(s).
    \end{IEEEeqnarray}
    
    Let $t_j$ be the last non-\sltempty \timeslot such that $t_j \leq s$. Note that $j \geq k > i$. 
    We must consider two cases:
    \begin{enumerate}
    \item Case 1: $s \geq t_j + \goodsep$ or $\Dat{j}=0$.
    If $\Dat{j}=0$, we don't have to worry about whether the block from slot $t_j$ was processed by all honest nodes.
    If $\Dat{j} = 1$ but $s \geq t_j + \goodsep$, then we know that all honest nodes have processed the block from slot $t_j$ before the end of \timeslot $s$. That is,
    \begin{IEEEeqnarray}{rCl}
        L_{\min}(s) 
        &\geq& L_{\min}(t_j + \goodsep)  \IEEEeqnarraynumspace \\
        &\geq& L_{\min}(t_{i+1}-1) + \Din{i}{j} \quad \text{(from \cref{prop:chain-growth-interval})}  \IEEEeqnarraynumspace \\
        \label{eq:chain-growth-case1}
        &\geq& L_{\min}(t_{i} + \goodsep) + \Din{i}{j}. \IEEEeqnarraynumspace 
    \end{IEEEeqnarray}
    By definition of $b'$, all blocks in $\Chain'$ appearing after $b'$ correspond to \iindices $l$ with $\Dat{l}=0$. These blocks must be from distinct \iindices greater than $i$ but at most $j$. So,
    \begin{IEEEeqnarray}{C}
    \label{eq:adv-chain-case1}
        \len{\Chain'} \leq \len{b'} + \Nin{i}{j}.
    \end{IEEEeqnarray}
    From \eqref{block-i-download,block-j-switch,chain-growth-case1,adv-chain-case1}, we derive
    \begin{IEEEeqnarray}{rCl}
    \label{eq:pivot-contra-case1}
        \Din{i}{j} \leq \Nin{i}{j} \implies \Yin{i}{j} \leq 0 \implies \Yin{0}{i} < \Yin{0}{j} \IEEEeqnarraynumspace
    \end{IEEEeqnarray}
    where $i < k \leq j$.
    
    \item Case 2: $t_j \leq s < t_j + \goodsep$ and $\Dat{j} = 1$.
    In this case, the block from slot $t_j$ may not have enough time to be processed by all honest nodes before the end of slot $s$.
    However, for any $l < j$ such that $\Dat{l} = 1$, $t_l + \goodsep < t_j \leq s$, so there is enough time to process the block from \timeslot $t_l$.
    Let $l \in\intvl{i}{j-1}$ be the greatest index such that $\Dat{l} = 1$. Then, $t_j > t_l + \goodsep$, and $\Din{i}{l} = \Din{i}{j-1}$.
    \begin{IEEEeqnarray}{rCl}
        \label{eq:chain-growth-case2}
        L_{\min}(s) 
        &\geq& L_{\min}(t_j) \\
        &\geq& L_{\min}(t_l + \goodsep) \\
        &\geq& L_{\min}(t_{i+1} - 1) + \Din{i}{l} \quad \text{(from \cref{prop:chain-growth-interval})} \IEEEeqnarraynumspace \\
        &\geq& L_{\min}(t_{i} + \goodsep) + \Din{i}{j-1}.
    \end{IEEEeqnarray}
    Note that since $\Dat{j}=1$, $\Nin{i}{j} = \Nin{i}{j-1}$. Therefore, as in the previous case,
    \begin{IEEEeqnarray}{C}
    \label{eq:adv-chain-case2}
        \len{\Chain'} \leq \len{b'} + \Nin{i}{j-1}.
    \end{IEEEeqnarray}
    From \eqref{block-i-download,block-j-switch,chain-growth-case2,adv-chain-case2},
    \begin{IEEEeqnarray}{rCl}
    \label{eq:pivot-contra-case2}
        \Din{i}{j-1} \leq \Nin{i}{j-1} &\implies& \Yin{i}{j-1} \leq 0 \nonumber \\ 
        &\implies& \Yin{0}{i} < \Yin{0}{j-1}. \IEEEeqnarraynumspace
    \end{IEEEeqnarray}
    Note that since we assumed $s \geq t_k + \goodsep$ and $s < t_j + \goodsep$, we know that $j > k$. Therefore, $i < k \leq j-1$.
    \end{enumerate}
    In either case, \eqref{pivot-contra-case1} or \eqref{pivot-contra-case2} contradict the assumption $\predCP{k}$ (\appendixRef{\cref{def:cp}}).
\end{proof}

\subsubsection{Probabilistic Pivots Are Abundant}
\label{sec:fullproof-analysis-many-pps}

Previous analyses of NC~\cite{sleepy,dem20} show that sufficiently long intervals contain at least \emph{one \sltpp} (\cref{fig:analysis-comparison-sleepy}(a)). This was enough for the bounded-delay analysis because in the bounded-delay setting, every \sltpp is also a \sltcp. 
However, in the bounded-capacity setting, not every \sltcp is a \sltpp, because not every \sltgood \timeslot results in 
growth of the longest processed chain of honest nodes (\cref{fig:analysis-comparison-sleepy}(b)).
Thus, existence of one \sltpp in every large interval is not enough to conclude existence of one \sltcp in every large interval. 
Instead,
we prove,
using a concentration bound on the number of \sltpps (\appendixRef{\cref{prop:lower-tailbound-ppivots}}),
that long intervals of \iindices in fact contain
\emph{a number of \sltpps proportional
to the interval length} (\appendixRef{\cref{lem:many-pps}}).
Then, in \appendixRef{\cref{sec:fullproof-analysis-many-pps-one-cps}}, we prove that out of those many \sltpps, at least one must also be a \sltcp, which allows us to continue with the safety and liveness proofs from~\cite{sleepy}.

The key challenge in proving that there are many \sltpps is that for two \iindices $k_1, k_2$, the events that $k_1$ is a \sltpp and that $k_2$ is a \sltpp are dependent, because both events depend on overlapping intervals. But a key observation is that since the \sltpp condition
(\appendixRef{\cref{def:pp}})
already holds for large intervals with high probability (\appendixRef{\cref{prop:lower-tailbound-X}}), we only need to look at the small intervals. Then, for two \iindices $k_1,k_2$ that are sufficiently far apart, these short intervals are disjoint, and thus the corresponding \sltpp conditions are independent. Therefore, we decompose a long interval of \iindices into several groups of far-apart \iindices. This is illustrated in \cref{fig:ppivot-tailbound-illustration}, each group indicated by a different color. Within each group, by a concentration bound for \iid random variables, there are many \sltpps. Further, by a union bound, the concentration holds in all the groups simultaneously with high probability. This summarizes the proof of \appendixRef{\cref{prop:lower-tailbound-ppivots}}, which culminates in \appendixRef{\cref{lem:many-pps}} showing that with overwhelming probability, there are many \sltpps in every long enough interval.

\begin{figure}[tb]
    \centering
    \begin{tikzpicture}[x=0.7cm,y=0.4cm]
        \footnotesize

        \draw [draw=none,fill=myParula05Green,opacity=0.3] (0,0) rectangle (5,3);
        \draw [draw=none,fill=myParula07Red,opacity=0.3] (0,3) rectangle (5,6.5);
        \draw [draw=none,fill=myParula07Red,opacity=0.3] (6,0) rectangle (10.5,4);
        \draw [draw=none,fill=myParula05Green,opacity=0.3] (6,4) rectangle (10.5,6.5);
        \draw [draw=none,fill=myParula01Blue,opacity=0.3] (5,0) rectangle (6,6.5);

        \draw [-Latex] (0,0) -- (10.75,0) node [below] {$k$};
        \draw [-Latex] (0,0) -- (0,6.75) node [left] {$\Xin{0}{k}$};

        \foreach \x in {0,...,10} {
                \draw (\x,0) -- ++(0,0.1) -- ++(0,-0.2) node [below] {$\x$};
            }
        \foreach \x in {1,...,10} {
                \draw (\x,0) ++(-0.5,0) node [below=1em] {$\Xat{\x}$};
            }
        \foreach \y in {0,...,6} {
                \draw (0,\y) -- ++(0.1,0) -- ++(-0.2,0) node [left] {$\y$};
            }

        \pgfmathsetmacro{\sumY}{0}
        \foreach \d [count=\di from 0] in {1,1,1,-1,1,1,1,-1,1,1} {
                \draw [-latex,thick,shorten >=2pt,shorten <=2pt] (\di,\sumY) -- ++(1,\d);
                \pgfmathsetmacro{\tmp}{\sumY+\d}
                \global\let\sumY\tmp
            }

        \draw [thick,densely dotted] (10,6) -- ++(0.5,0.5);

    \end{tikzpicture}%
    \vspace{-0.5em}%
    \caption[]{Illustration of \sltpp
        (\eqref{pivot-conditions-equivalence-randomwalks}):
        A \sltpp as an \iindex $k$
        so that
        $\Xat{k} = 1$ (\tikz[x=0.75em,y=0.75em]{ \draw [draw=none,fill=myParula01Blue,opacity=0.3] (0,0) rectangle (1,1); })
        and
        $\Xin{0}{.}$ is strictly below
        $\Xin{0}{k}$ left of $k$
        and
        weakly above
        $\Xin{0}{k}$ right of $k$
        (\tikz[x=0.75em,y=0.75em]{ \draw [draw=none,fill=myParula05Green,opacity=0.3] (0,0) rectangle (1,1); }, \tikz[x=0.75em,y=0.75em]{ \draw [draw=none,fill=myParula07Red,opacity=0.3] (0,0) rectangle (1,1); }).%
    }
    \label{fig:pivot-randomwalk}
\end{figure}

We first identify insightful
alternative characterizations
of \sltpps, and a few propositions to help prove \appendixRef{\cref{prop:lower-tailbound-ppivots}}. \appendixRef{\cref{lem:many-pps}} follows from there.
\begin{proposition}
    \label{prop:pivot-conditions-equivalence}
    \begin{IEEEeqnarray}{rCl}
        \predPP{k}
        &\iff&  (\forall \intvl{i}{j} \ni k\colon  \Xin{i}{j} > 0)
        \label{eq:pivot-conditions-equivalence-intervals}
        \IEEEeqnarraynumspace\\
        &\iff&  (\forall \intvl{i}{j} \ni k\colon  \Gin{i}{j} > \Bin{i}{j})
        \IEEEeqnarraynumspace\\
        &\iff&  (\Xat{k} = 1) \land (\forall j\geq k: \Xin{k}{j} \geq 0)
        \IEEEeqnarraynumspace\nonumber\\
        && \quad {}\land{} (\forall i<(k-1): \Xin{i}{k-1} \geq 0)
        \label{eq:pivot-conditions-equivalence-randomwalks}
        \IEEEeqnarraynumspace
    \end{IEEEeqnarray}
\end{proposition}
\begin{proof}
    Elementary, using $\Xin{i}{j} = \Xin{0}{j} - \Xin{0}{i}$.
\end{proof}
In particular, \eqref{pivot-conditions-equivalence-randomwalks}
characterizes a \sltpp as an \iindex $k$
such that $\Gat{k} = 1$
and the simple random walks
$\ell\mapsto \Xin{k}{k+\ell}$
and
$\ell\mapsto \Xin{k-1-\ell}{k-1}$
starting at
$0$
remain non-negative forever
(\cref{fig:pivot-randomwalk}).
Due to this, we easily see that the probability that any given \iindex is a \sltpp is the probability that the \iindex is \sltgood and the two random walks never return to zero (\cref{prop:ppivot-randomwalk}).
In \appendixRef{\cref{prop:lower-tailbound-X}} by a simple concentration bound over \iid random variables, we show that in all large intervals, with high probability, the random walk $\Xat{k}$ advances proportionally to the interval length (due to its positive drift).

Throughout this section, assume that
$\probGood = \frac{1}{2} + \epsGood$
with $\epsGood \in (0,1/2]$.

\begin{proposition}
    \label{prop:lower-tailbound-X}
    With $\alphaLowerTailX \triangleq 2 \epsGood^2$, $\forall \intvl{i}{j}$, $\forall \delta \geq 0$:
    \begin{IEEEeqnarray}{C}
        \Prob{\Xin{i}{j} \leq (1-\delta) 2 \epsGood (j-i)}
        \leq \exp( - \alphaLowerTailX \delta^2 (j-i)).
        \IEEEeqnarraynumspace
    \end{IEEEeqnarray}
\end{proposition}
\begin{proof}
By Hoeffding's inequality~\cite{doi:10.1080/01621459.1963.10500830}~\cite[Thm.~4]{duchi-hoeffding}.
\end{proof}
\begin{proposition}[Hoeffding's inequality~{\cite{doi:10.1080/01621459.1963.10500830} \cite[Thm.~4]{duchi-hoeffding}}]
    \label{prop:hoeffding}
    Let $Z_1, ..., Z_n$ be independent bounded random variables with
    $\forall i: Z_i \in [a,b]$, where $-\infty < a \leq b < \infty$.
    Then, $\forall t\geq0$:
    \begin{IEEEeqnarray}{rCl}
        \Prob{\left(\sum_{i=1}^n Z_i\right) \geq \Exp{\sum_{i=1}^n Z_i} + t n}
        &\leq&
        \exp\left(\frac{-2 n t^2}{(b-a)^2} \right) \IEEEeqnarraynumspace
        \\
        \Prob{\left(\sum_{i=1}^n Z_i\right) \leq \Exp{\sum_{i=1}^n Z_i} - t n}
        &\leq&
        \exp\left(\frac{-2 n t^2}{(b-a)^2} \right) \IEEEeqnarraynumspace
    \end{IEEEeqnarray}
\end{proposition}

\begin{proposition}
    \label{prop:ppivot-randomwalk}
    \begin{IEEEeqnarray}{C}
        \forall k\colon\quad
        \Prob{\predPP{k}}
        \geq \probPPFormula
        \triangleq \probPP
    \end{IEEEeqnarray}
\end{proposition}
\begin{proof}
    In \eqref{pivot-conditions-equivalence-randomwalks},
    $\predPP{k}$ is characterized
    as the intersection of three independent events:
    \begin{IEEEeqnarray}{rCl}
        \Event_1
        &\triangleq&
        \{ \Xat{k} = 1 \}
        \\
        \Event_2
        &\triangleq&
        \{ \forall\ell\colon \Xin{k}{k+\ell} \geq 0 \}
        \\
        \Event_3
        &\triangleq&
        \{ \forall\ell\colon \Xin{k-1-\ell}{k-1} \geq 0 \}
    \end{IEEEeqnarray}
    Their probabilities are easily calculated~\cite{stackexchange-math-rwreturnto0}:
    \begin{IEEEeqnarray}{C}
        \Prob{\Event_1}
        = \probGood
        \qquad
        \Prob{\Event_2} = \Prob{\Event_3}
        = (2\probGood - 1) / \probGood
        \IEEEeqnarraynumspace
    \end{IEEEeqnarray}
\end{proof}

The process $\{\Pat{k}\}$ counts \sltpps,
with $\Pat{k} \triangleq \Ind{\predPP{k}}$.
\begin{proposition}
    \label{prop:lower-tailbound-ppivots}
    With $\alphaLowerTailPP \triangleq 2 \probPP^2$,
    \begin{IEEEeqnarray}{l}
        \forall \intvl{i}{j} \intvleq 2 K_1 K_2\colon\quad
        \Prob{\Pin{i}{j} \leq (1-\delta) \probPP 2 K_1 K_2}
        \nonumber
        \\
        \qquad\qquad\qquad\qquad\qquad {}\leq{} 2 K_1 \exp(- \alphaLowerTailPP \delta^2 K_2) + \Khorizon^2 \exp(-\alphaLowerTailX K_1).
        \IEEEeqnarraynumspace
    \end{IEEEeqnarray}
\end{proposition}
\begin{proof}
    Let
    $\Event \triangleq \{\forall \intvl{i}{j} \intvlgeq K_1\colon \Xin{i}{j} > 0\}$.
    From \appendixRef{\cref{prop:lower-tailbound-X}} with $\delta=1$,
    and a union bound over all intervals
    ($\leq \Khorizon^2$ many),
    we get
    \begin{IEEEeqnarray}{C}
        \Prob{\lnot\Event} \leq \Khorizon^2 \exp(-\alphaLowerTailX K_1).
    \end{IEEEeqnarray}

    For any given index $k$, we can
    partition
    the intervals of
    \eqref{pivot-conditions-equivalence-intervals}
    into `long'
    and `short'
    intervals (length at least vs.\ less than $K_1$):
    \begin{IEEEeqnarray}{rCl}
        \Event_k
        &\triangleq&
        \{ \predPP{k} \}
        = \Event_k^{\mathrm{L}} \land \Event_k^{\mathrm{S}}
        \\
        \Event_k^{\mathrm{L}}
        &\triangleq&
        \{\forall \intvl{i}{j} \ni k, \intvl{i}{j} \intvlgeq K_1\colon \Xin{i}{j} > 0\}
        \IEEEeqnarraynumspace
        \\
        \Event_k^{\mathrm{S}}
        &\triangleq&
        \{\forall \intvl{i}{j} \ni k, \intvl{i}{j} \intvll K_1\colon \Xin{i}{j} > 0\}.
    \end{IEEEeqnarray}
    Note that $\Event_k^{\mathrm{L}} \supseteq \Event$.
    Also, 
    for any two given \iindices $k_1, k_2$ that are `far apart',
    \ie,
    if $\abs{k_1 - k_2} \geq 2 K_1$,
    then
    $\Event_{k_1}$ and $\Event_{k_2}$ are conditionally independent
    given $\Event$
    (since $\Event_{k_1}^{\mathrm{S}}$ and $\Event_{k_2}^{\mathrm{S}}$ are).

    We decompose $I^* \triangleq \intvl{i}{j} = \intvl{i}{i + 2 K_1 K_2} = \bigcup_{\ell=1}^{2K_1} I_{\ell}$:
    \begin{IEEEeqnarray}{rCl}
        \forall\ell\in\{1,...,2K_1\}\colon\quad\
        I_{\ell}
        &\triangleq&
        \{ i+0\cdot 2K_1+\ell, ...
        \nonumber\\
        && \quad{} ..., i+(K_2-1)\cdot 2K_1+\ell \}.
        \IEEEeqnarraynumspace
    \end{IEEEeqnarray}
    See \cref{fig:ppivot-tailbound-illustration}
    for illustration.
    We define corresponding events, $\forall\ell\in\{1,...,2K_1\}$:
    \begin{IEEEeqnarray}{rCl}
        \Event^*
        &\triangleq&
        \left\{ \Pat{I^*} \leq (1-\delta) \probPP 2 K_1 K_2 \right\}
        \\
        \Event_{\ell}
        &\triangleq&
        \left\{ \Pat{I_{\ell}} \leq (1-\delta) \probPP K_2 \right\}.
    \end{IEEEeqnarray}
    Clearly, $\Event^* \subseteq \bigcup_{\ell=1}^{2 K_1} \Event_\ell$.
    Thus, by a union bound,
    \begin{IEEEeqnarray}{rCl}
        \Prob{ \Event^* \cond \Event }
        &\leq&
        \sum_{\ell=1}^{2 K_1} \Prob{ \Event_\ell \cond \Event }.
        \IEEEeqnarraynumspace
    \end{IEEEeqnarray}
    Furthermore, $\forall\ell\in\{1,...,2K_1\}$,
    and with $\mu_\ell \triangleq \Exp{ \Pat{I_{\ell}} \cond \Event}$:
    \begin{IEEEeqnarray}{rCl}
        \IEEEeqnarraymulticol{3}{l}{
            \Prob{ \Event_\ell \cond \Event }
            =
            \Prob{ \Pat{I_{\ell}} \leq (1-\delta) \probPP K_2 \cond \Event }
        }
        \IEEEeqnarraynumspace
        \\\quad
        &\leqA&
        \Prob{ \Pat{I_{\ell}} \leq (1-\delta) \mu_\ell \cond \Event }
        \IEEEeqnarraynumspace
        \\
        &\leqB&
        \exp(-2 \delta^2 \mu_\ell^2 / K_2)
        \leqC
        \exp(-2 \probPP^2 \delta^2 K_2),
        \IEEEeqnarraynumspace
    \end{IEEEeqnarray}
    where
    (a) and (c)~use 
    \begin{IEEEeqnarray}{rCl}
        \mu_\ell = K_2 \Exp{\Ind{\predPP{k}} \cond \Event} &\geq& K_2 \Exp{\Ind{\predPP{k}}} \nonumber \\ &\geq& K_2 \probPP \IEEEeqnarraynumspace
    \end{IEEEeqnarray}
    (\cref{prop:ppivot-randomwalk}),
    and
    (b)~uses that
    $\{\predPP{k_1}\}$ and
    $\{\predPP{k_2}\}$
    are conditionally independent given $\Event$
    for $k_1, k_2 \in I_\ell$,
    and
    Hoeffding's inequality (\cref{prop:hoeffding}).

    To complete the proof, with $\alphaLowerTailPP = 2 \probPP^2$,
    \begin{IEEEeqnarray}{rCl}
        \Prob{ \Event^* }
        &=&
        \Prob{ \Event^* \cap \Event } + \Prob{ \Event^* \cap \lnot\Event }
        \\
        &\leq&
        \Prob{ \Event^* \cond \Event } + \Prob{ \lnot\Event }
        \\
        &\leq&
        2 K_1 \exp(-\alphaLowerTailPP \delta^2 K_2)
        + \Khorizon^2 \exp(-\alphaLowerTailX K_1).
        \IEEEeqnarraynumspace
    \end{IEEEeqnarray}
\end{proof}

\begin{lemma}
    \label{lem:many-pps}
    For
    $\Kcp = \Omega(\kappa^2)$,
    and $\Khorizon = \poly(\kappa)$,
    \begin{IEEEeqnarray}{C}
        \Prob{
            \forall \intvl{i}{j} \intvlgeq \Kcp\colon
            \Pin{i}{j} \geq (1-\delta) \probPP \Kcp}
        \nonumber
        \\
        \qquad{}\geq{} 1 - \exp(- \Omega(\kappa)) = 1 - \negl(\kappa).
    \end{IEEEeqnarray}
\end{lemma}
\begin{proof}
From \appendixRef{\cref{prop:lower-tailbound-ppivots}} by setting $K_1,K_2 = \Omega(\kappa)$ and $\Kcp = 2K_1K_2$.
\end{proof}

\subsubsection{Many Probabilistic Pivots Imply One Combinatorial Pivot}
\label{sec:fullproof-analysis-many-pps-one-cps}

The \rulelc rule $\dlrulelong$
(\appendixRef{\cref{alg:longest-header-chain-rule}})
has a few useful properties.
Intuitively, nodes using this rule 
\begin{enumerate}[(P1),leftmargin=3em]
    \item \label{item:good-download-rule-no-repeat}
    process a \BPO's block's content at most once,
    \item \label{item:good-download-rule-honest-block} either process the most recent honest block, or fully utilize their capacity to process other blocks (\ie, do not stay idle), and
    \item \label{item:good-download-rule-cutoff} prioritize blocks that were produced `recently'.
\end{enumerate}

\ref{item:good-download-rule-no-repeat} holds by construction.
\ref{item:good-download-rule-honest-block} holds because the scheduling policy $\dlrulelong$
is never idle, and 
will always process towards an honest block
when it has processed all longer chains
and there is capacity remaining.
Moreover, we expect that in a secure execution,
\ref{item:good-download-rule-cutoff} holds because the longest header chain cannot fork off too much from the longest processed chain of an honest node, otherwise it would imply a safety violation.
More precisely, due to \appendixRef{\cref{lem:cps-stabilize}},
any longest header chain in any honest node's view must extend the block produced in the most recent \sltcp, and therefore blocks with the highest processing priority must have been produced after the most recent \sltcp.
If the adversary wants to prevent honest nodes from processing the block produced at a \sltgood \iindex $k$,
so that $G_k = 1$ but $D_k = 0$, then it can only ``distract'' them by providing $\goodsepbw$ blocks
produced after the most recent \sltcp
(\cref{prop:download-or-spend-budget}).

\RestatePropDownloadOrSpendBudget*

\begin{proof}
In \timeslot $t_k$, there is exactly one block $b$ produced by an honest node, 
the block header is made public at the beginning of the \timeslot,
and is seen by all honest nodes within $\DeltaHeader$ time.
Thereafter, each node has enough time to process $\goodsepbw$ blocks during \timeslots $[t_k, t_k + \goodsep]$.

Under the scheduling policy $\dlrulelong$,
if $\Dat{k} = 0$,
\ie an honest node did not process content for the block $b$ before the end of \timeslot $t_k + \goodsep$,
then
that honest node must process the content for at least $\goodsepbw$ blocks on chains longer than the height of the block $b$ or in the prefix of the block $b$.
Since honest nodes produce blocks extending their longest chain, $b$ extends $\dC_p(t_k-1)$ for some $p$.
Let $b^*$ be the block produced in \timeslot $t_i$ where $\predCP{i}$ (suppose $i$ exists).
$\predCP{i} \implies \Yat{i} = 1$, therefore this block is unique, and also $t_k > t_i + \goodsep$.
Due to \appendixRef{\cref{lem:cps-stabilize}}, any valid header chain longer than $b$ at time slot $t_k$ must contain $b^*$.
Therefore, the only blocks
that are processed by an honest node during \timeslots $[t_k, t_k + \goodsep]$
\begin{enumerate}
    \item must be produced after $t_i$ because they extend $b^*$, and
    \item must be produced no later than $t_k$ because there are no blocks produced in $\intvl{t_k}{t_k+\goodsep}$.
\end{enumerate}
In case a \sltcp $i<k$ does not exist, the claim is trivial.
\end{proof}

Given the above properties of the scheduling policy, we now want to show that \sltcps occur once in a while.
\Cref{fig:ppivot-cpivot-intuitive} illustrates the key argument for this.
To start, let us show that there is at least one \sltcp in $\intvl{0}{\Kcp}$.
From \appendixRef{\cref{lem:many-pps}}, there are many \sltpps in $\intvl{0}{\Kcp}$.
If there were no \sltcps in $\intvl{0}{\Kcp}$, then the adversary must prevent each \sltpp from turning into a \sltcp.
We know that in any interval around a \sltpp, there are more \sltgood \iindices than \sltbad \iindices (see top row in \cref{fig:ppivot-cpivot-intuitive}). In fact, \sltgood \iindices outnumber \sltbad \iindices by a margin that increases linearly with the size of the interval (\appendixRef{\cref{prop:lower-tailbound-X}}).
Therefore, for a \sltpp to not be a \sltcp, the adversary must prevent an honest node from processing the most recent honest block in several of these \sltgood \iindices (so that the corresponding $\Gat{k}=1$ \iindices have $\Dat{k}=0$).
\Cref{fig:ppivot-cpivot-intuitive} shows an example where the adversary prevented processing of the honest block in one \sltgood \iindex, and as a result, two of the \sltpps fail to become a \sltcp.
In \appendixRef{\cref{lem:one-cp-induction-base}}, through a combinatorial argument, we show that to prevent all of $n$ \sltpps in $\intvl{0}{\Kcp}$ from becoming \sltcps, the adversary must prevent processing of the honest block in at least $n/4$ \sltgood \iindices in $\intvl{0}{2\Kcp}$.
From \cref{prop:download-or-spend-budget}, for each such \iindex, the adversary must `spend' at least $\goodsepbw$ blocks that the honest node processs.
These blocks must come from a `budget' that can contain at most all blocks mined during $\intvl{0}{2\Kcp}$.
If this `budget' falls short of the number of blocks required to overthrow all \sltcps, then there must be at least one \sltcp in $\intvl{0}{\Kcp}$.

Next, we would like to show that there is at least one \sltcp in $\intvl{m\Kcp}{(m+1)\Kcp}$ for all $m \geq 0$ (where 
we just saw the base case $m=0$).
Here,
one may be concerned that
the adversary could save up many blocks from the past and attempt to make honest nodes process these blocks at a particular target \timeslot $t_k$.
But given that one \sltcp occurred in $\intvl{(m-1)\Kcp}{m\Kcp}$, \cref{prop:download-or-spend-budget} ensures that honest nodes will only process blocks that are produced after $(m-1)\Kcp$.
This allows us to bound the `budget' of blocks that the adversary can use to overthrow \sltcps, and therefore show that there is at least one \sltcp in $\intvl{m\Kcp}{(m+1)\Kcp}$.
This argument is formalized in \cref{lem:one-cp-induction-full}.

Below, we first show the proof for the base case (\ie for the interval $\intvl{0}{\Kcp}$) to highlight the key techniques.
Here, $\Qin{.}{.}$ is the total number of blocks mined in an interval (bounds the adversary's block budget), and the expressions on the left in \eqref{cp-induction-full-margin-condition,cp-induction-full-ppivots-condition} are the minimum number of blocks the adversary needs to produce to ensure that there are no \sltcps, in terms of the number of \sltpps $\Pin{.}{.}$ and number of \sltgood \iindices $\Gin{.}{.}$.

\begin{lemma}
    \label{lem:one-cp-induction-base}
    If honest nodes use the scheduling policy $\dlrulelong$
    and
    \begin{IEEEeqnarray}{C}
        \label{eq:cp-induction-base-margin-condition}
        \forall \intvl{i}{j} \intvlgeq \Kcp, i < \Kcp \colon\quad
        \frac{\goodsepbw}{2} \left( \Gin{i}{j} - \Bin{i}{j} \right) > \Qin{0}{j}, \IEEEeqnarraynumspace\\
        \label{eq:cp-induction-base-ppivots-condition}
        \frac{\goodsepbw}{4} \Pin{0}{\Kcp} > \Qin{0}{2\Kcp},
    \end{IEEEeqnarray}
    then $\exists k_1^* \in \intvl{0}{\Kcp} \colon \predCP{k_1^*}$.
\end{lemma}

Towards proving \appendixRef{\cref{lem:one-cp-induction-base}}, we show two simple corollaries of the \sltcp conditions (\cref{prop:not-cp-exists-interval,prop:not-cp-interval-properties}) and show that in any interval, \sltgood \iindices outnumber \sltbad \iindices by at least the number of \sltpps in that interval.

\begin{proposition}
\label{prop:not-cp-exists-interval}
\begin{IEEEeqnarray}{C}
    \lnot \predCP{k} \qquad\implies\qquad \exists \intvl{i}{j} \ni k \colon\quad \Yin{i}{j} \leq 0.
\end{IEEEeqnarray}
\end{proposition}
\begin{proof}
From \appendixRef{\cref{def:cp}}.
\end{proof}

\begin{proposition}
\label{prop:not-cp-interval-properties}
If $\Yin{i}{j} \leq 0$, then
\begin{IEEEeqnarray}{C}
    \label{eq:not-cp-interval-property1}
    \Nin{i}{j} \geq \Din{i}{j}, 
    \qquad
    \label{eq:not-cp-interval-property2}
    \Gin{i}{j} - \Din{i}{j} \geq \frac{1}{2} \left( \Gin{i}{j} - \Bin{i}{j} \right).
    \IEEEeqnarraynumspace
\end{IEEEeqnarray}
\end{proposition}
\begin{proof}
We obtain \eqref{not-cp-interval-property1} from the definition $\Yat{i} = \Dat{i} - \Nat{i}$.
Then,
\begin{IEEEeqnarray}{rCl}
    \Gin{i}{j} + \Bin{i}{j} &=& \Din{i}{j} + \Nin{i}{j} \\
    \Gin{i}{j} + \Bin{i}{j} &\geq& 2 \Din{i}{j} \\
    2 \Gin{i}{j} - 2 \Din{i}{j} &\geq& \Gin{i}{j} - \Bin{i}{j}.
\end{IEEEeqnarray}
\end{proof}

\begin{proposition}
\label{prop:ppivots-imply-honest-margin}
If $\Pin{i}{j} > 0$, then $\Gin{i}{j} - \Bin{i}{j} \geq \Pin{i}{j}$.
\end{proposition}
\begin{proof}
Let $n = \Pin{i}{j}$.
First, consider $n=1$.
There is exactly one \sltpp $k \in \intvl{i}{j}$.
From \appendixRef{\cref{def:pp}}, $\Xin{0}{i} < \Xin{0}{j}$. Therefore, $\Xin{i}{j} > 0$, hence $\Gin{i}{j} - \Bin{i}{j} \geq 1$.
For the general case, let $k_1,...,k_n$ be the \sltpps in $\intvl{i}{j}$. Then, we apply the $n=1$ case on the disjoint intervals $\intvl{i}{k_1}$, $\intvl{k_1}{k_2}, ...$, $\intvl{k_{n-1}}{j}$ and then sum up.
\end{proof}

\begin{figure}[tb]%
    \centering%
    \begin{tikzpicture}[x=2em,y=2em]%
    
        \begin{scope}[]
            
            \node at (-1.5,0) {\textbf{(a)}};
            \draw (-1,0) -- (4,0);
            
            \node [circle,fill=myParula01Blue,inner sep=2pt] (circle1) at (0,0) {};
            \node [circle,fill=myParula01Blue,inner sep=2pt] (circle2) at (1,0) {};
            \node [circle,fill=myParula01Blue,inner sep=2pt] (circle3) at (2,0) {};
            \node [circle,fill=myParula01Blue,inner sep=2pt] (circle4) at (3,0) {};

            \draw [draw=myParula07Red,rounded corners] ([xshift=-3pt,yshift=-3pt]circle1.south west) rectangle ([xshift=3pt,yshift=3pt]circle2.north east) node [midway,yshift=11pt,xshift=-0.75em] {\textcolor{myParula07Red}{$A$}};
            \draw [draw=myParula07Red,rounded corners] ([xshift=-3pt,yshift=-5pt]circle2.south west) rectangle ([xshift=3pt,yshift=5pt]circle4.north east) node [midway,yshift=13pt,xshift=1.75em] {\textcolor{myParula07Red}{$B$}};
            \draw [draw=myParula07Red,rounded corners] ([xshift=-5pt,yshift=-7pt]circle2.south west) rectangle ([xshift=5pt,yshift=7pt]circle3.north east) node [midway,yshift=15pt] {\textcolor{myParula07Red}{$C$}};
        
        \end{scope}
    
        \begin{scope}[xshift=4.6cm]
            
            \node at (-1.5,0) {\textbf{(b)}};
            \draw (-1,0) -- (4,0);
            
            \node [circle,fill=myParula01Blue,inner sep=2pt] at (0,0) {};
            \node [circle,fill=myParula01Blue,inner sep=2pt] at (0.8,0) {};
            \node [circle,fill=myParula01Blue,inner sep=2pt] at (2.2,0) {};
            \node [circle,fill=myParula01Blue,inner sep=2pt] at (3,0) {};
        
            \draw (-0.5,0) ++(0,0.2) -- ++(0,-0.4);
            \draw (3.5,0) ++(0,0.2) -- ++(0,-0.4);
            
            \draw [draw=myParula07Red,ultra thick] (1.5,0) ++(-0.15,-0.15) -- ++(0.3,0.3);
            \draw [draw=myParula07Red,ultra thick] (1.5,0) ++(-0.15,0.15) -- ++(0.3,-0.3);
            
            \draw [draw=myParula07Red,rounded corners] (-0.3,-0.3) rectangle (1.8,0.3);
            \draw [draw=myParula07Red,rounded corners] (1.2,-0.4) rectangle (3.3,0.4);

        \end{scope}
        
    \end{tikzpicture}%
    \caption{%
    Blue circles represent \sltpps, red crosses represent \iindices with $\Gat{k}=1$ and $\Dat{k}=0$. 
    (a) Given intervals $A,B,C$ all containing the 2nd blue circle from left, interval $C$ is redundant.
    (b) Given $n$ blue circles, the adversary needs at least $n/4$ red crosses to draw a set of intervals satisfying \eqref{intervals-cover-ppivots,intervals-y-condition}. Here is a placement of red crosses relative to blue circles that achieves the minimum number of red crosses.%
    }
    \label{fig:one-cp-proof-figures}
\end{figure}

\begin{proof}[Proof of \appendixRef{\cref{lem:one-cp-induction-base}}]
Due to \eqref{cp-induction-base-ppivots-condition}, there is at least one \sltpp in $\intvl{0}{\Kcp}$ (otherwise $\Pin{0}{\Kcp}=0$).
Suppose for contradiction that there is no \sltcp in $\intvl{0}{\Kcp}$.
Since \sltcps are also \sltpps, it is enough to consider that
none of the \sltpps is a \sltcp.
Then around each \sltpp, there must be at least one interval which violates the combinatorial pivot condition.
Formally, there is a set of intervals $\intvlset$ such that:
\begin{IEEEeqnarray}{C}
    \label{eq:intervals-cover-ppivots}
    \bigcup_{I \in \intvlset} I \supseteq \left\{ k \in \intvl{0}{\Kcp} \colon \predPP{k} \right\} \IEEEeqnarraynumspace\\
    \label{eq:intervals-y-condition}
    \forall I \in \intvlset \colon\quad \Yat{I} \leq 0 \quad \text{(by \cref{prop:not-cp-exists-interval})}. \IEEEeqnarraynumspace
\end{IEEEeqnarray}
Without loss of generality, each interval $I \in \intvlset$ contains at least one \sltpp (removing all intervals that do not contain a \sltpp maintains \eqref{intervals-cover-ppivots,intervals-y-condition}).
Then if $\intvl{i}{j} \in \intvlset$, $i < \Kcp$.

First, consider the large intervals with $|I| \geq \Kcp$.
Consider \iindices $k \in I$ for which $\Gat{k}=1$ (\sltgood) but $\Dat{k}=0$ (block not processed).
From \cref{prop:download-or-spend-budget}, for each such \iindex, all honest nodes process $\goodsepbw$ blocks that are produced no later than $t_k$.
The number of \iindices $k \in I$ with  $\Gat{k} = 1$ and $\Dat{k} = 0$ is $\Gat{I} - \Dat{I}$.
For each such index, there must exist $\goodsepbw$ distinct blocks produced in or before the interval $I$. Therefore, if $I = \intvl{i}{j}$,
\begin{IEEEeqnarray}{rClr}
    \Qin{0}{j} &\geq& \goodsepbw \left( \Gin{i}{j} - \Din{i}{j} \right) & \\
    &\geq& \frac{\goodsepbw}{2} \left( \Gin{i}{j} - \Bin{i}{j} \right) & \quad \text{(by \cref{prop:not-cp-interval-properties}).}
\end{IEEEeqnarray}
This contradicts \eqref{cp-induction-base-margin-condition}.
Therefore, all intervals $I \in \intvlset$ are small ($|I| < \Kcp$).
Then for each $I \in \intvlset$, $I \subset \intvl{0}{2\Kcp}$.
Also, 
\begin{IEEEeqnarray}{Cl}
    \Gat{I} - \Dat{I} \geq \frac{1}{2} \left( \Gat{I} - \Bat{I} \right) \geq \frac{1}{2} \Pat{I}
    &\quad\text{(by \cref{prop:not-cp-interval-properties,prop:ppivots-imply-honest-margin})}. \IEEEeqnarraynumspace
    \label{eq:failed-more-than-ppivots}
\end{IEEEeqnarray}

Consider the \iindices $k \in \intvl{0}{2\Kcp}$ with $\Gat{k} = 1$ and $\Dat{k} = 0$.
Let $\intvlset_k = \{ I \in \intvlset \colon k \in I\}$ be the set of intervals that contain \index $k$.
Let $I^L_k$ be an interval in $\intvlset_k$ that stretches farthest to the left, and let $I^R_k$ be an interval that stretches farthest to the right (these may also be the same).
Note that all other intervals in $\intvlset_k$ are contained in $I^L_k \cup I^R_k$.
Therefore, all intervals in $\intvlset_k$ except $I^L_k$ and $I^R_k$ can be removed from $\intvlset$ while maintaining \eqref{intervals-cover-ppivots,intervals-y-condition} (see \cref{fig:one-cp-proof-figures}(a)).
This process is repeated for all $k \in \intvl{0}{2\Kcp}$ with $\Gat{k} = 1$ and $\Dat{k} = 0$, so that in the resulting set $\intvlset$, each such \iindex $k$ is contained in at most two intervals.
Then,
\begin{IEEEeqnarray}{rCl}
    \sum_{k \in \intvl{0}{2\Kcp} \colon \Gat{k} = 1, \Dat{k} = 0} |\intvlset_k|  &\leq& \sum_{k \in \intvl{0}{2\Kcp} \colon \Gat{k} = 1, \Dat{k} = 0} 2 \IEEEeqnarraynumspace\\
    &=& 2\left( \Gin{0}{2\Kcp} - \Din{0}{2\Kcp} \right). \IEEEeqnarraynumspace \label{eq:cp-sum-rewrite-1}
\end{IEEEeqnarray}
This sum can be rewritten as
\begin{IEEEeqnarray}{rCl}
    \IEEEeqnarraymulticol{3}{l}{\sum_{k \in \intvl{0}{2\Kcp} \colon \Gat{k} = 1, \Dat{k} = 0} |\intvlset_k| = \sum_{I \in \intvlset} \left( \Gat{I} - \Dat{I} \right)} \IEEEeqnarraynumspace\\
    \qquad\qquad&\geq& \sum_{I \in \intvlset} \frac{1}{2} \Pat{I} 
    \geq \frac{1}{2} \Pin{0}{\Kcp} \quad \text{(by \eqref{intervals-cover-ppivots})}.
    \IEEEeqnarraynumspace \label{eq:cp-sum-rewrite-2}
\end{IEEEeqnarray}
From \eqref{cp-sum-rewrite-1,cp-sum-rewrite-2},
\begin{IEEEeqnarray}{rCl}
    \Gin{0}{2\Kcp} - \Din{0}{2\Kcp} &\geq& \frac{1}{4} \Pin{0}{\Kcp}.
\end{IEEEeqnarray}
This can also be seen from \cref{fig:one-cp-proof-figures}(b).
Finally, as shown before, for each \index $k$ with $\Gat{k}=1$ and $\Dat{k}=0$, all honest nodes process at least $\goodsepbw$ distinct blocks produced in or before \iindex $k$ (\cref{prop:download-or-spend-budget}). This gives
\begin{IEEEeqnarray}{rCl}
    \Qin{0}{2\Kcp} &\geq& \goodsepbw \left( \Gin{0}{2\Kcp} - \Din{0}{2\Kcp} \right) 
    \geq \frac{\goodsepbw}{4} \Pin{0}{\Kcp}
    \IEEEeqnarraynumspace
\end{IEEEeqnarray}
which is a contradiction to \eqref{cp-induction-base-ppivots-condition}.
\end{proof}

\Cref{lem:one-cp-induction-full} proves that at least one \sltcp exists in successive intervals of $\Kcp$ length. \Cref{lem:one-cp-induction-full} is proved by induction, where the base case is \appendixRef{\cref{lem:one-cp-induction-base}}. 

\begin{lemma}
    \label{lem:one-cp-induction-full}
    \label{lem:many-pps-one-cps}
    If honest nodes use the scheduling policy $\dlrulelong$
    and
    \begin{IEEEeqnarray}{C}
        \label{eq:cp-induction-full-margin-condition}
        \forall \intvl{i}{j} \intvlgeq \Kcp
        \colon\quad 
        \frac{\goodsepbw}{2} \left( \Gin{i}{j} - \Bin{i}{j} \right) > \Qin{i-2\Kcp}{j}, \IEEEeqnarraynumspace\\
        \label{eq:cp-induction-full-ppivots-condition}
        \forall m \geq 0 \colon\quad 
        \frac{\goodsepbw}{4} \Pin{m\Kcp}{(m+1)\Kcp} > \Qin{(m-2)\Kcp}{(m+2)\Kcp}, \IEEEeqnarraynumspace
    \end{IEEEeqnarray}
    then 
    $\forall m \geq 0 \colon
    \exists k_m^* \in \intvl{m\Kcp}{(m+1)\Kcp} \colon
    \predCP{k_m^*}$.
\end{lemma}
\begin{proof}
    This will be proved through induction.
    For the base case ($m=0$), \appendixRef{\cref{lem:one-cp-induction-base}} shows 
    that $\exists k_1^* \in \intvl{0}{\Kcp} \colon \predCP{k_1^*}$.
    
    For $m \geq 1$, assume that $\exists k_{m-1}^* \in \intvl{(m-1)\Kcp}{m\Kcp}$ such that  $\predCP{k_{m-1}^*}$.
    Now we want to show that $\exists k_{m}^* \in \intvl{m\Kcp}{(m+1)\Kcp}$ such that  $\predCP{k_{m}^*}$.
    Suppose for contradiction that there is no \sltcp in $\intvl{m\Kcp}{(m+1)\Kcp}$.
    As in the proof of \appendixRef{\cref{lem:one-cp-induction-base}}, 
    there is a set of intervals $\intvlset$ such that:
    \begin{IEEEeqnarray}{Cr}
        \label{eq:induction-full-intervals-cover-ppivots}
        \bigcup_{I \in \intvlset} I \supseteq \left\{ k \in \intvl{m\Kcp}{(m+1)\Kcp} \colon \predPP{k} \right\} & \\
        \label{eq:induction-full-intervals-y-condition}
        \forall I \in \intvlset \colon\quad \Yat{I} \leq 0. &
    \end{IEEEeqnarray}
    Without loss of generality, each interval $I \in \intvlset$ contains at least one \sltpp.
    Then if $\intvl{i}{j} \in \intvlset$, $i < (m+1)\Kcp$ and $j > m\Kcp$.
    
    First, consider the large intervals with $|I| \geq \Kcp$.
    Consider \iindices $k \in I$ for which $\Gat{k}=1$ (\sltgood) but $\Dat{k}=0$ (block not processed).
    From \cref{prop:download-or-spend-budget},
    for each such \iindex $k$, all honest nodes process $\goodsepbw$ blocks that are produced
    in the interval $\intvl{k_{m-1}^*}{k}$.
    The number of \iindices $k \in I$ with  $\Gat{k} = 1$ and $\Dat{k} = 0$ is exactly $\Gat{I} - \Dat{I}$.
    For each such index, there must exist $\goodsepbw$ distinct blocks from distinct \BPOs
    that are processed by honest nodes.
    Therefore if $I = \intvl{i}{j}$,
    \begin{IEEEeqnarray}{rClr}
        \Qin{k_{m-1}^*}{j} &\geq& \goodsepbw \left( \Gin{i}{j} - \Din{i}{j} \right) & \IEEEeqnarraynumspace \\
        &\geq& \frac{\goodsepbw}{2} \left( \Gin{i}{j} - \Bin{i}{j} \right) & \quad \text{(from \cref{prop:not-cp-interval-properties}).} \IEEEeqnarraynumspace
    \end{IEEEeqnarray}
    But $k_{m-1}^* > (m-1)\Kcp$ and $i < (m+1)\Kcp$. Therefore $\Qin{k_{m-1}^*}{j} \leq \Qin{i-2\Kcp}{j}$.
    Then we have a contradiction to \eqref{cp-induction-full-margin-condition}.
    Therefore all intervals $I \in \intvlset$ are small ($|I| < \Kcp$).
    Then for each $I \in \intvlset$, $I \subset \intvl{(m-1)\Kcp}{(m+1)\Kcp}$.
    Also, 
    \begin{IEEEeqnarray}{Cr}
        \label{eq:failed-more-than-ppivots-induction}
        \Gat{I} - \Dat{I} \geq \frac{1}{2} \left( \Gat{I} - \Bat{I} \right) \geq \frac{1}{2} \Pat{I} & \quad \text{(\cref{prop:not-cp-interval-properties,prop:ppivots-imply-honest-margin})} \IEEEeqnarraynumspace
    \end{IEEEeqnarray}
    
    Consider the \iindices $k \in \intvl{(m-1)\Kcp}{(m+1)\Kcp}$ with $\Gat{k} = 1$ and $\Dat{k} = 0$.
    Following the arguments in the proof of \appendixRef{\cref{lem:one-cp-induction-base}},
    we can reduce the set $\intvlset$
    so that in the resulting set $\intvlset$, each such \iindex $k$ is contained in at most two intervals.
    Then,
    \begin{IEEEeqnarray}{rCl}
        \IEEEeqnarraymulticol{3}{l}{ \sum_{k \in \intvl{(m-1)\Kcp}{(m+1)\Kcp} \colon \Gat{k} = 1, \Dat{k} = 0} |\intvlset_k| } \nonumber \\
        \quad&\leq& 
        2\left( \Gin{(m-1)\Kcp}{(m+1)\Kcp} - \Din{(m-1)\Kcp}{(m+1)\Kcp} \right). \IEEEeqnarraynumspace
    \end{IEEEeqnarray}
    This sum can be rewritten as
    \begin{IEEEeqnarray}{rCl}
        \IEEEeqnarraymulticol{3}{l}{ \sum_{k \in \intvl{(m-1)\Kcp}{(m+1)\Kcp} \colon \Gat{k} = 1, \Dat{k} = 0} |\intvlset_k| } \\
        \quad&=& \sum_{I \in \intvlset} \left( \Gat{I} - \Dat{I} \right) \\
        &\geq& \sum_{I \in \intvlset} \frac{1}{2} \Pat{I} \\
        &\geq& \frac{1}{2} \Pin{m\Kcp}{(m+1)\Kcp}.
        \IEEEeqnarraynumspace
    \end{IEEEeqnarray}
    Therefore,
    \begin{IEEEeqnarray}{rCl}
        \IEEEeqnarraymulticol{3}{l}{ \Gin{(m-1)\Kcp}{(m+1)\Kcp} - \Din{(m-1)\Kcp}{(m+1)\Kcp} } \nonumber \\
        \quad&\geq& \frac{1}{4} \Pin{m\Kcp}{(m+1)\Kcp}.
    \end{IEEEeqnarray}
    
    Finally, for each \index $k$ with $\Gat{k}=1$ and $\Dat{k}=0$, all honest nodes process at least $\goodsepbw$ distinct blocks produced in or before 
    the most recent \sltcp before $(m-1)\Kcp$.
    By induction assumption, we have a \sltcp $k_{m-2}^* \in \intvl{(m-2)\Kcp}{(m-1)\Kcp}$.
    This gives
    \begin{IEEEeqnarray}{rCl}
        \IEEEeqnarraymulticol{3}{l}{ \Qin{(m-2)\Kcp}{(m+1)\Kcp} } \nonumber \\
        \quad&\geq& \goodsepbw \left( \Gin{(m-1)\Kcp}{(m+1)\Kcp} - \Din{(m-1)\Kcp}{(m+1)\Kcp} \right) \IEEEeqnarraynumspace \\
        &\geq& \frac{\goodsepbw}{4} \Pin{m\Kcp}{(m+1)\Kcp}
    \end{IEEEeqnarray}
    which is a contradiction.
\end{proof}

Finally,
using the fact that, with overwhelming probability, a constant fraction of \iindices are \sltpps,
we calculate the condition on the parameters $\blkrateslot, \slotduration$ in terms of $\goodsep,\goodsepbw$ for which the conditions \eqref{cp-induction-full-margin-condition,cp-induction-full-ppivots-condition} in \appendixRef{\cref{lem:many-pps-one-cps}} hold with overwhelming probability.
Precisely, we show that, with overwhelming probability, for any \iindex $k$ throughout the time horizon, there is at least one \sltcp in the interval $\intvl{k}{k+2\Kcp}$.

\begin{lemma}
\label{lem:pow-security-condition}
If $\frac{\goodsepbw}{16} \frac{(2\probGood-1)^2}{\probGood} > 1$,
then for $\Kcp = \Theta(\kappa^2)$,
$\Khorizon = \poly(\kappa)$,
with overwhelming probability,
for all $k < \Khorizon - 2\Kcp$,
$\exists k^* \in \intvl{k}{k+2\Kcp} \colon \predCP{k^*}$.
\end{lemma}
\begin{proof}
    \sloppy Define the event $\Event_1 = \left\{\forall \intvl{i}{j} \intvlgeq \Kcp \colon \Pin{i}{j} > (1-\delta)\probPP(j-i) \right\}$.
    Suppose that $\Event_1$ occurs, and $\frac{\goodsepbw}{16}\probPP (1-\delta) > 1$ for some $\delta \in (0,1)$.
    Then,
    \begin{IEEEeqnarray}{rCl}
        \forall \intvl{i}{j} \intvlgeq \Kcp \colon\quad
        \frac{\goodsepbw}{4} \Pin{i}{j} &>& \frac{\goodsepbw}{4} (1-\delta)\probPP(j-i) \IEEEeqnarraynumspace \\
        &>& 4(j-i) \\
        &\eqA& \Qin{i-2\Kcp}{j+\Kcp}
    \end{IEEEeqnarray}
    where (a) is because as $\slotduration \to 0$, each non-\sltempty \timeslot has exactly one \BPO. This satisfies \eqref{cp-induction-full-ppivots-condition} in \cref{lem:one-cp-induction-full}. Further,
    \begin{IEEEeqnarray}{C}
        \frac{\goodsepbw}{2} \left( \Gin{i}{j} - \Bin{i}{j} \right)
        \geq \frac{\goodsepbw}{2} \Pin{i}{j} > 3 (j-i) > \Qin{i-2\Kcp}{j}\IEEEeqnarraynumspace
    \end{IEEEeqnarray}
    which satisfies condition \eqref{cp-induction-full-margin-condition} in \cref{lem:one-cp-induction-full}.
    Therefore there is at least one \sltcp in every interval of the form $\intvl{m\Kcp}{(m+1)\Kcp}$.
    It also follows that for all $k$, there is at least one \sltcp in the interval $\intvl{k}{k+2\Kcp}$.
    By choosing $\Kcp = \Omega(\kappa^2)$, $\Khorizon = \poly(\kappa)$, and using \appendixRef{\cref{lem:many-pps}} and a union bound,
    the probability of failure of $\Event_1$ is $\negl(\kappa)$.
\end{proof}

While the analysis above is for the scheduling policy $\dlrulelong$, the proofs only use properties \ref{item:good-download-rule-no-repeat}, \ref{item:good-download-rule-honest-block}, \ref{item:good-download-rule-cutoff}
and thus apply to several other simple scheduling policies.
Another such scheduling policy is
``process only blocks that are consistent with the node's confirmed chain''.
In this work, we did not adopt this rule
because it would fail to recover from a network split, as demonstrated in the \greedyattack mentioned in \appendixRef{\cref{sec:greedy-attack}}.

\subsection{Security of Proof-of-Work Nakamoto Consensus}
\label{sec:fullproof-pow}

In \cref{lem:pow-security-condition}, we showed that under the
\rulelc scheduling policy, \sltcps occur in every $\Kcp$-interval.
This allows us, together with \appendixRef{\cref{lem:cps-stabilize}} (\sltcps stabilize), to prove safety and liveness of the protocol for a suitable confirmation depth $\confDepth$.
Subsequently, we take $\slotduration \to 0$ and $\blkratetime \triangleq \blkrateslot/\slotduration$ in order to model PoW accurately. We then identify the values of $\blkratetime$ for which given an adversary fraction $\beta$, the conditions required in \cref{lem:one-cp-induction-full} for \sltcps to occur hold with overwhelming probability. Finally, since $\goodsepbw$ was an analysis parameter chosen arbitrarily, we maximize over this parameter to find the best possible security--performance tradeoff (\cref{thm:safety-and-liveness-pow}).
The result is plotted for $\DeltaHeader \approx 0$ (reasonable approximation for large block sizes) in \cref{fig:comparison-bddelay-bdbandwidth}.

\RestateThmSafetyAndLivenessPow*

For PoW, we take $\slotduration \to 0$, and we would like to express parameters such as mining rate, confirmation latency, and execution time horizon in terms of real-time rather than the fictitious \timeslots or \iindices.
We use \cref{prop:index-time-bridge-pow} to bridge from \iindices 
to units of real-time,
which uses a Poisson tail bound to show that the inter-arrival time between \BPOs cannot be too large or too small.

\begin{proposition}
    \label{prop:index-time-bridge-pow}
    \begin{IEEEeqnarray}{rl}
        \forall k, K \in \IN \colon &
        \Prob{ \slotduration(t_{k+K} - t_{k}) \geq \frac{K}{\blkratetime(1-\delta)} } \leq e^{ \frac{-K\delta^2}{2(1+\delta)}},
        \IEEEeqnarraynumspace\\
        &
        \Prob{ \slotduration(t_{k+K} - t_{k}) \leq \frac{K}{\blkratetime(1+\delta)} } \leq e^{ \frac{-K\delta^2}{2(1+\delta)}}.
        \IEEEeqnarraynumspace
    \end{IEEEeqnarray}
\end{proposition}
\begin{proof}
    This results from a Poisson tail bound~\cite{poisson_tail} for the number of \BPOs in real time $K/\blkratetime$, and noting that 
    non-\sltempty \timeslots have exactly one \BPO
    for $\slotduration \to 0$.
\end{proof}

To prove \cref{thm:safety-and-liveness-pow}, 
we 
recall that
there is at least one \sltcp in the interval $\intvl{k}{k+2\Kcp}$ (\cref{lem:pow-security-condition}).
Given this, we prove safety and liveness of PoW NC in \cref{lem:safety-and-liveness-comb-pow}.
Finally, in
\cref{thm:safety-and-liveness-pow}, we calculate for given $\beta, \bwtime, \DeltaHeader$, the protocol parameters $\blkrateslot, \slotduration$ for which PoW NC is secure.
In doing so, since $\goodsepbw$ is just an analysis parameter, we optimize over $\goodsepbw$ to find the maximum $\blkratetime$.

\begin{lemma}
    \label{lem:safety-and-liveness-comb-pow}
    If for some $\Kcp > 0$,
    \begin{IEEEeqnarray}{C}
        \label{eq:condition-one-cp-m-pow-safety}
        \forall k \colon \exists k^* \in \intvl{k}{k+2\Kcp} \colon \predCP{k^*},
        \IEEEeqnarraynumspace
    \end{IEEEeqnarray}
    then the PoW Nakamoto consensus protocol $\protocol$ with $\confDepth = 2\Kcp + 1$ satisfies safety.
    Further, if the environment is $(\tput, \Ttput)$-tx-limited with $\tput = (1+\delta)\left(\frac{1}{2} - \frac{1-e^{-\beta\blkrateslot}}{1-e^{-\blkrateslot}}(1+\delta)\right)\blkratetime\slotduration$ and $\Ttput = \frac{2\Kcp}{\blkratetime\slotduration(1+\delta)}$, and
    \begin{IEEEeqnarray}{C}
        \label{eq:cond-index-time-bridge-pow}
        \forall k \in \IN, K \geq \Kcp\colon
        \quad
        \frac{K}{\blkratetime\slotduration(1+\delta)} < t_{k+K} - t_{k} < \frac{K}{\blkratetime\slotduration(1-\delta)},
        \IEEEeqnarraynumspace
    \end{IEEEeqnarray}
    then it also satisfies liveness with $\Tlive = \max\left\{\Ttput,\frac{2\Kcp}{\blkratetime\slotduration(1-\delta)}\right\} + \frac{4\Kcp + 2}{\blkratetime\slotduration(1-\delta)}$.
\end{lemma}
\begin{proof}
    \emph{Safety:}
    For an arbitrary \timeslot $t$, let $k$ be the largest \iindex such that $t_k \leq t$.
    From \eqref{condition-one-cp-m-pow-safety}, every interval of $2\Kcp$ \iindices contains at least one \sltcp. Therefore, there exists $k^* \in \intvl{k-2\Kcp-1}{k-1}$ such that $\predCP{k^*}$.
    Let $b^*$ be the block from \iindex $k^*$.
    Due to \appendixRef{\cref{lem:cps-stabilize}}, for all honest nodes $p,q$ and $t' \geq t$,
    $b^* \in \dC_p(t)$ and $b^* \in \dC_q(t')$.
    But $k^* \geq k-\confDepth$, so the block $b^*$ cannot be $\confDepth$-deep in any chain at \timeslot $t$ Therefore, $\LOG{p}{t}$ is a prefix of $b^*$ which in turn is a prefix of $\dC_q(t')$.
    We can thus conclude that
    either 
    $\LOG{p}{t} \preceq \LOG{q}{t'}$ or $\LOG{q}{t'} \preceq \LOG{p}{t}$.
    Therefore, safety holds.

    For an arbitrary \timeslot $t$, let $k$ be the largest \iindex such that $t_k \leq t$.
    We will first prove that all transactions received in \timeslots $t - \Ttput$ to $t$, which are of total size at most $\tput\Ttput$ as per the tx-limited environment, will be added to the longest processed chains of all nodes by the \timeslot corresponding to \iindex $k + 2\Kcp$.
    Let $\Ktput = \max\{2\Kcp, \blkratetime \slotduration (1+\delta) \Ttput\}$.
    We know that there exists $k^* \in (k,k+2\Ktput]$ such that $\predCP{k^*}$.
    Since $k^*$ is a \sltcp, for all $\intvl{i}{j} \ni k^*$, $\Din{i}{j} > \Nin{i}{j}$ (\appendixRef{\cref{def:cp}} and \eqref{random_walks_X_and_Y}), and hence $\Din{i}{j} > \frac{j-i}{2}$.
    Particularly,
    \begin{IEEEeqnarray}{rCl}
        \implies \Din{k}{k + \Ktput - 1} &>& \frac{\Ktput-1}{2}.
        \IEEEeqnarraynumspace
    \end{IEEEeqnarray}
    Then from \cref{prop:chain-growth-interval},
    \begin{IEEEeqnarray}{rCl}
        L_{\min}(t_{k+\Ktput-1}+\goodsep) - L_{\min}(t_{k+1}-1) &\geq& \Din{k}{k + \Ktput - 1} \nonumber \\
        &\geq& \frac{\Ktput}{2}.
        \IEEEeqnarraynumspace
    \end{IEEEeqnarray}
    This means that the $\Ktput/2$ last blocks in any node's longest processed chain at \timeslot $t_{k+\Ktput-1+\goodsep}$ are from the \iindices $\intvl{k}{k+\Ktput-1}$.
    Among these, the number of blocks produced by the adversary can be, by a concentration bound, at most $\frac{1-e^{-\beta\blkrateslot}}{1-e^{-\blkrateslot}}(1+\delta)\Ktput$.
    Therefore, at least $\left(\frac{1}{2} - \frac{1-e^{-\beta\blkrateslot}}{1-e^{-\blkrateslot}}(1+\delta)\right)\Ktput$ blocks are produced by honest nodes.
    The cumulative size of pending transactions is at most $\tput \Ttput$, which fits in these honest blocks.
    Finally, we use \cref{prop:chain-growth-interval} again to show:
    \begin{IEEEeqnarray}{rCl}
        L_{\min}(t_{k+\Ktput+2\confDepth-1}+\goodsep) - L_{\min}(t_{k+2\confDepth+1}-1) &\geq& \confDepth.
        \IEEEeqnarraynumspace
    \end{IEEEeqnarray}
    Thus, the newly added transactions are $\confDepth$-deep, hence confirmed, by all nodes by \iindex $k + 2\Kcp + 2\confDepth$.
    Finally, with \eqref{cond-index-time-bridge-pow},
    \begin{IEEEeqnarray}{rCl}
        t_{k+\Ktput + 2\confDepth-1}+\goodsep - t &\leq& t_{k + \Ktput + 4\Kcp + 1} + \goodsep - t_{k} \nonumber \\
        &\leq& t_{k + 6\Kcp + 2} - t_{k} \nonumber \\
        &<& \frac{\Ktput + 4\Kcp + 2}{\blkratetime\slotduration(1-\delta)}.
        \IEEEeqnarraynumspace
    \end{IEEEeqnarray}
    Therefore, $\mathsf{tx}\in\LOG{p}{t'}$ for all $t'\geq t+\Tlive$.
\end{proof}

\begin{proof}[Proof of \cref{thm:safety-and-liveness-pow}]
    From \cref{lem:pow-security-condition}, assuming
    \begin{IEEEeqnarray}{C}
        \label{eq:C-equation-pow}
        \frac{\goodsepbw}{16} \frac{(2\probGood-1)^2}{\probGood} (1-\delta) > 1,
    \end{IEEEeqnarray}
    and from a union bound over horizon $\Khorizon = \poly(\kappa)$
    on the result of \cref{prop:index-time-bridge-pow},
    the conditions required for \cref{lem:safety-and-liveness-comb-pow} are satisfied with overwhelming probability.
    Then \cref{lem:safety-and-liveness-comb-pow} guarantees safety and liveness with $\confDepth = 2\Kcp = \Theta(\kappa^2)$ and $\Tlive = \frac{6\Kcp+2}{\blkratetime\slotduration(1-\delta)} = \Theta(\kappa^2)$.

    \Iindices are mapped to real time as $\TliveReal \triangleq \Tlive \slotduration$.
    Further, the event $\{\slotduration t_{\Khorizon} > \frac{\Khorizon}{\blkratetime(1+\delta)} \}$ also occurs except with negligible probability (\cref{prop:index-time-bridge-pow}), and therefore the time horizon $\Khorizon$ \iindices corresponds to at least a time horizon of $\Thorizon \triangleq \frac{\Khorizon}{\blkratetime(1+\delta)}$ real-time units.

    Finally, we take the limit $\slotduration \to 0$.
    With the relations $\blkratetime = \blkrateslot/\slotduration$, $\BWEquation$, and 
    $\probPP = \probPPFormula$,
    \begin{IEEEeqnarray}{C}
        \label{eq:prob-good-equation-pow}
        \probGood = \probGoodFormula \to (1-\beta)e^{-\blkratetime\left(\DeltaHeader + \goodsepbw/\bwtime\right)}.
        \IEEEeqnarraynumspace
    \end{IEEEeqnarray}

    Moreover, the value of $\tput$ from \cref{lem:safety-and-liveness-comb-pow} converges to $(1+\delta)(\frac{1}{2}-\beta(1+\delta))\blkratetime$ in real-time units.
    
    Note that $\goodsepbw$ is an analysis parameter whose value is arbitrary.
    To find the maximum block production rate $\blkratetime$ that the protocol can achieve, we optimize over $\goodsepbw$.
    To find the maximum achievable $\blkratetime$,
    we can take $\delta \to 0$ as we can increase the latency through increasing $\Kcp$ to still satisfy the error bounds.
    Maximizing over $\goodsepbw$ from \eqref{C-equation-pow,prob-good-equation-pow} gives the resulting threshold.
\end{proof}

\section{Proof-of-Stake Model Details}
\label{sec:pos-model-details}

\begin{algorithm}[tb]%
    \caption{%
        Idealized functionality $\FtreePoS$:
        block production lottery and header chain structure
        for PoS
        (helper functions: \cref{sec:algos-reference-helperfunctions})%
    }%
    \label{alg:hdrtree-pos}%
    \begin{algorithmic}[1]%
        \scriptsize%

        \LineComment{$\Call{init}{\genesisHeaderChain, \mathsf{numNodes}}$ and $\Call{verify}{\Chain}$ same as in \appendixRef{\cref{alg:hdrtree-pow}}}
        
        \On{$\Call{isLeader}{P,t}$ \textbf{from} $\mathcal A$ (only for adversarial node $P$) or $\FtreePoS$}
                \label{loc:hdrtree-pos-leader}
            \LineComment{Abstraction of proof-of-stake lottery: each node is chosen leader in each slot with probability $\blkrateslot/N$ independently of other nodes and slots}
            \If{$\Lottery[P,t] = \bot$}
                    \label{loc:hdrtree-pos-rhodefn}
                \State $\Lottery[P,t] \overset{\$}{\gets} (\TRUE \text{ with probability $\blkrateslot/N$, else } \FALSE)$
                    \label{loc:hdrtree-pos-blockproductionlottery}
            \EndIf
            \State \Return{$\Lottery[P,t]$}
        \EndOn

        \On{$\Call{extend}{t', \Chain, \txs}$ \textbf{from} $\mathcal A$ (only for adversarial node $P$) or $\FtreePoS$}
                \label{loc:hdrtree-pos-binding}
            \LineComment{New header chain is valid if parent chain $\Chain$ is valid, $P$ is leader for slot $t'$, and $t'$ is later than the tip of $\Chain$ and is not in the future}
            \If{$(\Chain\in\Tree) \land \FtreePoS.\Call{isLeader}{P,t'} \land (\Chain.\mathsf{time} < t' \leq t)$} \label{loc:hdrtree-pos-check-lottery}
                \LineComment{Produce a new block header extending $\Chain$}
                \State $\Chain' \gets \Chain\|\operatorname{newBlock}(\mathsf{time}\colon t', \mathsf{node}\colon P, \mathsf{txsHash}\colon \operatorname{Hash}(\txs))$ 
                \State $\Tree \gets \Tree \cup \{\Chain'\}$
                    \Comment{Register new header chain in header tree}
                \State \Return{$\Chain'$}
            \EndIf
            \State \Return{$\bot$}
        \EndOn

        \On{$\Call{extend}{\Chain, \txs}$ \textbf{from} node $P$ (possibly adversarial) \textbf{at} \timeslot $t$}
            \State \Return{$\FtreePoS.\Call{extend}{t, \Chain, \txs}$}
        \EndOn

    \end{algorithmic}%
\end{algorithm}%

Details of the PoS-based block production and verification are abstracted through an idealized functionality $\FtreePoS$ whose pseudocode is provided in \appendixRef{\cref{alg:hdrtree-pow}} (\cf~\appendixRef{\cref{alg:hdrtree-pow}}, \cite[Fig.~2]{sleepy}, \cite[Alg.~3]{bwlimitedposlc}).

As in PoW, each node can make one block production attempt per \timeslot that will be successful with probability $\blkrateslot/N$, independently of other nodes and \timeslots
(\myalgref{alg:hdrtree-pos}{loc:hdrtree-pos-blockproductionlottery})%
\footnote{There may be multiple blocks in one \timeslot, as in
the Ouroboros~\cite{kiayias2017ouroboros,david2018ouroboros,badertscher2018ouroboros} and Sleepy Consensus~\cite{sleepy,snowwhite} protocols.}%
, modeling uniform stake.
In PoS, however, (even past) block production opportunities can be `reused' to produce multiple blocks with different parents and/or content, \ie, to equivocate
(\myalgref{alg:hdrtree-pos}{loc:hdrtree-pos-leader,loc:hdrtree-pos-binding}).
\fi

\end{document}